\pdfoutput=1
\documentclass[12pt]{article}
\usepackage{amsmath}
\usepackage{graphicx}
\usepackage{natbib}
\usepackage{url} 

\RequirePackage{amsthm,amsmath,amsfonts,amssymb}
\RequirePackage{graphicx}

\bibliographystyle{abbrvnat}
\setcitestyle{authoryear,open={(},close={)}}

\usepackage{tikz}
\usetikzlibrary{positioning}
\usepackage{cleveref}
\usepackage{caption}
\usepackage{subcaption}
\usepackage{algorithm}
\usepackage{algorithmic}
\usepackage{mathtools}

\usepackage{custom}

\newcommand{\blind}{0}

\addtolength{\oddsidemargin}{-.5in}%
\addtolength{\evensidemargin}{-.5in}%
\addtolength{\textwidth}{1in}%
\addtolength{\textheight}{1.3in}%
\addtolength{\topmargin}{-.8in}%

\begin{document}

\def\spacingset#1{\renewcommand{\baselinestretch}%
{#1}\small\normalsize} \spacingset{1}

\if0\blind
{
  \title{\bf Several Remarks on the Numerical Integrator in Lagrangian Monte Carlo}
  \author{James A. Brofos\thanks{
    The authors gratefully acknowledge the Yale Center for Research Computing for use of the research computing infrastructure. We thank Marcus A. Brubaker for helpful discussions.
This material is based upon work supported by the National Science Foundation Graduate Research Fellowship under Grant No. 1752134. Any opinion, findings, and conclusions or recommendations expressed in this material are those of the authors(s) and do not necessarily reflect the views of the National Science Foundation. The work is also supported in part by NIH/NIGMS 1R01GM136780-01 and AFSOR FA9550-21-1-0317.}\hspace{.2cm}\\
    Department of Statistics and Data Science, Yale University \\
    and \\
    Roy R. Lederman \\
    Department of Statistics and Data Science, Yale University }
  \maketitle
} \fi

\if1\blind
{
  \bigskip
  \bigskip
  \bigskip
  \begin{center}
    {\LARGE\bf Title}
\end{center}
  \medskip
} \fi

\bigskip
\begin{abstract}
  Riemannian manifold Hamiltonian Monte Carlo (RMHMC) is a powerful method of Bayesian inference that exploits underlying geometric information of the posterior distribution in order to efficiently traverse the parameter space. However, the form of the Hamiltonian necessitates complicated numerical integrators, such as the generalized leapfrog method, that preserve the detailed balance condition. The distinguishing feature of these numerical integrators is that they involve solutions to implicitly defined equations. Lagrangian Monte Carlo (LMC) proposes to eliminate the fixed point iterations by transitioning from the Hamiltonian formalism to Lagrangian dynamics, wherein a fully explicit integrator is available. This work makes several contributions regarding the numerical integrator used in LMC. First, it has been claimed in the literature that the integrator is only first-order accurate for the Lagrangian equations of motion; to the contrary, we show that the LMC integrator enjoys second order accuracy. Second, the current conception of LMC requires four determinant computations in every step in order to maintain detailed balance; we propose a simple modification to the integration procedure in LMC in order to reduce the number of determinant computations from four to two while still retaining a fully explicit numerical integration scheme. Third, we demonstrate that the LMC integrator enjoys a certain robustness to human error that is not shared with the generalized leapfrog integrator, which can invalidate detailed balance in the latter case. We discuss these contributions within the context of several benchmark Bayesian inference tasks.
\end{abstract}

\noindent%
{\it Keywords:}  Hamiltonian Markov Chain Monte Carlo Bayesian Posterior
\vfill

\newpage
\spacingset{1.5} 

\section{Introduction}

Let $\mathcal{L} : \R^m\to\R$ be the log-density of a smooth distribution, known up to an additive constant. A critical problem in Bayesian inference is the design of procedures that generate samples from a target density $\pi(q) \propto \exp(\mathcal{L}(q))$. Hamiltonian Monte Carlo (HMC) \citep{Duane1987216,betancourt2017conceptual} is a Markov chain Monte Carlo (MCMC) method for sampling from arbitrary differentiable probability distributions based on numerical solutions to Hamilton's equations of motion. Unlike random walk Metropolis or the Metropolis-adjusted Langevin algorithm, HMC can propose distant states, thereby dramatically decreasing the autocorrelation between states of the chain and increasing sampling efficiency.

The efficiency of HMC can be further improved by incorporating geometric concepts into the proposal mechanism. Information geometry \citep{10.5555/3019383} provides a framework for representing sets of probability densities as a Riemannian manifold whose metric can be chosen as the Fisher information matrix. When $\mathcal{L}$ decomposes into the sum of a log-likelihood and a log-prior terms, \citet{rmhmc} proposed Riemannian manifold Hamiltonian Monte Carlo (RMHMC), which sought to precondition Hamiltonian dynamics with the inverse of the sum of Fisher information of the log-likelihood and the negative Hessian of the log-prior. In general, both the Fisher information and the Hessian of the log-prior depend on the parameter, denoted here by $q$, to be sampled; this dependency produces a complicated Hamiltonian. In order to maintain the detailed balance condition of the RMHMC Markov chain, elaborate numerical integrators such as the generalized leapfrog or implicit midpoint methods must be used \citep{pourzanjani2019implicit,rmhmc,pmlr-v139-brofos21a}. These numerical integrators involve implicitly defined updates, wherein updates are defined as the solution to a fixed point equation, which are typically resolved to a prescribed convergence tolerance by fixed point iteration.

\citet{lan2015} introduced Lagrangian Monte Carlo (LMC) as an alternative to RMHMC. Unlike RMHMC, which is based on numerical solutions to Hamiltonian mechanics, LMC is instead inspired by the Lagrangian formalism of classical mechanics. Although Lagrangian and Hamiltonian mechanics are formally equivalent from a physical perspective, methods of numerical integration assume a simplified form when expressed as Lagrangian mechanics. Indeed, the simplification is so significant that \citet{lan2015} was able to devise a fully explicit numerical integrator for which detailed balance could be maintained when used as a proposal operator in MCMC. However, the elimination of fixed point iterations was replaced by the requirement that four Jacobian determinants be computed in a single step of the explicit integrator. Moreover, it was stated that the explicit method had only first-order accuracy as a numerical integrator, supposedly producing less accurate solutions than the generalized leapfrog method, whose accuracy is second-order.

The purpose of the present work is three-fold. First, we propose a simple mechanism by which to reduce the number of Jacobian determinant evaluations from four to two in a single step of the integrator. This is achieved by inverting the sequence in which position and velocity are integrated. Second, we will clarify that the order of the explicit integrator, with or without inversion, actually has second-order accuracy, the same as those integrators which are commonly used in RMHMC. Third, we discuss how LMC enjoys a greater robustness to human error than RMHMC. The outline of the remainder of this paper is as follows. In \cref{lagrangian-remarks:sec:preliminaries} we discuss preliminary material on Hamiltonian and Lagrangian mechanics, numerical integrators, and MCMC methods based off of these physical models. In \cref{lagrangian-remarks:sec:related-work} we review some related work in the literature on integration methods for RMHMC. In \cref{lagrangian-remarks:sec:analytical-apparatus} we proceed to our analysis of the LMC integrator, where we describe how one may reduce the number of Jacobian determinant computations while maintaining a fully explicit integration method and give a proof that the integrator has second-order accuracy. In \cref{lagrangian-remarks:sec:experimentation} we turn to the evaluation of the proposed modifications to LMC; we evaluate performance on several benchmark Bayesian inference tasks and give numerical evidence to support the claim of second-order accuracy.

\section{Preliminaries}\label{lagrangian-remarks:sec:preliminaries}

In this section we review the necessary background for Hamiltonian and Lagrangian Monte Carlo. \Cref{lagrangian-remarks:subsec:essentials} reviews the most important concepts from RMHMC and LMC, giving perspective on the varieties of integrators and how they are employed in a Markov chain Monte Carlo procedure. We then proceed in \cref{lagrangian-remarks:subsec:hamiltonian-lagrangian-mechanics} to recall the Hamiltonian and Lagrangian formalisms from physics. \Cref{lagrangian-remarks:subsec:numerical-integrators} then treats the matter of numerical integration of the Hamiltonian and Lagrangian mechanics. In \cref{lagrangian-remarks:subsec:hamiltonian-lagrangian-monte-carlo} we then review Hamiltonian and Lagrangian Monte Carlo using the framework of involutive Monte Carlo, wherein we give special attention to the Jacobian determinant computations that are necessitated in the Lagrangian construction.

\subsection{Notation}

In the context of classical mechanics, we denote by $q$ the position variable, $v$ the velocity variable, $p$ the momentum variable, and $a$ the acceleration variable, all elements of $\R^m$. We adopt the notation $q^{(k)}$ to refer to the $k$-th element of $q$, with similar conventions being employed for $v$ and $p$. Denoting $z=(q, p)$; we call $z$ a point in phase space. We denote by $\mathrm{PD}(m)$ the set of $m\times m$ positive definite matrices. Given a map $\Phi : \R^m\to\R^m$, we use the notation $\Phi^k$ to mean the composition $\Phi\circ\cdots\circ\Phi$ ($k$ times). We write $\mathrm{Id}_m$ to denote the identity matrix of size $m\times m$. We denote the Borel $\sigma$-algebra on $\R^m$ by $\mathfrak{B}(\R^m)$.

\subsection{Lagrangian Monte Carlo: The Essentials}\label{lagrangian-remarks:subsec:essentials}

Lagrangian Monte Carlo (LMC) is a geometric method of Bayesian inference that seeks to incorporate second-order information about the posterior in order to produce effective proposals, similar to Riemannian manifold Hamiltonian Monte Carlo (RMHMC). In  this section we review the fundamentals of these methods.
For applications in HMC, an important class of Hamiltonians have the following form:
\begin{definition}\label{lagrangian-remarks:def:riemannian-hamiltonian}
The {\it Riemannian Hamiltonian}
\begin{align}
  \label{lagrangian-remarks:eq:hamiltonian-form} H(q, p) = U(q) + K(q, p)
\end{align}
where $U : \R^m\to\R$ is called the {\it potential energy} function and $K:\R^m\times\R^m\to\R$ is the {\it kinetic energy} function, having the form $K(q, p) = \frac{1}{2} p^\top \mathbf{G}^{-1}(q) p$, where $\mathbf{G} :\R^m\to \mathrm{PD}(m)$ where $\mathbf{G}$ is called the {\it metric}.
\end{definition}
The Riemannian metric defines Christoffel symbols which convey information about the curvature and shape of $\R^m$ imbued with the metric $\mathbf{G}$.
\begin{definition}\label{lagrangian-remarks:def:christoffel-symbols}
  The Christoffel symbols are the $m^3$ functions defined by
  \begin{align}
    \label{lagrangian-remarks:eq:christoffel} \Gamma^k_{ij}(q) = \frac{1}{2} \sum_{l=1}^m \mathbf{G}^{-1}_{kl}(q) \paren{\frac{\partial}{\partial q^{(i)}} \mathbf{G}_{lj}(q) + \frac{\partial}{\partial q^{(j)}} \mathbf{G}_{li}(q) - \frac{\partial}{\partial q^{(l)}} \mathbf{G}_{ij}(q)}.
  \end{align}
\end{definition}
We note that the Christoffel symbols are symmetric in their lower indices (i.e. $\Gamma^k_{ij}(q) = \Gamma^k_{ji}(q)$). The Christoffel symbols play a prominent role in the development of Lagrangian mechanics. As a notational convenience, we will define the matrix-valued function $\Omega : \R\times\R^m\times\R^m\to \R^{m\times m}$ whose $(i, j)$-th entry is
\begin{align}
    \label{lagrangian-remarks:eq:omega}\Omega_{ij}(\epsilon, q, v) = \frac{\epsilon}{2} \sum_{k=1}^m \Gamma^i_{kj}(q) v^{(k)}.
\end{align}
The Riemannian Hamiltonian in \cref{lagrangian-remarks:def:riemannian-hamiltonian} produces equations of motion (see \cref{lagrangian-remarks:subsec:hamiltonian-lagrangian-mechanics} for details), which do not have closed-form solutions. This necessitates the use of numerical integrators.
We now review two integrators that form the basis of our evaluations: the generalized leapfrog integrator, which is a reversible, volume-preserving, and second-order accurate, and the Lagrangian leapfrog method of \citet{lan2015}.
\begin{definition}\label{lagrangian-remarks:def:generalized-leapfrog}
  The {\it generalized leapfrog integrator} for the Hamiltonian equations of motion in \cref{lagrangian-remarks:eq:hamiltonian-position,lagrangian-remarks:eq:hamiltonian-momentum} is a map $(q, p)\mapsto (\tilde{q}, \tilde{p})$ defined by,
  \begin{align}
      \label{lagrangian-remarks:eq:generalized-leapfrog-momentum-i} \breve{p}^{(k)} &= p^{(k)} - \frac{\epsilon}{2} \paren{-\frac{1}{2} \breve{p}^\top \mathbf{G}^{-1}(q) \paren{\frac{\partial \mathbf{G}}{\partial q^{(k)}}(q)} \mathbf{G}^{-1}(q) \breve{p} + \frac{\partial U}{\partial q^{(k)}}(q)} \\
      \label{lagrangian-remarks:eq:generalized-leapfrog-position} \tilde{q} &= q + \frac{\epsilon}{2} \paren{\mathbf{G}^{-1}(q)\breve{p} + \mathbf{G}^{-1}(\tilde{q})\breve{p}} \\
      \label{lagrangian-remarks:eq:generalized-leapfrog-momentum-ii} \tilde{p}^{(k)} &= \breve{p}^{(k)} - \frac{\epsilon}{2} \paren{-\frac{1}{2} \breve{p}^\top \mathbf{G}^{-1}(q) \paren{\frac{\partial \mathbf{G}}{\partial q^{(k)}}(q)} \mathbf{G}^{-1}(q) \breve{p} + \frac{\partial U}{\partial q^{(k)}}(\tilde{q})}.
  \end{align}
\end{definition}
Pseudo-code implementing the generalized leapfrog algorithm is given in \cref{lagrangian-remarks:alg:generalized-leapfrog} in \cref{app:algorithms}.
\begin{definition}\label{lagrangian-remarks:def:lagrangian-leapfrog}
  The {\it Lagrangian leapfrog integrator} for the Lagrangian equations of motion given in \cref{lagrangian-remarks:eq:lagrangian-acceleration} is a map $(q, v)\mapsto (\tilde{q}, \tilde{v})$ defined by, 
  \begin{align}
      \label{lagrangian-remarks:eq:lagrangian-velocity-i} \breve{v} &= \left[\mathrm{Id}_m + \Omega(\epsilon, q, v)\right]^{-1} \left[v - \frac{\epsilon}{2} \mathbf{G}^{-1}(q) \nabla U(q)\right] \\
      \label{lagrangian-remarks:eq:lagrangian-position-update} \tilde{q} &= q + \epsilon ~\breve{v} \\
      \label{lagrangian-remarks:eq:lagrangian-velocity-ii} \tilde{v} &= \left[\mathrm{Id}_m + \Omega(\epsilon, \tilde{q}, \breve{v})\right]^{-1} \left[\breve{v} - \frac{\epsilon}{2} \mathbf{G}^{-1}(\tilde{q}) \nabla U(\tilde{q})\right].
  \end{align}
\end{definition}
Pseudo-code implementing the Lagrangian leapfrog algorithm is given in \cref{lagrangian-remarks:alg:lagrangian-leapfrog} in \cref{app:algorithms}. Unlike the generalized leapfrog integrator (\cref{lagrangian-remarks:def:generalized-leapfrog}), which is a symplectic transformation and therefore necessarily volume-preserving, the Lagrangian leapfrog (\cref{lagrangian-remarks:def:lagrangian-leapfrog}) is not volume-preserving. Its Jacobian determinant is computed in \cref{lagrangian-remarks:eq:lagrangian-jacobian-determinant}. A thorough treatment of numerical integrators is provided in \cref{lagrangian-remarks:subsec:numerical-integrators} In the context of Monte Carlo, this means that Markov chains constructed from repeated applications of the Lagrangian leapfrog integrator will require a Jacobian determinant computation, whereas methods based on the generalized leapfrog will not (its Jacobian determinant is one). Such a Markov chain is the subject of the following example.
\begin{example}
Let $H:\R^m\times\R^m \to\R$ be as in \cref{lagrangian-remarks:def:riemannian-hamiltonian} and define a probability density $\pi(q, p)\propto \exp(-H(q, p))$. Let $q\in\R^m$ be given; a single Markov chain step is constructed as follows. Sample $p \sim \mathrm{Normal}(0, \mathbf{G}(q))$. Fix $k\in\mathbb{N}$. We consider RMHMC and LMC separately:
\begin{description}
\item[LMC] Let $\tilde{\Phi}_\epsilon$ denote the Lagrangian leapfrog (\cref{lagrangian-remarks:def:lagrangian-leapfrog}). Compute the proposal $(\tilde{q}, \tilde{v}) = \tilde{\Phi}(q, \mathbf{G}^{-1}(q) p)$ and set $\tilde{p} = \mathbf{G}(\tilde{q}) \tilde{v}$. Compute the Jacobian determinant $J$ of the map $(q, p)\mapsto (\tilde{q},\tilde{p})$ using \cref{lagrangian-remarks:eq:lagrangian-jacobian-determinant}.
\item[RMHMC] Let $\hat{\Phi}_\epsilon$ denote the generalized leapfrog integrator (\cref{lagrangian-remarks:def:generalized-leapfrog}). Compute the proposal $(\tilde{q}, \tilde{p}) = \hat{\Phi}(q, p)$ and set $J=1$.
\end{description}
Accept the proposal state $\tilde{q}$ with probability $\alpha((q, p), (\tilde{q},\tilde{p}), J) = \min\set{1, \frac{\pi(\tilde{q}, \tilde{p})}{\pi(q, p)} \cdot J}$; otherwise remain at the current state $q$.
\end{example}
A more rigorous treatment of the LMC and RMHMC Markov chains is given in \cref{lagrangian-remarks:subsec:hamiltonian-lagrangian-monte-carlo} using the framework of diffeomorphism Monte Carlo. Pseudo-code is provided in \cref{lagrangian-remarks:alg:hmc} in \cref{app:algorithms}.

\section{Related Work}\label{lagrangian-remarks:sec:related-work}

The focus of the present work is to investigate the numerical methods of integration that were proposed in \citet{lan2015}. \Citet{pmlr-v139-brofos21a} gave an evaluation of the implicit midpoint integrator for RMHMC with special attention paid to the errors in reversibility and volume preservation that were produced by the implicit midpoint algorithm compared to the generalized leapfrog method, as well as the energy conservation properties enjoyed by the implicit midpoint integrator. Other mechanisms of explicit integration have been considered with applications to RMHMC foremost in mind, such as \citet{cobb2019introducing} which produced a reversible, volume-preserving numerical method in an expanded phase-space. Due to the expansion of phase-space, this integrator cannot be used to produce a Markov chain satisfying detailed balance. The work of \citet{NIPS2014_a87ff679} explored alternating blockwise Metropolis-within-Gibbs-like strategies with Riemannian metrics chosen to produce separable Hamiltonians within each block; each block can then be integrated using the standard leapfrog integrator. LMC has previously been criticized in the literature for having unfavorable performance in high dimensions; this failure of LMC in relation to RMHMC is discussed in \citet{geometric-foundations} and we will see evidence of this degradation in \cref{lagrangian-remarks:subsec:student-t}.

\section{Analytical Apparatus}\label{lagrangian-remarks:sec:analytical-apparatus}

In this section we describe an algorithmic recommendation for the numerical integrator used in Lagrangian Monte Carlo and we clarify certain statements around the order of this numerical method. Specifically, we show how to reduce the number of determinant computations from four to two, and that the integrator of Lagrangian dynamics has third-order local error, comparable to the error of the (generalized) leapfrog method used in HMC.

\subsection{Inversion of the Integration Sequence}

A disadvantage of the Lagrangian integrator is that it involves four Jacobian determinant computations at each step of the integrator. In general, computing the Jacobian determinant of an $m\times m$ matrix incurs a computational cost like $\mathcal{O}(m^3)$. Therefore, it seems worthwhile to investigate mechanisms by which to reduce the number of these calculations that are required. \Citet{lan2015} proposed one method that has only two Jacobian determinant computations, but necessitates the return to implicit methods of integration. 
To retain the advantages of explicit integration, we propose a method that computes two Jacobian determinants in every step and consists only of {\it explicit} integration steps.
To achieve this, we propose a conceptually simple procedure: invert the sequence in which position and velocity are updated in the Lagrangian integrator so that position is updated twice at the beginning and end of each step and velocity is updated once in between each update to position. Formally:
\begin{definition}\label{lagrangian-remarks:def:inverted-lagrangian-leapfrog}
  The {\it inverted Lagrangian leapfrog integrator} for the Lagrangian equations of motion given in \cref{lagrangian-remarks:eq:lagrangian-acceleration} is a map $(q, v)\mapsto (\tilde{q}, \tilde{v})$ defined by, 
  \begin{align}
       \label{lagrangian-remarks:eq:inverted-lagrangian-position-i} \breve{q} &= q + \frac{\epsilon}{2} ~v \\
      \label{lagrangian-remarks:eq:inverted-lagrangian-velocity} \tilde{v} &= \left[\mathrm{Id}_m + 2\Omega(\epsilon, \breve{q}, v)\right]^{-1} \left[v - \epsilon \mathbf{G}^{-1}(\breve{q}) \nabla U(\breve{q})\right] \\
      \label{lagrangian-remarks:eq:inverted-lagrangian-position-ii} \tilde{q} &= \breve{q} + \frac{\epsilon}{2} \tilde{v}.
  \end{align}
\end{definition}
Pseudo-code for this procedure is provided in \cref{lagrangian-remarks:alg:inverted-lagrangian-leapfrog} in \cref{app:algorithms}. The basic modification requires only two Jacobian determinant computations per step since the two updates to position in \cref{lagrangian-remarks:eq:inverted-lagrangian-position-i,lagrangian-remarks:eq:inverted-lagrangian-position-ii}, being shear transformations, are volume-preserving in $(q, v)$-space \citep{modi2021delayed}. The required change in volume due to the mapping $(q, p)\mapsto (\tilde{q}, \tilde{p})$ is readily obtained as
\begin{align}
  \begin{split}
    \label{lagrangian-remarks:eq:inverted-lagrangian-jacobian-determinant} &\abs{\mathrm{det}\paren{\frac{\partial (\tilde{q}, \tilde{p})}{\partial (q, p)}}} = \abs{\frac{\mathrm{det}(\mathbf{G}(\tilde{q}))}{\mathrm{det}(\mathbf{G}(q))} \frac{\mathrm{det}(\mathrm{Id}_m - \Omega(\epsilon, \breve{q}, \tilde{v}))}{\mathrm{det}(\mathrm{Id}_m + \Omega(\epsilon, \breve{q}, v))}}.
  \end{split}
\end{align}
Although this Jacobian determinant differs from that produced by \cref{lagrangian-remarks:alg:lagrangian-leapfrog} in \cref{app:algorithms}, we still have the following important property.
\begin{lemma}\label{lagrangian-remarks:lem:inverted-lagrangian-properties}
  The inverted Lagrangian leapfrog integrator is self-adjoint and has at least first-order local error.
\end{lemma}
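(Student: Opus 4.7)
The plan is to treat the two claims separately: self-adjointness by direct algebraic inversion, and first-order accuracy by Taylor expansion of the integrator compared to the Lagrangian equations of motion.

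For self-adjointness, I would write $\Phi_\epsilon$ for the map $(q,v)\mapsto(\tilde q,\tilde v)$ defined by \cref{lagrangian-remarks:eq:inverted-lagrangian-position-i,lagrangian-remarks:eq:inverted-lagrangian-velocity,lagrangian-remarks:eq:inverted-lagrangian-position-ii}, and verify that $\Phi_{-\epsilon}\circ\Phi_\epsilon = \mathrm{Id}$. Applying $\Phi_{-\epsilon}$ to $(\tilde q,\tilde v)$, the first half-step gives a new midpoint $\breve q' = \tilde q - \tfrac{\epsilon}{2}\tilde v$, which by \cref{lagrangian-remarks:eq:inverted-lagrangian-position-ii} equals $\breve q$. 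The crucial step is to show that the time-reversed velocity update returns $v$. For this I would exploit the symmetry of the Christoffel symbols in their lower indices, which from \cref{lagrangian-remarks:eq:omega} implies the bilinear identity $\Omega(\epsilon,q,v)w = \Omega(\epsilon,q,w)v$ for any vectors $v,w\in\R^m$. Rewriting \cref{lagrangian-remarks:eq:inverted-lagrangian-velocity} as $(\mathrm{Id}_m+2\Omega(\epsilon,\breve q,v))\tilde v = v - \epsilon\,\mathbf{G}^{-1}(\breve q)\nabla U(\breve q)$ and using this identity to move the $\Omega$-term across gives
\begin{equation*}
\tilde v + \epsilon\,\mathbf{G}^{-1}(\breve q)\nabla U(\breve q) = (\mathrm{Id}_m - 2\Omega(\epsilon,\breve q,\tilde v))\,v,
\end{equation*}
so $v = [\mathrm{Id}_m - 2\Omega(\epsilon,\breve q,\tilde v)]^{-1}[\tilde v + \epsilon\,\mathbf{G}^{-1}(\breve q)\nabla U(\breve q)]$, which is exactly the velocity produced by \cref{lagrangian-remarks:eq:inverted-lagrangian-velocity} applied with stepsize $-\epsilon$ (since $\Omega$ is linear in $\epsilon$). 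The final half-step then recovers $q$. This establishes $\Phi_{-\epsilon}=\Phi_\epsilon^{-1}$.

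For first-order local error I would Taylor expand in $\epsilon$. Since $\Omega(\epsilon,\cdot,\cdot) = O(\epsilon)$, a Neumann expansion gives $[\mathrm{Id}_m+2\Omega(\epsilon,\breve q,v)]^{-1} = \mathrm{Id}_m - 2\Omega(\epsilon,q,v) + O(\epsilon^2)$ (replacing $\breve q$ by $q$ incurs only $O(\epsilon^2)$). Substituting and collecting terms,
\begin{equation*}
\tilde v = v - 2\Omega(\epsilon,q,v)\,v - \epsilon\,\mathbf{G}^{-1}(q)\nabla U(q) + O(\epsilon^2),
\end{equation*}
and by \cref{lagrangian-remarks:eq:omega} the component form of $2\Omega(\epsilon,q,v)v$ is $\epsilon\sum_{j,k}\Gamma^i_{jk}(q)v^{(j)}v^{(k)}$. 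Thus $\tilde v^{(i)} = v^{(i)} + \epsilon a^{(i)} + O(\epsilon^2)$ with $a^{(i)} = -\sum_{j,k}\Gamma^i_{jk}(q)v^{(j)}v^{(k)} - (\mathbf{G}^{-1}(q)\nabla U(q))^{(i)}$, which matches the Lagrangian equations of motion in \cref{lagrangian-remarks:eq:lagrangian-acceleration}. Likewise $\tilde q = \breve q + \tfrac{\epsilon}{2}\tilde v = q + \epsilon v + O(\epsilon^2)$, which agrees with the exact flow to first order in $\epsilon$.

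The main obstacle is the self-adjointness computation; everything hinges on recognizing that the symmetry $\Gamma^i_{kj}=\Gamma^i_{jk}$ upgrades to the operator identity $\Omega(\epsilon,q,v)w=\Omega(\epsilon,q,w)v$, without which the algebraic rearrangement swapping the roles of $v$ and $\tilde v$ in the $\Omega$ argument would fail. The accuracy claim, by contrast, reduces to routine expansion once the structure of $\Omega$ is exposed, and the second-order statement advertised elsewhere in the paper is deferred to a separate argument.
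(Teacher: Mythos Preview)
Your proposal is correct and follows essentially the same approach as the paper's own proof: self-adjointness is verified by applying $\Phi_{-\epsilon}$ to $(\tilde q,\tilde v)$ and using the bilinear identity $\Omega(\epsilon,q,v)w=\Omega(\epsilon,q,w)v$ (the paper's \cref{lagrangian-remarks:prop:omega-properties}) to invert the velocity step, and first-order accuracy is obtained by the same Neumann expansion and Taylor argument you outline. The paper's treatment is slightly less explicit than yours in matching the expanded $\tilde v$ against \cref{lagrangian-remarks:eq:lagrangian-acceleration}, but the substance is identical.
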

A proof is given in \cref{lagrangian-remarks:app:proofs-concerning-numerical-order}.
\begin{corollary}
The inverted Lagrangian leapfrog integrator has at least second-order local error.
\end{corollary}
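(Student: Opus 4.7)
The plan is to invoke the classical fact from geometric numerical integration (see e.g. Hairer--Lubich--Wanner) that any self-adjoint (symmetric) one-step method automatically has even order. Since the previous lemma gives us both self-adjointness and at least first-order local error, this fact immediately boosts the order from $1$ to $2$.

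In more detail, I would write the proof as follows. Let $\Phi_\epsilon$ denote the inverted Lagrangian leapfrog map and let $\varphi_\epsilon$ denote the exact flow of the Lagrangian equations of motion in \cref{lagrangian-remarks:eq:lagrangian-acceleration}. Expand the local error as a formal power series in $\epsilon$,
\begin{align}
    \Phi_\epsilon(q,v) - \varphi_\epsilon(q,v) = \sum_{j \geq p+1} C_j(q,v)\, \epsilon^j,
\end{align}
where $p$ is the order of the method; \cref{lagrangian-remarks:lem:inverted-lagrangian-properties} tells us $p \geq 1$, so $C_1 \equiv 0$. Self-adjointness means $\Phi_{-\epsilon} \circ \Phi_\epsilon = \mathrm{Id}$, together with the analogous identity $\varphi_{-\epsilon} \circ \varphi_\epsilon = \mathrm{Id}$ for the exact flow. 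Substituting the expansion into these two relations and comparing coefficients, one finds that the leading error coefficient $C_{p+1}$ must satisfy $C_{p+1}(q,v) = (-1)^{p+1} C_{p+1}(q,v)$. Hence $C_{p+1}$ vanishes whenever $p+1$ is odd, i.e.\ whenever $p$ is even. Equivalently, the order $p$ of a self-adjoint method cannot be odd, so $p \geq 1$ forces $p \geq 2$.

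The main (and essentially only) obstacle is to make the symmetry argument for the vanishing of the odd-order coefficient rigorous in the current setting: one needs smoothness of $\Phi_\epsilon$ in $\epsilon$ near $0$, which follows from smoothness of $\mathbf{G}$, $\mathbf{G}^{-1}$, $\nabla U$, and the matrix inverse $[\mathrm{Id}_m + 2\Omega(\epsilon,\breve q,v)]^{-1}$ (well-defined for all sufficiently small $\epsilon$ since $\Omega(0,\cdot,\cdot) = 0$). Once this smoothness is in hand, the Taylor expansion above is justified and the coefficient comparison is a purely algebraic consequence of $\Phi_{-\epsilon} \circ \Phi_\epsilon = \mathrm{Id}$ from \cref{lagrangian-remarks:lem:inverted-lagrangian-properties}. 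I would finish by stating that, combined with first-order accuracy from the lemma, this gives at least second-order local error, proving the corollary.
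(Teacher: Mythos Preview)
Your proposal is correct and takes essentially the same approach as the paper: the paper's proof is a one-line invocation of \cref{lagrangian-remarks:lem:inverted-lagrangian-properties} (self-adjointness plus first-order accuracy), implicitly relying on \cref{lagrangian-remarks:thm:self-adjoint-even-order} to force the order to be even and hence at least two. Your write-up goes further by sketching the standard coefficient-comparison proof of the even-order theorem itself, which is more detail than the paper supplies but not a different route.
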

\begin{proof}
This follows as an immediate corollary of \cref{lagrangian-remarks:lem:inverted-lagrangian-properties}
\end{proof}
\begin{definition}\label{lagrangian-remarks:def:ilmc}
  Let $\Phi_\epsilon$ be the inverted Lagrangian leapfrog integrator with step-size $\epsilon\in\R$ (\cref{lagrangian-remarks:def:inverted-lagrangian-leapfrog}). Let $k\in\mathbb{N}$ be the number of integration steps. The {\it inverted Lagrangian Monte Carlo (ILMC)} is an instance involutive Monte Carlo (\cref{lagrangian-remarks:def:involutive-monte-carlo}) with involution $\mathbf{F} \circ \Phi_{\epsilon}^k$ where $\mathbf{F}$ is the momentum flip operator (\cref{lagrangian-remarks:def:momentum-flip}).
\end{definition}
Let us denote by $\hat{\Phi}_\epsilon$ the Lagrangian leapfrog integrator (\cref{lagrangian-remarks:def:lagrangian-leapfrog}) and $\check{\Phi}_{\epsilon}$ the inverted Lagrangian leapfrog (\cref{lagrangian-remarks:def:inverted-lagrangian-leapfrog}). Let $\Phi_\epsilon$ be the exact time $\epsilon$ solution of \cref{lagrangian-remarks:eq:lagrangian-acceleration}. Because both methods are second-order accurate, it follows that
\begin{align}
    \Vert \hat{\Phi}_\epsilon - \check{\Phi}_\epsilon \Vert &\leq \Vert \hat{\Phi}_\epsilon - \Phi_\epsilon \Vert + \Vert \check{\Phi}_\epsilon - \Phi_\epsilon \Vert \\
    &= \mathcal{O}(\epsilon^3)
\end{align}
Intuitively, in the limit of small step-sizes, the difference in proposals generated by the Lagrangian leapfrog and the inverted Lagrangian leapfrog will be minimal, but ILMC (\cref{lagrangian-remarks:def:ilmc}) involves half the number of Jacobian determinant computations compared to LMC (\cref{lagrangian-remarks:def:lmc}) and should therefore be preferred. On the other hand, for large step-sizes, the situation is less clear.

\subsection{Aversions to Inverting the Integration Sequence}

\begin{figure}
  \centering
  \includegraphics[width=0.49\textwidth]{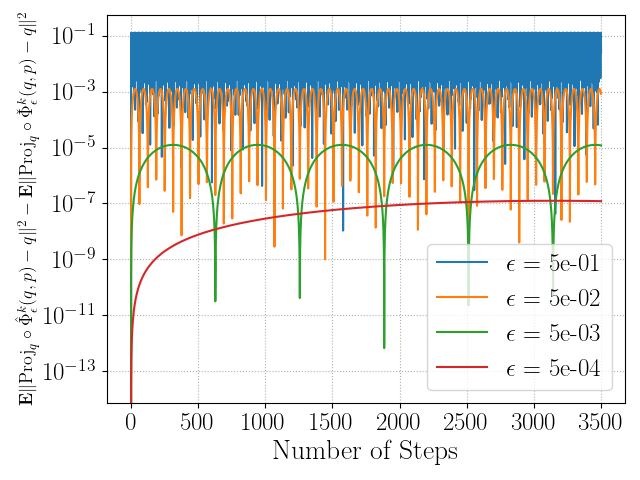}
  \caption{Visualization of the difference in expected distance when integrating a trajectory of a separable, quadratic Hamiltonian for a varying number of steps and integration step-sizes. This difference is always non-negative, indicating that one expects a greater distance between position variables when integrating with the leapfrog, rather than inverted leapfrog, algorithm.}
  \label{lagrangian-remarks:fig:position-difference}
\end{figure}

In the case when $\mathbf{G} = \mathrm{Id}_m$, the Lagrangian integrator devolve into the standard leapfrog integrator (\cref{lagrangian-remarks:def:standard-leapfrog}) that is ubiquitous in HMC. There are good reasons why implementations of HMC integrate in the order of an initial half-step in momentum, a full step in position, and then a second half-step of momentum. This is described visually by \citet{bou-rabee_sanz-serna_2018} for the Hamiltonian $H(q, p) = q^2 / 2 + p^2 / 2$, who make the argument that the leapfrog integrator produces high acceptance probabilities along the $q$-axis, which are desirable. By contrast, inverted leapfrog (\cref{lagrangian-remarks:def:inverted-leapfrog}) produces large acceptance probabilities along the $p$-axis, which are not valuable in HMC. 
Here we wish to expand on this example to consider what happens when HMC Markov chains based on the standard leapfrog and inverted leapfrog are initialized at stationarity. Based on the symmetric roles of $q$ and $p$ in the Hamiltonian, it is tempting to conclude that these Markov chains would exhibit similar performance profiles; surprisingly, this is not the case as shown in the following example.
\begin{example}\label{lagrangian-remarks:ex:inverted-order-inferior}
  Consider a Hamiltonian of the form $H(q, p) = \omega^2 q^2 / 2 + p^2 / 2$. This Hamiltonian corresponds to the distributions $q\sim\mathrm{Normal}(0, 1/\omega^2)$ and $p\sim \mathrm{Normal}(0, 1)$. Let $\hat{\Phi}_\epsilon : \R^m\times\R^m\to \R^m\times\R^m$ and $\check{\Phi}_\epsilon : \R^m\times\R^m\to\R^m\times\R^m$ be the single step leapfrog (\cref{lagrangian-remarks:def:standard-leapfrog}) and inverted leapfrog (\cref{lagrangian-remarks:def:inverted-leapfrog}) methods with step-size $\epsilon$, respectively. Let $\mathrm{Proj}_q(q, p) = q$ be the projection onto the $q$-variables. Then,
  \begin{align}
    \mathbb{E} \left[(\mathrm{Proj}_q\circ\hat{\Phi}_\epsilon(q, p) - q)^2\right] &= \frac{\epsilon^4\omega^2}{4} + \epsilon^2 \\
    \mathbb{E} \left[(\mathrm{Proj}_q\circ\check{\Phi}_\epsilon(q, p) - q)^2\right] &= \frac{\epsilon^4\omega^2}{4} + \epsilon^2 \paren{1 - \frac{\epsilon^2\omega^2}{4}}.
  \end{align}
  Both the leapfrog and inverted leapfrog methods are only numerically stable when $\epsilon^2 \omega^2 < 4$ \citep{leimkuhler_reich_2005}. Hence, $\mathbb{E} \left[(\mathrm{Proj}_q\circ\check{\Phi}_\epsilon(q, p) - q)^2\right] < \mathbb{E} \left[(\mathrm{Proj}_q\circ\hat{\Phi}_\epsilon(q, p) - q)^2\right]$, which we interpret to mean that, in a single step of the integrator, the expected squared distance between initial and terminal position is greater for the leapfrog integrator than for the inverted leapfrog. By deriving the propagator matrices (see \cref{lagrangian-remarks:app:propagator-matrices}) of the leapfrog and inverted leapfrog integrators, we may deduce exact formulas for the $k$-step quantities $\mathbb{E} \left[(\mathrm{Proj}_q\circ\hat{\Phi}^k_\epsilon(q, p) - q)^2\right]$ and $\mathbb{E} \left[(\mathrm{Proj}_q\circ\check{\Phi}^k_\epsilon(q, p) - q)^2\right]$. In \cref{lagrangian-remarks:fig:position-difference} we show the difference of these expected squared distances as a function of the number of steps and for several step-sizes. Notably, this difference is always non-negative, indicating that one expects the leapfrog to produce more distant proposals than the inverted leapfrog in the Gaussian case.
\end{example}
Because autocorrelation is related to the distance to subsequent samples, one expects the inverted leapfrog method to exhibit fewer effective samples. This will be true even if LMC Markov chain is initialized in the stationary distribution. This illustrates an important short-coming of inverting the integration sequence, which must be balanced against computational savings obtained by the reduction in Jacobian determinant computations in the LMC setting.

\subsection{The Order of the Explicit Integrators}

In \citet{lan2015}, the authors showed that the local error rate of the Lagrangian integrator is at least $\mathcal{O}(\epsilon^2)$. We now build on this foundation in order to deduce that the local error rate of the Lagrangian leapfrog is at least $\mathcal{O}(\epsilon^3)$. This means that the order of the LMC integrator matches the local and global error rates of the integrators used in HMC and RMHMC.
Proofs of \cref{lagrangian-remarks:lem:second-order-local-error,lagrangian-remarks:lem:self-adjoint} may be found in \cref{lagrangian-remarks:app:proofs-concerning-numerical-order}.

\begin{lemma}\label{lagrangian-remarks:lem:second-order-local-error}
  The explicit integrator of the Lagrangian dynamics has at least first-order local error.
\end{lemma}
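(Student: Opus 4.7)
The plan is to verify the local error bound by Taylor expanding both the exact solution of the Lagrangian flow and the integrator map around $\epsilon = 0$ and checking agreement through first order. From the Lagrangian equations of motion referenced in \cref{lagrangian-remarks:eq:lagrangian-acceleration}, the acceleration can be written componentwise as
\begin{align*}
  a^i \;=\; \ddot q^{(i)} \;=\; -\sum_{j,k} \Gamma^i_{jk}(q)\, v^{(j)} v^{(k)} \;-\; \bigl(\mathbf{G}^{-1}(q)\nabla U(q)\bigr)^{(i)},
\end{align*}
so the exact flow has the Taylor expansion $q(\epsilon) = q + \epsilon v + \tfrac{\epsilon^2}{2}\,a + \mathcal{O}(\epsilon^3)$ and $v(\epsilon) = v + \epsilon\, a + \mathcal{O}(\epsilon^2)$.

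The key algebraic identity that drives the proof is the following consequence of the symmetry $\Gamma^i_{jk} = \Gamma^i_{kj}$ and the definition of $\Omega$ in \cref{lagrangian-remarks:eq:omega}: for any $q,v$,
\begin{align*}
  2\,\Omega(\epsilon, q, v)\, v \;=\; \epsilon \sum_{j,k} \Gamma^i_{jk}(q)\, v^{(j)} v^{(k)}\, e_i \;=\; -\epsilon\bigl(a + \mathbf{G}^{-1}(q)\nabla U(q)\bigr).
\end{align*}
Since $\Omega(\epsilon, q, v) = \mathcal{O}(\epsilon)$, I will expand the inverse via the Neumann series $[\mathrm{Id}_m + \Omega]^{-1} = \mathrm{Id}_m - \Omega + \mathcal{O}(\epsilon^2)$. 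Substituting this into \cref{lagrangian-remarks:eq:lagrangian-velocity-i} and invoking the identity gives
\begin{align*}
  \breve v \;=\; (\mathrm{Id}_m - \Omega(\epsilon, q, v))\bigl[v - \tfrac{\epsilon}{2}\mathbf{G}^{-1}(q)\nabla U(q)\bigr] + \mathcal{O}(\epsilon^2) \;=\; v + \tfrac{\epsilon}{2}\,a + \mathcal{O}(\epsilon^2),
\end{align*}
which is exactly the half-step velocity of the exact flow. The position update \cref{lagrangian-remarks:eq:lagrangian-position-update} then produces $\tilde q = q + \epsilon v + \tfrac{\epsilon^2}{2}\,a + \mathcal{O}(\epsilon^3)$, agreeing with the exact position flow through second order.

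For the final velocity update \cref{lagrangian-remarks:eq:lagrangian-velocity-ii}, I will exploit that $\tilde q - q = \mathcal{O}(\epsilon)$ and $\breve v - v = \mathcal{O}(\epsilon)$, so the $\mathcal{O}(\epsilon)$ quantities $\Omega(\epsilon, \tilde q, \breve v)$ and $\mathbf{G}^{-1}(\tilde q)\nabla U(\tilde q)$ differ from their counterparts at $(q,v)$ only at order $\mathcal{O}(\epsilon^2)$. Applying the Neumann expansion again together with the analogue of the identity at $(\tilde q, \breve v)$ yields $\tilde v = \breve v + \tfrac{\epsilon}{2}\,a + \mathcal{O}(\epsilon^2) = v + \epsilon\, a + \mathcal{O}(\epsilon^2)$, matching the exact velocity flow to first order. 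Combining the $q$ and $v$ components gives local error $\mathcal{O}(\epsilon^2)$, i.e.\ at least first-order local error.

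The main obstacle is bookkeeping: one must keep track of the Neumann expansion of $[\mathrm{Id}_m + \Omega]^{-1}$ carefully and correctly identify the cancellations enabled by the identity $2\Omega(\epsilon, q, v)v = -\epsilon(a + \mathbf{G}^{-1}(q)\nabla U(q))$. These cancellations are precisely the reason the explicit scheme reproduces the true acceleration at leading order despite never evaluating the full $\Gamma(q)[v,v]$ term directly, and recognizing them is the only nontrivial ingredient in the proof.
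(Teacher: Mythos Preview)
Your proposal is correct and follows essentially the same approach as the paper: Taylor-expand both the exact flow and each step of the integrator around $\epsilon=0$, use the Neumann series for $[\mathrm{Id}_m+\Omega]^{-1}$, and verify agreement to the required order. The only cosmetic difference is that you package the Christoffel term into the acceleration $a$ via the identity $2\,\Omega(\epsilon,q,v)v=-\epsilon\bigl(a+\mathbf{G}^{-1}(q)\nabla U(q)\bigr)$, whereas the paper keeps everything in terms of $\Omega$ and uses the linearity $\Omega(2\epsilon,\cdot,\cdot)=2\,\Omega(\epsilon,\cdot,\cdot)$ directly; the underlying computations are identical.
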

\begin{lemma}\label{lagrangian-remarks:lem:self-adjoint}
  The explicit integrator of Lagrangian dynamics is self-adjoint.
\end{lemma}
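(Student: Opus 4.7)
The plan is to verify self-adjointness directly, namely that $\Phi_{-\epsilon}\circ\Phi_\epsilon = \mathrm{Id}$ where $\Phi_\epsilon$ denotes the Lagrangian leapfrog of \cref{lagrangian-remarks:def:lagrangian-leapfrog}. Starting from $(q, v)\mapsto(\tilde{q},\tilde{v})$ via the intermediate $\breve{v}$, I would apply the integrator again with step-size $-\epsilon$ to the output $(\tilde{q},\tilde{v})$ and show that the resulting quantities $\breve{v}'$, $\tilde{q}'$, $\tilde{v}'$ coincide with $\breve{v}$, $q$, $v$ respectively.

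The key algebraic ingredients are two elementary properties of the matrix $\Omega$ defined in \cref{lagrangian-remarks:eq:omega}. First, $\Omega$ is linear in $\epsilon$, so $\Omega(-\epsilon,q,v)=-\Omega(\epsilon,q,v)$. Second, and more crucially, because the Christoffel symbols are symmetric in their lower indices and $\Omega$ is linear in its third argument, one obtains the bilinear swap identity
\[ \Omega(\epsilon,q,u)\,w \;=\; \Omega(\epsilon,q,w)\,u \qquad \text{for all } u,w\in\R^m. \]
This identity lets me move the velocity argument of $\Omega$ onto the vector it multiplies, which is what makes the rearrangement work.

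Using these, I would rewrite \cref{lagrangian-remarks:eq:lagrangian-velocity-ii}, namely $[\mathrm{Id}_m+\Omega(\epsilon,\tilde{q},\breve{v})]\tilde{v}=\breve{v}-\tfrac{\epsilon}{2}\mathbf{G}^{-1}(\tilde{q})\nabla U(\tilde{q})$, by pushing $\Omega(\epsilon,\tilde{q},\breve{v})\tilde{v}=\Omega(\epsilon,\tilde{q},\tilde{v})\breve{v}$ to the right-hand side and solving for $\breve{v}$. This produces
\[ \breve{v} \;=\; [\mathrm{Id}_m-\Omega(\epsilon,\tilde{q},\tilde{v})]^{-1}\bigl[\tilde{v}+\tfrac{\epsilon}{2}\mathbf{G}^{-1}(\tilde{q})\nabla U(\tilde{q})\bigr], \]
which is precisely the expression for $\breve{v}'$ produced by the first substep of $\Phi_{-\epsilon}$ evaluated at $(\tilde{q},\tilde{v})$, after using $\Omega(-\epsilon,\cdot,\cdot)=-\Omega(\epsilon,\cdot,\cdot)$. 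Hence $\breve{v}'=\breve{v}$, and the position substep immediately gives $\tilde{q}'=\tilde{q}+(-\epsilon)\breve{v}=q$. A symmetric manipulation of \cref{lagrangian-remarks:eq:lagrangian-velocity-i}, rewriting it as $[\mathrm{Id}_m-\Omega(\epsilon,q,\breve{v})]v=\breve{v}+\tfrac{\epsilon}{2}\mathbf{G}^{-1}(q)\nabla U(q)$, identifies the final substep of $\Phi_{-\epsilon}$ as returning $\tilde{v}'=v$.

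The main obstacle is the bookkeeping around which velocity appears inside $\Omega$: in the forward integrator the second velocity substep uses $\Omega(\epsilon,\tilde{q},\breve{v})$, whereas running $\Phi_{-\epsilon}$ from $(\tilde{q},\tilde{v})$ naturally produces an $\Omega$ evaluated at $\tilde{v}$. Without the swap identity these two would not match, and the rearrangement would collapse. Once that identity is in place the rest is a direct matter of solving linear equations, so I expect the whole argument to fit comfortably alongside the derivation of \cref{lagrangian-remarks:eq:lagrangian-jacobian-determinant}.
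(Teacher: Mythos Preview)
Your proposal is correct and follows essentially the same route as the paper: both arguments verify $\Phi_{-\epsilon}\circ\Phi_\epsilon=\mathrm{Id}$ by using the swap identity $\Omega(\epsilon,q,u)w=\Omega(\epsilon,q,w)u$ (which is \cref{lagrangian-remarks:prop:omega-properties}) together with the linearity of $\Omega$ in $\epsilon$ to rearrange each velocity substep into the form produced by the reversed integrator. The bookkeeping you flag---matching the velocity appearing inside $\Omega$---is exactly the crux the paper's proof handles, and your treatment of it is the same.
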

\begin{proposition}\label{lagrangian-remarks:prop:third-order-local-error}
  The explicit integrator of the Lagrangian dynamics has at least third-order local error.
\end{proposition}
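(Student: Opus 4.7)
The plan is to combine \cref{lagrangian-remarks:lem:second-order-local-error} with \cref{lagrangian-remarks:lem:self-adjoint} via the classical fact from geometric numerical integration that the local truncation error of a self-adjoint one-step method admits an asymptotic expansion in \emph{only odd} powers of the step size. This is why the self-adjointness of \cref{lagrangian-remarks:lem:self-adjoint} promotes the error estimate by two orders in $\epsilon$ rather than one: any putative even-order term in the error expansion must vanish.

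Concretely, let $\hat{\Phi}_\epsilon$ denote the explicit Lagrangian leapfrog (\cref{lagrangian-remarks:def:lagrangian-leapfrog}) and let $\Phi_\epsilon$ be the exact time-$\epsilon$ flow of the Lagrangian equations of motion \cref{lagrangian-remarks:eq:lagrangian-acceleration}. Consistency combined with \cref{lagrangian-remarks:lem:second-order-local-error} allows me to write
\begin{align*}
    \hat{\Phi}_\epsilon(q,v) - \Phi_\epsilon(q,v) = c_r(q,v)\,\epsilon^r + \mathcal{O}(\epsilon^{r+1})
\end{align*}
for some smooth coefficient $c_r$ and some integer $r\ge 2$. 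The goal is to show that $r$ cannot be even, so that the bound $r\ge 2$ is in fact $r\ge 3$.

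To this end I would invoke the time-reversal symmetry $\Phi_{-\epsilon} = \Phi_\epsilon^{-1}$ of the exact flow, together with the self-adjointness $\hat{\Phi}_{-\epsilon} = \hat{\Phi}_\epsilon^{-1}$ supplied by \cref{lagrangian-remarks:lem:self-adjoint}. Substituting the displayed expansion into the identity $\hat{\Phi}_\epsilon \circ \hat{\Phi}_{-\epsilon} = \mathrm{Id}$, linearizing the exact flow using $D\Phi_\epsilon = \mathrm{Id}+\mathcal{O}(\epsilon)$ and $\Phi_{-\epsilon}(q,v)=(q,v)+\mathcal{O}(\epsilon)$, and collecting terms at order $\epsilon^r$, one arrives at
\begin{align*}
    \bigl((-1)^r + 1\bigr)\,c_r(q,v) = 0.
\end{align*}
For even $r$ this coefficient equals $2$, forcing $c_r\equiv 0$; hence the leading nonzero term in the error expansion must sit at an odd index, so $r\ge 3$, which is the claimed third-order local error.

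The main technical obstacle is purely bookkeeping: ensuring that the Taylor remainders in $\epsilon$ for the composition $\hat{\Phi}_\epsilon\circ\hat{\Phi}_{-\epsilon}$ are uniform on the relevant compact sets. This requires joint smoothness in $(\epsilon,q,v)$ of the velocity-update map $v\mapsto [\mathrm{Id}_m+\Omega(\epsilon,q,v)]^{-1}[v-(\epsilon/2)\mathbf{G}^{-1}(q)\nabla U(q)]$ that appears in \cref{lagrangian-remarks:eq:lagrangian-velocity-i,lagrangian-remarks:eq:lagrangian-velocity-ii}, which is inherited from the smoothness hypotheses already in force on $\mathbf{G}$ and $U$. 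With that in hand, the argument above is a direct specialization to the Lagrangian setting of the standard result that a symmetric integrator has even order (cf.\ Hairer, Lubich and Wanner).
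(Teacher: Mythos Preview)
Your proposal is correct and follows essentially the same route as the paper: combine the first-order accuracy established in \cref{lagrangian-remarks:lem:second-order-local-error} with the self-adjointness of \cref{lagrangian-remarks:lem:self-adjoint} and the classical fact that a self-adjoint one-step method has even order. The only difference is cosmetic: the paper invokes this last fact as a black box (\cref{lagrangian-remarks:thm:self-adjoint-even-order}, from Hairer--Lubich--Wanner) and finishes with a one-line contradiction argument, whereas you unpack its proof inline by deriving the relation $\bigl((-1)^r+1\bigr)c_r=0$ from the identity $\hat{\Phi}_\epsilon\circ\hat{\Phi}_{-\epsilon}=\mathrm{Id}$.
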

\begin{proof}
  From \cref{lagrangian-remarks:lem:second-order-local-error} we know that $r\geq 1$. Suppose $r=1$. From \cref{lagrangian-remarks:lem:self-adjoint} we know that the integrator is self-adjoint. From \cref{lagrangian-remarks:thm:self-adjoint-even-order} we know that the order of a self-adjoint method must be even; hence $r$ cannot be odd. But $r=1$ by assumption, a contradiction. Therefore, it must be that $r\geq 2$ so that the explicit integrator has, at least, third-order local error.
\end{proof}
Denote by $\hat{\Phi}_\epsilon$ the second-order integrator of Lagrangian dynamics. If $\hat{\Phi}_\epsilon(q_0, v_0) = (\hat{q}_\epsilon, \hat{v}_\epsilon)$, one wonders if $(\hat{q}_\epsilon, \hat{p}_\epsilon)$ is a second-order approximation of the {\it Hamiltonian dynamics}, where $\hat{p}_\epsilon = \mathbf{G}(\hat{q}_\epsilon) \hat{v}_\epsilon$. Indeed, this is true and follows as an immediate consequence of \cref{lagrangian-remarks:prop:composition-order} with diffeomorphism $(q, v)\mapsto (q, \mathbf{G}v)$.

\subsection{Efficient Computation of the Jacobian Determinant and the Update to Velocity}

The update to the velocity in \cref{lagrangian-remarks:eq:inverted-lagrangian-velocity,lagrangian-remarks:eq:lagrangian-velocity-i,lagrangian-remarks:eq:lagrangian-velocity-ii} and the associated change-in-volume in \cref{lagrangian-remarks:eq:inverted-lagrangian-jacobian-determinant,lagrangian-remarks:eq:inverted-lagrangian-jacobian-determinant} involve manipulations of a matrix of the form $\mathrm{Id} + \Omega(\epsilon, q, v)$: in the former case, we must solve a linear system involving this matrix while in the latter case we must compute the absolute value of its Jacobian determinant. An efficient procedure by which to achieve both of these objectives is to compute the PLU decomposition of $\mathrm{Id} + \Omega(\epsilon, q, v)=\mathbf{P}\mathbf{L}\mathbf{U}$ where $\mathbf{P}$ is a permutation matrix, $\mathbf{L}$ is a lower-triangular matrix with unit diagonal, and $\mathbf{U}$ is an upper-triangular matrix. The computational cost of this decomposition is $\frac{2}{3}m^3 + \mathcal{O}(m^2)$. Linear systems can be solved using the PLU decomposition by applying the permutation and solving the triangular systems via forward-backward substitution. Moreover, the required Jacobian determinant is simply $\prod_{i=1}^m \mathbf{U}_{ii}$, since the determinant of the permutation matrix has unit magnitude and the lower-triangular matrix $\mathbf{L}$ has unit Jacobian determinant since all of its diagonal elements are equal to one.

\subsection{Built-In Robustness of the Lagrangian Integrator}\label{lagrangian-remarks:subsec:robustness}

As shown in \citet{brofos2021numerical}, the volume-preservation property (i.e. $J=1$ in \cref{lagrangian-remarks:alg:generalized-leapfrog} in \cref{app:algorithms}) of the generalized leapfrog integrator is predicated on the symmetry of partial derivatives $\nabla_q^\top \nabla_p H(q, p) = \nabla_p \nabla_q H(q,p)$. In an {\it implementation} of the generalized leapfrog integrator we may suppose that we have functions $g_{k}(q)$ representing $\frac{\partial \mathbf{G}}{\partial q^{(k)}}(q)$. Substituting this function into the definition of the generalized leapfrog integrator (\cref{lagrangian-remarks:def:generalized-leapfrog}) yields the following map $(q, p)\mapsto (\tilde{q}, \tilde{p})$
\begin{align}
    \breve{p}^{(k)} &= p^{(k)} - \frac{\epsilon}{2} \delta_k(q, \breve{p}) \\
    \tilde{q} &= q + \frac{\epsilon}{2} \paren{\Delta(q, \breve{p}) + \Delta(\tilde{q}, \breve{p})} \\
    \tilde{p}^{(k)} &= \breve{p}^{(k)} - \frac{\epsilon}{2} \delta_k(\tilde{q}, \breve{p}),
\end{align}
where
\begin{align}
    \delta_k(q, p) &= -\frac{1}{2} p^\top \mathbf{G}^{-1}(q) g_k(q) \mathbf{G}^{-1}(q) p + \frac{\partial U}{\partial q^{(k)}}(q) \\
    \Delta(q, p) &= \mathbf{G}^{-1}(q)p
\end{align}
When $g_k(q) = \frac{\partial \mathbf{G}}{\partial q^{(k)}}(q)$, the resulting map is necessarily volume-preserving. However, we may then ask the question, ``What happens when $g_k(q)$ is {\it incorrectly implemented} so that, in fact, $g_k(q) \neq \frac{\partial \mathbf{G}}{\partial q^{(k)}}(q)$?'' The symmetry of partial derivatives has therefore been violated since 
\begin{align}
    \frac{\partial\Delta}{\partial q^{(k)}}(q, p) = \frac{\partial \mathbf{G}^{-1}}{\partial q^{(k)}}(q) p \neq -\mathbf{G}^{-1}(q) g_k(q) \mathbf{G}^{-1}(q) p = \frac{\partial \delta_k}{\partial p}(q, p).
\end{align}
In the case of the RMHMC Markov chain (\cref{lagrangian-remarks:def:rmhmc}), detailed balance is no longer satisfied and there is no expectation that the RMHMC will converge to the target distribution.

The situation is different in the case of the LMC Markov chain (\cref{lagrangian-remarks:def:lmc}). The fundamental difference is that LMC {\it expects} the transformation $(q, v)\mapsto (\tilde{q},\tilde{v})$ to be non-volume-preserving, hence necessitating the Jacobian determinant correction in \cref{lagrangian-remarks:eq:acceptance-probability}. To see that the change-in-volume is still correctly computed even when $g_k(q) \neq \frac{\partial \mathbf{G}}{\partial q^{(k)}}(q)$, we observe that the Lagrangian leapfrog's (\cref{lagrangian-remarks:def:lagrangian-leapfrog}) update to velocity in \cref{lagrangian-remarks:eq:lagrangian-velocity-i} is a special case of the following map:
\begin{align}
    \breve{v} = \paren{\mathrm{Id}_m + \mathbf{A}(q, v)}^{-1} \paren{v - \mathbf{b}(q)}
\end{align}
where $\mathbf{A} : \R^m\times\R^m\to\R^{m\times m}$ and $\mathbf{b} : \R^m\to\R^m$. Under the assumption that $\mathbf{A}(q, v)\tilde{v} = \mathbf{A}(q, \tilde{v}) v$ (which holds for the correctly implemented LMC with $\mathbf{A} = \Omega(\epsilon, q, v)$ by \cref{lagrangian-remarks:prop:omega-properties}), the Jacobian determinant of the map $(q, v)\mapsto (q, \breve{v})$ is
\begin{align}
    \label{lagrangian-remarks:eq:general-determinant} \abs{\mathrm{det}\paren{\frac{\partial (q, \breve{v})}{\partial (q, v)}}} = \abs{\frac{\mathrm{det}(\mathrm{Id}_m + \mathbf{A}(q, \breve{v}))}{\mathrm{det}(\mathrm{Id}_m + \mathbf{A}(q, v))}}.
\end{align}
Hence, if we adopt the notation $g_{k,ij}(q)$ as the $(i,j)$-th element of $g_k(q)$, and substitute
\begin{align}
    \tilde{\Gamma}^k_{ij}(q) = \frac{1}{2} \sum_{l=1}^m \mathbf{G}^{-1}_{kl}(q) \paren{g_{i,lj}(q) + g_{j,li}(q) - g_{l, ij}(q)}
\end{align}
for \cref{lagrangian-remarks:eq:christoffel} and define $\tilde{\Omega}_{ij}(\epsilon, q, v) = \frac{\epsilon}{2} \sum_{k=1}^m \tilde{\Gamma}^i_{kl}(q) v^{(k)}$ then we still have $\tilde{\Omega}(\epsilon, q, v) \breve{v} = \tilde{\Omega}(\epsilon, q, \breve{v})v$ using the fact that $\tilde{\Gamma}^k_{ij}(q) = \tilde{\Gamma}^k_{ji}(q)$. Hence, with $\mathbf{A}(q, v) = \tilde{\Omega}(\epsilon, q, v)$, \cref{lagrangian-remarks:eq:general-determinant} applies to compute the Jacobian determinant.

\section{Experimentation}\label{lagrangian-remarks:sec:experimentation}

We now turn our attention to the empirical evaluation of the numerical integrator of Lagrangian dynamics in terms of its numerical order, its inverted variant, and its robustness to misspecification of the derivatives of the metric. We begin in \cref{lagrangian-remarks:subsec:second-order} by numerically validating the second-order behavior of the numerical integrator. In the subsequent material, we evaluate the integrator with and without inversion in a banana-shaped distribution, in Bayesian logistic regression, in a multiscale Student-$t$ distribution, and in a stochastic volatility model. As baselines, we consider RMHMC and HMC. Code to reproduce these experiments may be found at
\if0\blind {
\url{https://github.com/JamesBrofos/Rethinking-Lagrangian-Monte-Carlo}.
} \fi
\if1\blind {
\url{https://github.com/Anonymous/GitHubLink}.
} \fi

We consider three metrics by which to assess the convergence of the Markov chain produced by ILMC and the baselines. First, we consider the expected squared jump distance (ESJD) as described in \citet{esjd}; this measures the expected squared distance between the current state and the next state, where the expectation is computed over the acceptance probability. The larger the ESJD, the less serial autocorrelation in the Markov chain samples. We also consider the effective sample size (ESS) normalized by time elapsed, which gives an indication of the sampling efficiency of each method. We use the implementation of ESS as given in \citet{arviz_2019}. We also consider the method of \citet{brofos2021numerical} for measuring the ergodicity of the Markov chain given i.i.d. samples. Under this procedure, we project the i.i.d. samples and the Markov chain samples along one-hundred random directions and measure the average value of the Kolmogorov-Smirnov statistics of these one-dimensional distributions. By the Cram\'{e}r-Wold theorem, the closer these Kolmogorov-Smirnov statistics are concentrated toward zero, the higher the fidelity between the Markov chain samples and the i.i.d. samples. In implementing the generalized leapfrog integrator (\cref{lagrangian-remarks:def:generalized-leapfrog}), we resolve the fixed point equations \cref{lagrangian-remarks:eq:generalized-leapfrog-momentum-i,lagrangian-remarks:eq:generalized-leapfrog-position} using fixed point iteration to a convergence tolerance of $1\times 10^{-6}$, with convergence measured in $\Vert\cdot\Vert_\infty$.

\subsection{Demonstration of Second-Order Error}\label{lagrangian-remarks:subsec:second-order}

\begin{figure}
  \centering
  \includegraphics[width=0.49\textwidth]{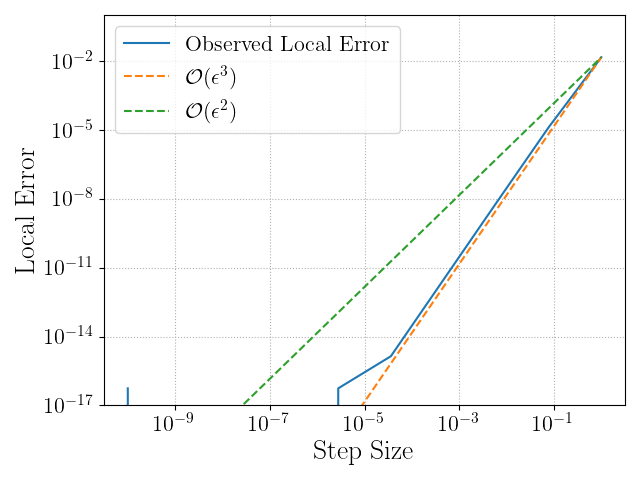}
  \caption{Visualization of the second-order accuracy (i.e. third order local error) of the Lagrangian integrator on Lagrangian dynamics derived from a non-separable Hamiltonian. We see that the observed local error decreases on a log-log scale with a slope of three, corresponding to the claimed third-order local error. By contrast, we also show a line with a slope of two, indicating that the integrator exhibits accuracy better than first-order.}
  \label{lagrangian-remarks:fig:second-order-error}
\end{figure}

We consider the following non-separable Hamiltonian $H(q, p) = \frac{q^2 p^2}{2}$, which describes geodesic motion in $\R$ when equipped with the metric $G(q) = 1 / q^2$. After converting from momentum to velocity $v_t = \frac{p_t}{q_t^2}$, we obtain the second-order differential equation $a_t = \frac{v_t^2}{q_t}$. Given initial conditions $q_0$ and $p_0$ (in the Hamiltonian formalism), the exact solution to this differential equation is $q_t = q_0 \exp(q_0 p_0 t)$ and $v_t = q_0^2p_0 \exp(q_0 p_0 t)$. If, as claimed, the Lagrangian integrator is indeed second-order, then it should exhibit third-order local error according to \cref{lagrangian-remarks:def:integrator-order}. We evaluate this by examing the squared error between the output of the numerical integrator and the analytical solution for a decreasing sequence of step-sizes $\epsilon$; that is, denoting the output of a single step of the Lagrangian integrator by $(\hat{q}_\epsilon,\hat{v}_\epsilon)$, we measure $\Vert \hat{q}_\epsilon - q_\epsilon\Vert_2^2 + \Vert \hat{v}_\epsilon - v_\epsilon\Vert_2^2$. We observe in \cref{lagrangian-remarks:fig:second-order-error} that this error decreases linearly on a log-log scale and, critically, the slope of this linear relation is three. This demonstrates numerically the third-order local error of the Lagrangian integrator and gives support to the claim that the method is of second-order accuracy.

\subsection{Banana-Shaped Posterior Distribution}

\begin{figure*}[t!]
  \centering
  \begin{subfigure}[t]{0.3\textwidth}
  \includegraphics[width=\textwidth]{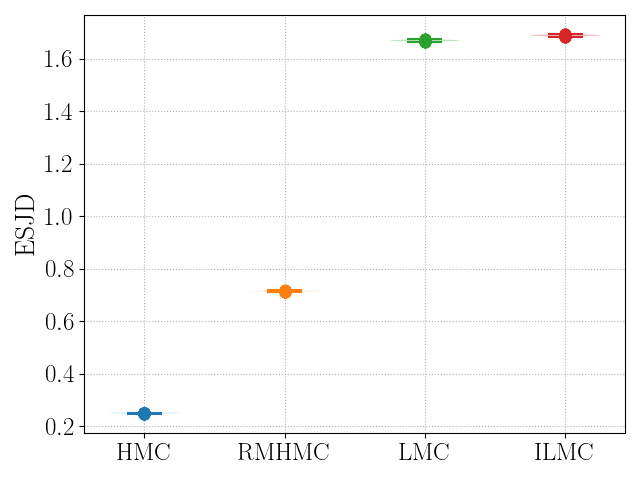}
  \caption{ESJD}
  \end{subfigure}
  ~
  \begin{subfigure}[t]{0.3\textwidth}
  \includegraphics[width=\textwidth]{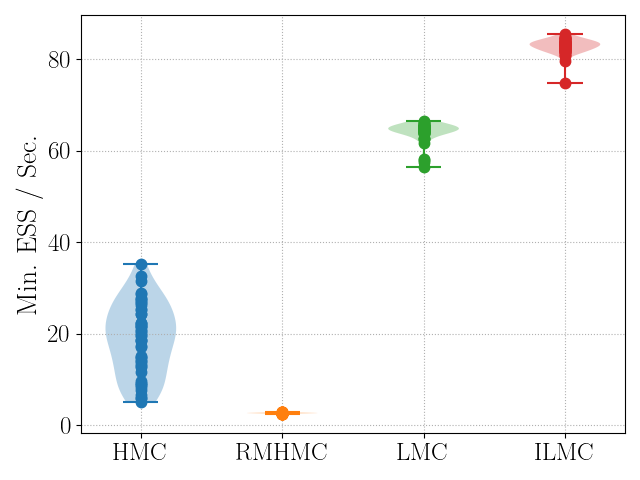}
  \caption{Min. ESS / Sec.}
  \end{subfigure}
  ~
  \begin{subfigure}[t]{0.3\textwidth}
  \includegraphics[width=\textwidth]{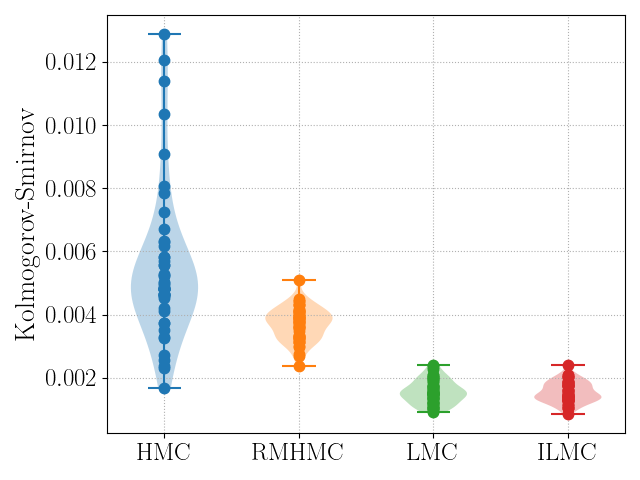}
  \caption{Kolmogorov-Smirnov}
  \end{subfigure}
  \caption{{\bf (Banana-Shaped Distribution)} We show the ESJD, the minimum ESS per second, and the distribution of Kolmogorov-Smirnov statistics for sampling from the banana-shaped distribution using RMHMC, LMC, and ILMC. Surprisingly, we do not observe degradation of the ESJD when employing ILMC. Therefore, combined with its faster sampling iteration, ILMC enjoys the best ESS per second and a distribution of Kolmogorov-Smirnov statistics that is comparable to LMC.}
  \label{lagrangian-remarks:fig:banana}
\end{figure*}

The banana-shaped distribution was proposed in \citet{rmhmc} by Cornebise and Julien as an example of Bayesian inference in non-identifiable models. In this example, a non-identifiable likelihood function in Bayesian linear regression is regularized according to a normal prior, the effect of which is to produce a density with symmetric, elongated tails. The generative model of this distribution is as follows:
\begin{align}
  (\theta_1,\theta_2) &\overset{\mathrm{i.i.d.}}{\sim} \mathrm{Normal}(0, \sigma^2_\theta) \\
  y_i \vert \theta_1,\theta_2 &\overset{\mathrm{i.i.d.}}{\sim} \mathrm{Normal}(\theta_1 + \theta_2^2, \sigma^2_y) ~~~~~ \mathrm{for}~~ i = 1,\ldots, n.
\end{align}
This distribution also illustrates a short-coming of the generalized leapfrog method. For large step-sizes, the implicit update to the momentum variable will not have a solution; therefore, the generalized leapfrog integrator is compelled to adopt a significantly smaller step-size than can be used even by the standard leapfrog method. Indeed, an advantage of explicit numerical integrators is that one does not need to fret that constituent update steps in the integrator will not have solutions. We seek to draw 1,000,000 samples from this posterior.

In our experiments we set $\sigma^2_\theta = \sigma^2_y = 2$, $n = 100$, and set parameter values $\theta_1 = 1/2$ and $\theta_2 = \sqrt{1 - 1/2}$ for generating synthetic data. For HMC we use a step-size of $0.1$ and ten integration steps, which was found to produce an acceptance probability between eighty and ninety percent. As for the Riemannian metric, we adopt the sum of the Fisher information of the log-likelihood and the negative Hessian of the log-prior. For RMHMC, we use a step-size of $0.04$ and twenty integration steps, which produces an acceptance probability of around ninety percent. For LMC and ILMC, we use twenty integration steps with a step-size of $0.1$, which yields an acceptance probability of around ninety percent. These parameter configurations were found to produce reasonable Markov chains based on hand-tuning. Results showing the effective sample size (ESS) per second are provided in \cref{lagrangian-remarks:fig:banana}; we see that RMHMC struggles in this distribution, being even worse than ordinary HMC due to the requirement to use a small step-size. By contrast, LMC and ILMC do significantly better, with ILMC having the best ESS per second due to its elimination of two Jacobian determinant computations. We also show the distribution of this average value over ten trials for each sampling method. We find that the geometric methods based on the Lagrangian formalism perform similarly under this ergodicity measure and outperform competing methods.

\subsection{Bayesian Logistic Regression}

\begin{figure*}[t!]
  \centering
  \begin{subfigure}[t]{0.3\textwidth}
  \includegraphics[width=\textwidth]{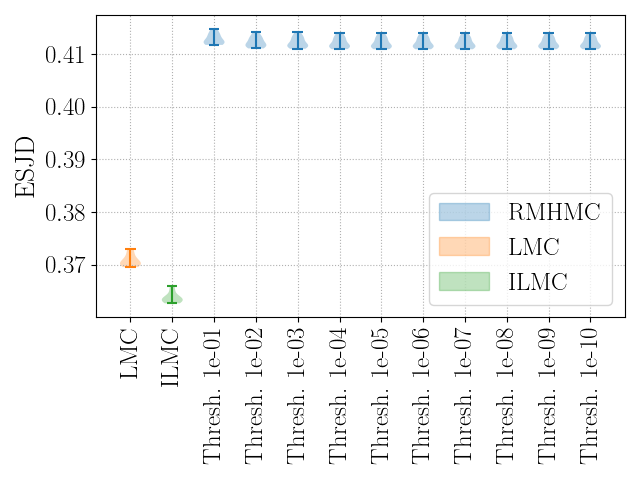}
  \caption{ESJD}
  \end{subfigure}
  ~
  \begin{subfigure}[t]{0.3\textwidth}
  \includegraphics[width=\textwidth]{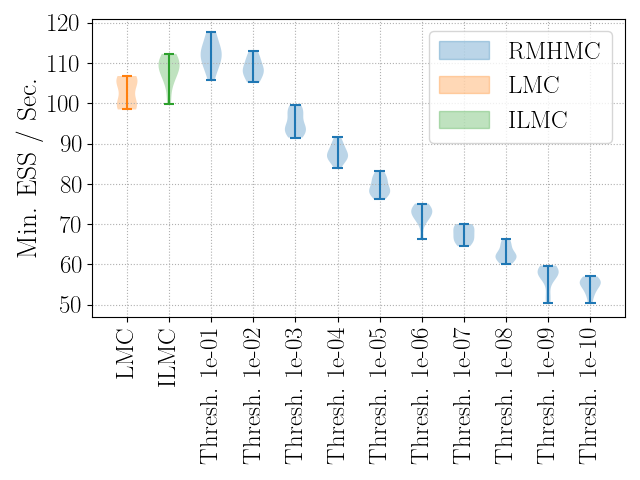}
  \caption{Min. ESS / Sec.}
  \end{subfigure}
  ~
  \begin{subfigure}[t]{0.3\textwidth}
  \includegraphics[width=\textwidth]{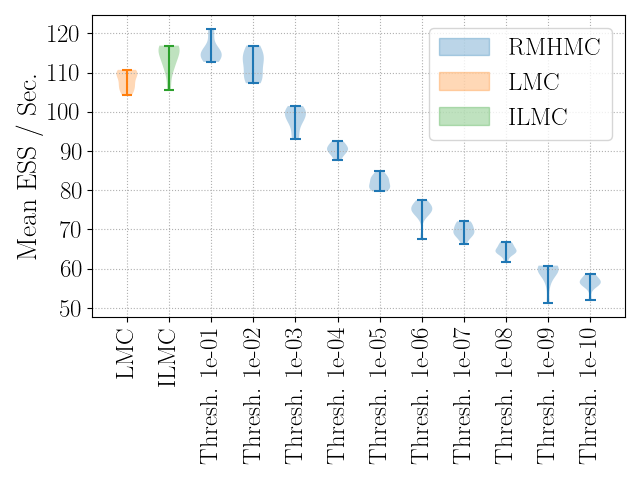}
  \caption{Mean ESS / Sec.}
  \end{subfigure}
  \caption{{\bf (Logistic Regression - Breast Cancer)} We show the ESJD and the minimum and mean ESS per second for the breast cancer dataset using RMHMC, LMC, and ILMC. We observe that LMC and ILMC have degraded movement through the sample space as computed by the ESJD; however, this is compensated for by their superior computational efficiency, ultimately yielding more effectively independent samples per second and RMHMC.}
  \label{lagrangian-remarks:fig:logistic-breast}
\end{figure*}

\begin{figure*}[t!]
  \centering
  \begin{subfigure}[t]{0.3\textwidth}
  \includegraphics[width=\textwidth]{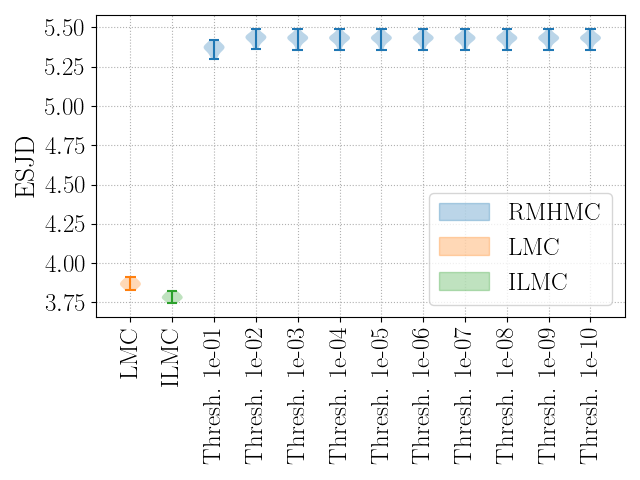}
  \caption{ESJD}
  \end{subfigure}
  ~
  \begin{subfigure}[t]{0.3\textwidth}
  \includegraphics[width=\textwidth]{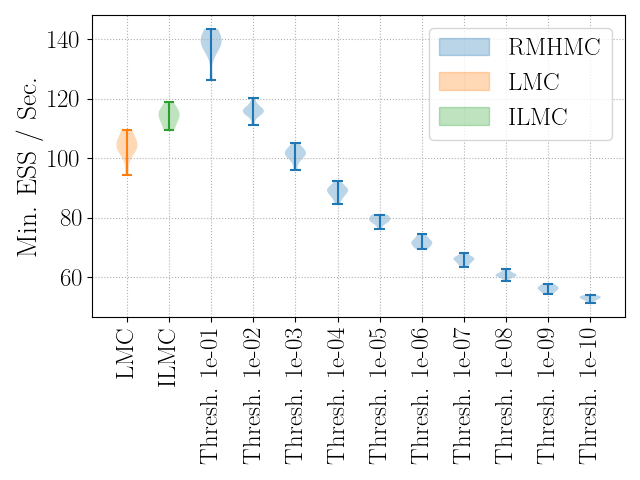}
  \caption{Min. ESS / Sec.}
  \end{subfigure}
  ~
  \begin{subfigure}[t]{0.3\textwidth}
  \includegraphics[width=\textwidth]{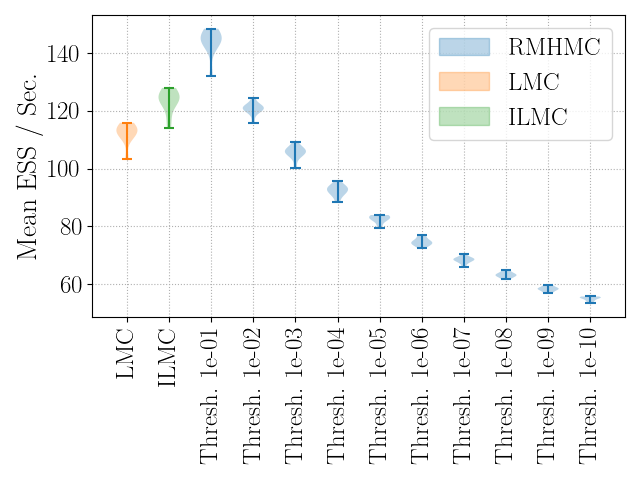}
  \caption{Mean ESS / Sec.}
  \end{subfigure}
  \caption{{\bf (Logistic Regression - Thyroid Cancer)} We show the ESJD and the minimum and mean ESS per second for the thyroid cancer dataset using RMHMC, LMC, and ILMC. We observe that LMC and ILMC have significantly degraded movement through the sample space as computed by the ESJD; however, this is compensated for by their superior computational efficiency, ultimately yielding more effectively independent samples per second and RMHMC.}
  \label{lagrangian-remarks:fig:logistic-thyroid}
\end{figure*}

We consider Bayesian logistic regression on a breast cancer and a thyroid cancer dataset. The breast cancer dataset has 277 observations and ten covariates while the thyroid cancer dataset has 215 observations and six covariates. The Bayesian generative model is assumed to have the following form:
\begin{align}
  \beta &\sim\mathrm{Normal}(0,\alpha^{-1} \mathrm{Id}) \\
  y_i \vert \mathbf{x}_i, \beta &\overset{\mathrm{i.i.d.}}{\sim} \mathrm{Bernoulli}(\sigma(\mathbf{x}_i^\top\beta))~~~~~ \mathrm{for}~~ i = 1,\ldots, n,
\end{align}
where $\sigma : \R\to(0,1)$ denotes the sigmoid function. As the Riemannian metric, we adopt the sum of the Fisher information of the log-likelihood and the negative Hessian of the log-prior. We show in \cref{lagrangian-remarks:fig:logistic-breast,lagrangian-remarks:fig:logistic-thyroid} the minimum ESS per second, where we have also considered varying the convergence threshold used to solve fixed point iterations in RMHMC; as discussed in detail by \citet{brofos2021numerical}, implementations of RMHMC require that the implicit updates to momentum and position be resolved using an iterative procedure such as fixed point iteration or Newton's method. The tolerance in these numerical methods directly controls the degree to which reversibility and volume preservation are violated by the implementation of the generalized leapfrog method: For small values of the threshold, these theoretical properties are closely respected by the numerical method, while for large thresholds, violations occur. Reversibility and volume preservation imply detailed balance in HMC, and violations imply that the detailed balance may not hold in an implementation of RMHMC with large thresholds. By contrast, LMC and ILMC are fully explicit and detailed balance is respected to machine precision for either method. We observe that both sampling methods based on Lagrangian mechanics exhibit higher ESS per second than their counterpart based on the Hamiltonian formalism except for the largest thresholds (with the greatest bias); we refer the interested reader to \citet{brofos2021numerical} for a detailed discussion on the effects of the convergence threshold on the bias of the RMHMC Markov chain. Moreover, ILMC outperforms LMC on this metric. When comparing the ESJD, we observe that, consistent with our understanding in the Gaussian case, the ILMC method has the smallest expected distance traveled. However, this is offset by a faster sampling iteration due to only requiring two Jacobian determinant calculations instead of four.

\subsection{Multiscale Student's $t$-Distribution}\label{lagrangian-remarks:subsec:student-t}

\begin{figure*}[t!]
  \centering
  \begin{subfigure}[t]{0.3\textwidth}
  \includegraphics[width=\textwidth]{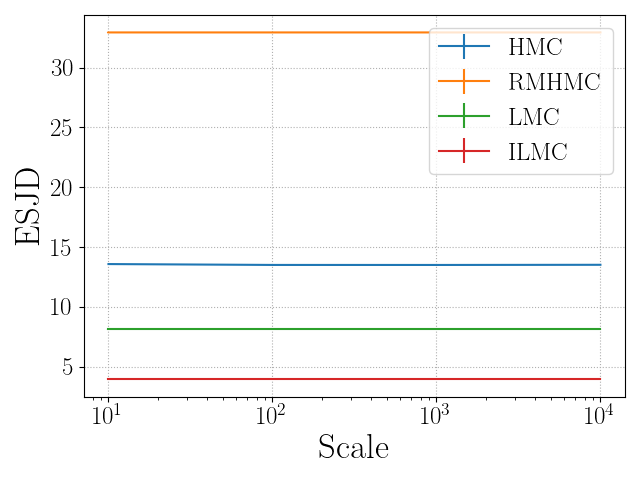}
  \caption{ESJD}
  \end{subfigure}
  ~
  \begin{subfigure}[t]{0.3\textwidth}
  \includegraphics[width=\textwidth]{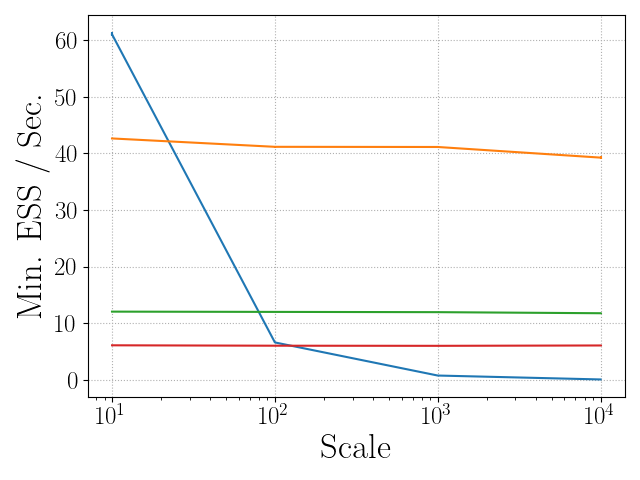}
  \caption{Min. ESS / Sec.}
  \end{subfigure}
  ~
  \begin{subfigure}[t]{0.3\textwidth}
  \includegraphics[width=\textwidth]{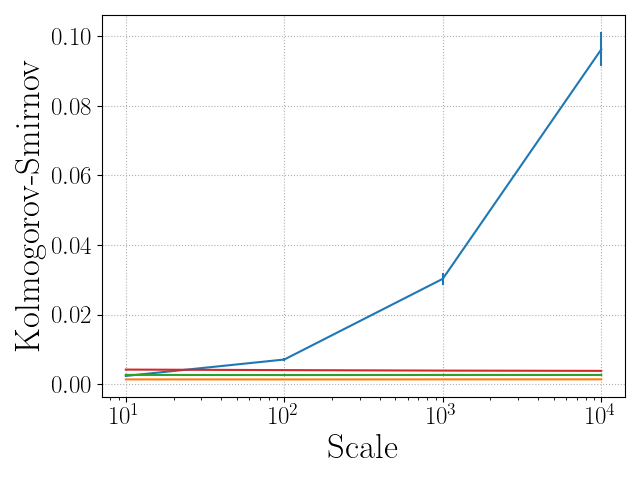}
  \caption{Kolmogorov-Smirnov}
  \end{subfigure}
  \caption{{\bf (Multi-Scale Student-$t$ $\nu = 5\times 10^0$)} We show the ESJD, the minimum ESS per second, and the distribution of Kolmogorov-Smirnov statistics for sampling from the multiscale Student-$t$ distribution with $\nu=5\times 10^0$ using RMHMC, LMC, and ILMC. We observe that RMHMC enjoys the best ESJD, time-normalized ESS, and Kolmogorov-Smirnov statistics. This illustates an important limitation of the Lagrangian procedure in that its performance may actually be unable to match RMHMC in certain circumstances.}
  \label{lagrangian-remarks:fig:t-5}
\end{figure*}

\begin{figure*}[t!]
  \centering
  \begin{subfigure}[t]{0.3\textwidth}
  \includegraphics[width=\textwidth]{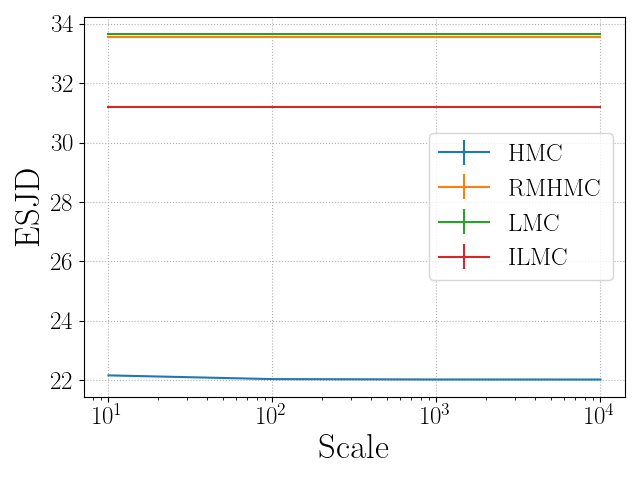}
  \caption{ESJD}
  \end{subfigure}
  ~
  \begin{subfigure}[t]{0.3\textwidth}
  \includegraphics[width=\textwidth]{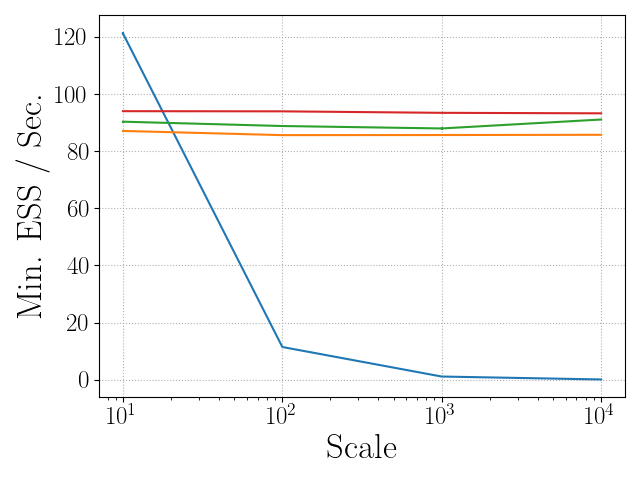}
  \caption{Min. ESS / Sec.}
  \end{subfigure}
  ~
  \begin{subfigure}[t]{0.3\textwidth}
  \includegraphics[width=\textwidth]{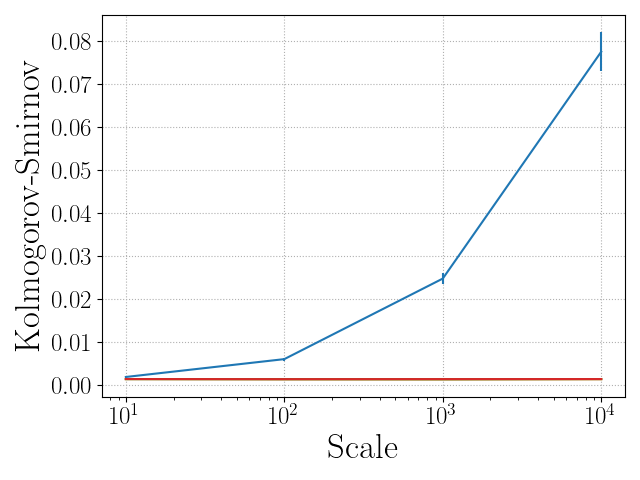}
  \caption{Kolmogorov-Smirnov}
  \end{subfigure}
  \caption{{\bf (Multi-Scale Student-$t$ $\nu = 5\times 10^3$)} We show the ESJD, the minimum ESS per second, and the distribution of Kolmogorov-Smirnov statistics for sampling from the multiscale Student-$t$ distribution with $\nu=5\times 10^3$ using RMHMC, LMC, and ILMC. In this example, despite having a smaller ESJD, ILMC ultimately produces a slight benefit in terms of ESS per second. Among the geometric methods, the distribution of the Kolmogorov-Smirnov statistics are nearly identical.}
  \label{lagrangian-remarks:fig:t-5000}
\end{figure*}

Multiscale distributions can be challenging for HMC since trajectories integrated by the leapfrog method will exhibit significant oscillations along directions of the distributions possessing the smallest spatial scale. To investigate this phenomenon, we consider sampling from a multivariate Student-$t$ distribution with a multiscale covariance; in particular we consider a distribution with density function,
\begin{align}
  \pi(x) \propto \left[1 + \frac{1}{\eta} x^\top\Sigma^{-1}x\right]^{-(\eta + m)/2},
\end{align}
where $x\in\R^m$, $\eta > 2$ is the degrees-of-freedom, and $\Sigma$ is the scale matrix. We consider scale matrices of the form $\Sigma = \mathrm{diag}(1, \ldots, 1, \sigma^2) \in \R^{m\times m}$. In our experiments we set $m=20$ and $\eta \in \set{5\times 10^0, 5\times 10^3}$ and consider multiscale distributions for $\sigma^2\in \set{1\times 10^1, 1\times 10^2,1\times 10^3, 1\times 10^4}$. We choose these two values of the degrees-of-freedom to demonstrate two distinct sampling behaviors. For the Riemannian methods, we consider a step-size of $0.7$ and twenty integration steps. As the Riemannian metric, we use the positive definite term in the negative Hessian of the log-density of the distribution. For $\nu = 5\times 10^0$, we observe that the ESJD is largest for RMHMC; this occurs because RMHMC enjoys a far superior acceptance probability in this scenario at 95\% whereas LMC and ILMC have acceptance probabilities of 40\% and 62\%, respectively. This degraded performance of the Lagrangian methods is then reflected in the time-normalized ESS, which shows RMHMC dominating LMC, ILMC, and HMC. For the case of $\eta = 5\times 10^3$, circumstances are more favorable to the Lagrangian methods, with ILMC exhibiting the best performance in terms of time-normalized ESS, with the LMC method giving the second best results. In terms of ESJD, we observe that ILMC moves less far in sample space than LMC, but that this is offset by the faster sampling. As in the case of the banana-shaped distribution, we may sample from this target density analytically in order to assess the ergodicity properties of the samplers. We find that the ergodicity of the geometric methods is essentially constant with respect to the multiscale parameter, whereas the performance of Euclidean HMC noticeably degenerates. For both $\nu = 5\times 10^3$ and $\nu = 5\times 10^0$, one observes that the performance of the geometric methods is essentially constant over the multiple scales of the target distribution, demonstrating the beneficial effect of capturing the geometry of the target.

\subsection{Stochastic Volatility Model}

\begin{figure*}[t!]
  \centering
  \begin{subfigure}[t]{0.3\textwidth}
  \includegraphics[width=\textwidth]{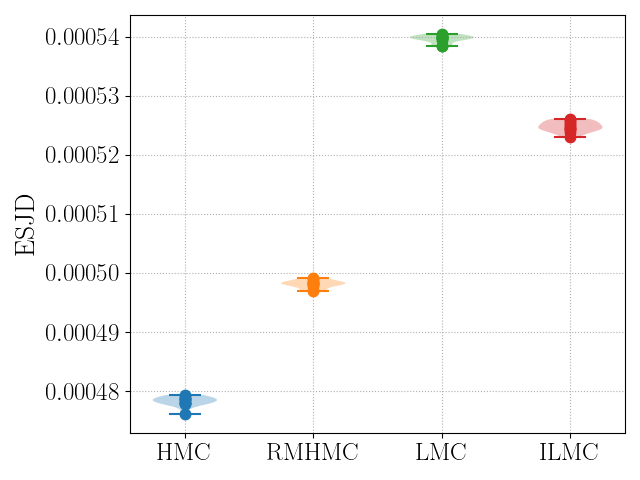}
  \caption{ESJD}
  \end{subfigure}
  ~
  \begin{subfigure}[t]{0.3\textwidth}
  \includegraphics[width=\textwidth]{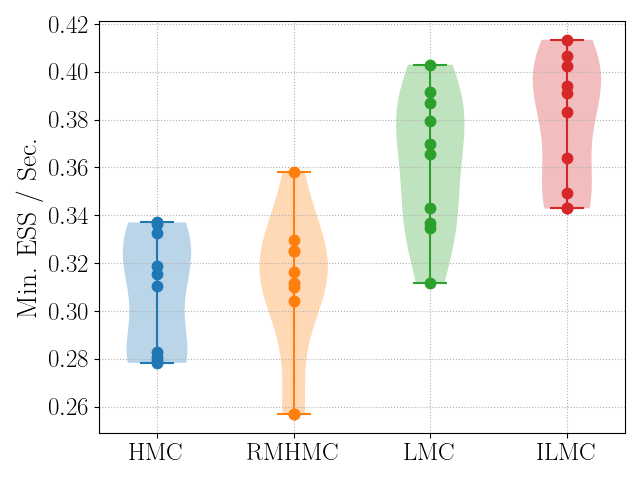}
  \caption{Min. ESS / Sec.}
  \end{subfigure}
  ~
  \begin{subfigure}[t]{0.3\textwidth}
  \includegraphics[width=\textwidth]{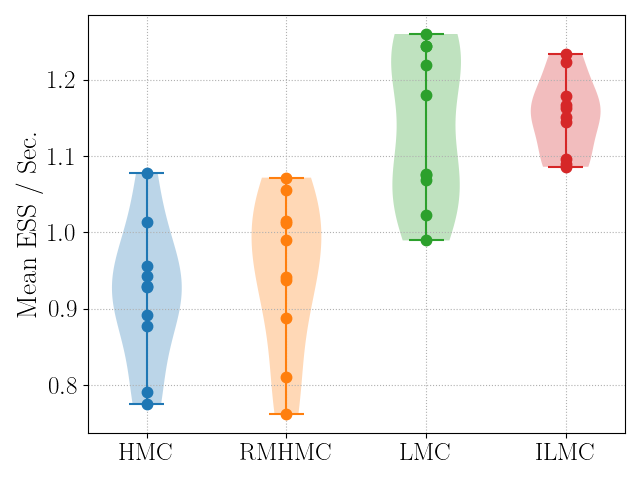}
  \caption{Mean ESS / Sec.}
  \end{subfigure}
  \caption{{\bf (Stochastic Volatility Model)} We show the ESJD and the minimum and mean ESS per second for the stochastic volatility model using HMC, RMHMC, LMC, and ILMC. In this example, ILMC moves less efficiently through the same space as measured by ESJD, and ultimately produces fewer effective samples per second relative to LMC. However, it is intriguing to note that ILMC nonetheless produces more effective transitions than RMHMC and HMC.}
  \label{lagrangian-remarks:fig:stochastic-volatility}
\end{figure*}

We consider Bayesian inference in a stochastic volatility model. We consider the following generative model:
\begin{align}
  \label{lagrangian-remarks:eq:stochastic-volatility-x} x_t\vert x_{t-1}, \phi,\sigma^2 &\sim\mathrm{Normal}(\phi x_{t-1}, \sigma^2) \\
  \label{lagrangian-remarks:eq:stochastic-volatility-y} y_t \vert \beta, x_t &\sim \mathrm{Normal}(0, \beta^2\exp(x_t)) 
\end{align}
for $t = 2,\ldots, T$ in \cref{lagrangian-remarks:eq:stochastic-volatility-x} and $t=1,\ldots, T$ in \cref{lagrangian-remarks:eq:stochastic-volatility-y} and with priors $x_1\sim \mathrm{Normal}(0, \sigma^2 / (1- \phi^2))$, $(\phi+1)/2\sim\mathrm{Beta}(20, 3/2)$, $1/\sigma^2\sim \xi^2(10, 1/20)$, and the prior over $\beta$ being proportional to $1/\beta^2$. Given $(y_1,\ldots, y_T)$, we seek to sample the posterior of $(x_1,\ldots, x_T, \phi, \beta, \sigma^2)$. We follow \citet{rmhmc} and employ a Metropolis-within-Gibbs-like alternating procedure for sampling the posteriors of $(x_1,\ldots, x_T)$ and $(\phi,\beta,\sigma^2)$. In our experiments we set $T=1,000$ and use values of $\phi=0.98$, $\beta=0.65$, and $\sigma^2 = 0.15^2$. For HMC, we use a step-size of 0.01 and fifty integration steps when sampling $(\phi,\beta,\sigma^2)$; for the geometric methods, we use a step-size of 0.5 and six integration steps.
As for the Riemannian metric, we adopt the sum of the Fisher information of the log-likelihood and the negative Hessian of the log-prior. We compare the average ESS per second among the three latent variables $(\phi, \beta, \sigma^2)$ with results reported in \cref{lagrangian-remarks:fig:stochastic-volatility}. We find that LMC and ILMC are the strongest performing methods, with LMC having better ESS per second due to its more efficient traversal of the sample space. Indeed, ILMC has degraded performance in this example, owing to its greater autocorrelation, but nevertheless outperforms HMC and RMHMC.

\subsection{Fitzhugh-Nagumo Model}

\begin{figure*}[t!]
  \centering
  \begin{subfigure}[t]{0.3\textwidth}
  \includegraphics[width=\textwidth]{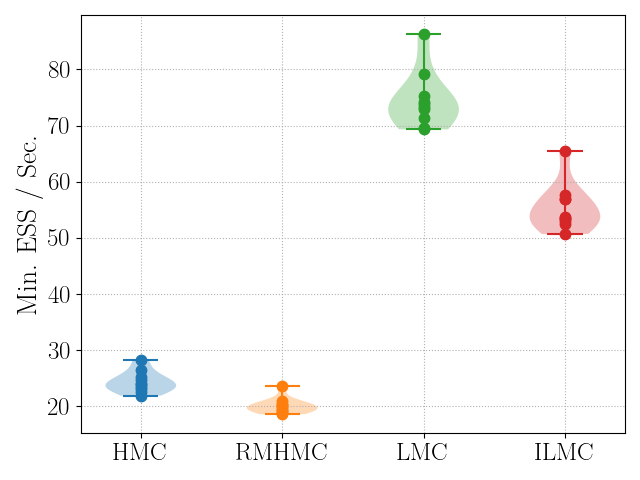}
  \caption{Min. ESS / Sec.}
  \end{subfigure}
  ~
  \begin{subfigure}[t]{0.3\textwidth}
  \includegraphics[width=\textwidth]{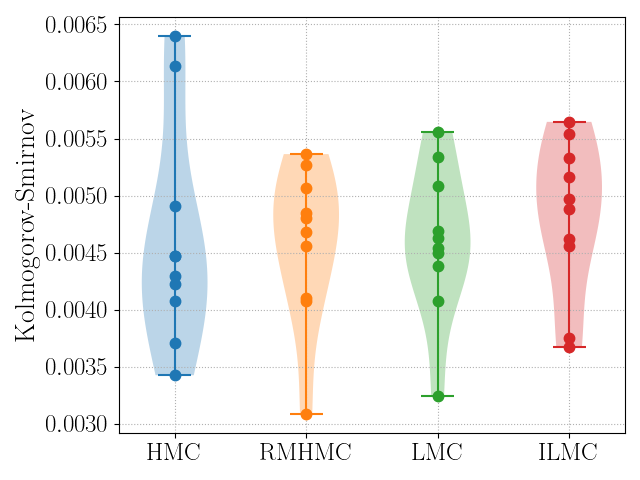}
  \caption{Kolmogorov-Smirnov}
  \end{subfigure}
  ~
  \begin{subfigure}[t]{0.3\textwidth}
  \includegraphics[width=\textwidth]{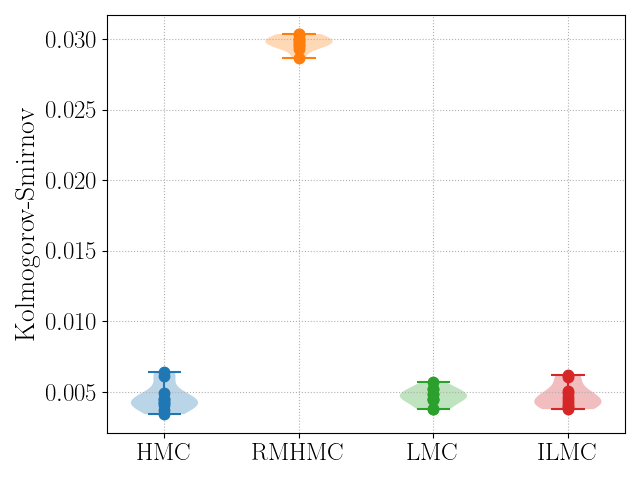}
  \caption{Kolmogorov-Smirnov (Misspecified)}
  \end{subfigure}
  \caption{{\bf (Fitzhugh-Nagumo Model)} We show the minimum ESS per second and the Kolmogorov-Smirnov ergodicity metric for inferring the posterior distribution of $(a, b,c)$ in the Fitzhugh-Nagumo posterior distribution. We observe that LMC has the strongest performance in terms of time-normalized ESS and that all methods exhibit comparable performance with respect to ergodicity. When we break the symmetry of partial derivatives, RMHMC no longer has the target distribution as its invariant distribution; this causes a degradation of the ergodicity metric. However, the geometric methods LMC and ILMC are robust to this misspecification because they are non-volume-preserving by design.}
  \label{lagrangian-remarks:fig:fitzhugh-nagumo}
\end{figure*}

We now investigate the LMC and RMHMC in the Fitzhugh-Nagumo posterior distribution. The Fitzhugh-Nagumo posterior distribution has the following generative model,
\begin{align}
    (a, b , c) &\overset{\mathrm{i.i.d.}}{\sim} \mathrm{Normal}(0, 1) \\
    \hat{r}_{t_k} &\overset{\mathrm{i.i.d.}}{\sim} \mathrm{Normal}(r_{t_k}, \sigma^2) \\
    \hat{v}_{t_k} &\overset{\mathrm{i.i.d.}}{\sim} \mathrm{Normal}(v_{t_k}, \sigma^2) ~~~\mathrm{for}~k=1,\ldots, n,
\end{align}
where $t_1,\ldots, t_n$ are evenly spaced time points in $[0, 10)$ and $r_t$ and $v_t$ obey the differential equations,
\begin{align}
  \dot{r}_t &= -\frac{v_t - a + b r_t}{c} \\
  \dot{v}_t &= cv_t - \frac{cv_t^3}{3} + cr_t.
\end{align}
In our experiments we give initial conditions $v_0 = 1$, $r_0 = -1$, $\sigma^2 = 1/4$, and $n=200$. The objective is to infer the posterior of $q=(a,b,c)$ given observations $\set{(\hat{v}_{t_k}, \hat{r}_{t_k})}_{k=1}^{200}$.

Here we emphasize a different aspect of the computation: its robustness to misspecification. As described in \cref{lagrangian-remarks:subsec:robustness}, the proof that the generalized leapfrog integrator is a volume-preserving transformation when used to integrate Hamiltonian vector fields is the symmetry of partial derivatives. We therefore consider the repercussions of an error in the implementation of partial derivatives that invalidates this requirement. To give further context to this experiment, we quote from \citet{dahlquist2003numerical}: ``In all numerical work, one must expect that clerical errors, errors in hand calculation, and misunderstandings will occur. [...] Most of the errors depend on the so-called human factor. [...] We  take up these sources of error in order to emphasize that both the person who carries out a calculation and the person who guides the work of others can plan so that such sources of error are not damaging.'' It seems to us, therefore, that the robustness of a Markov chain procedure to human misspecification is a most desirable circumstance.

As described in \citet{rmhmc}, computing the  gradient of the log-posterior, the Riemannian metric, and the Jacobian of the Riemannian metric requires us to compute sensitivity equations of the form $\frac{\mathrm{d}}{\mathrm{d}t} \frac{\partial r_t}{\partial q_i}$, $\frac{\mathrm{d}}{\mathrm{d}t} \frac{\partial v_t}{\partial q_i}$, $\frac{\mathrm{d}}{\mathrm{d}t} \frac{\partial^2 r_t}{\partial q_i \partial q_j}$, and $\frac{\mathrm{d}}{\mathrm{d}t} \frac{\partial^2 v_t}{\partial q_i \partial q_j}$ for $i=1,2, 3$ and $j=1,2,3$. We employ a Riemannian metric of the form,
\begin{align}
\mathbf{G}_{ij}(q) = \frac{1}{\sigma^2}\sum_{k=1}^{n} \paren{\frac{\partial v_{t_k}}{\partial q_i}\frac{\partial v_{t_k}}{\partial q_j} + \frac{\partial r_{t_k}}{\partial q_i}\frac{\partial r_{t_k}}{\partial q_j}} + \mathbf{1}\set{i=j}.
\end{align}
If any of these sensitivity equations are misspecified, then we will destroy the symmetry of partial derivatives required by the leapfrog integrator to preserve volume. However, the integrator used in LMC and ILMC is non-volume-preserving by design, and the appropriate volume correction is readily computable during sampling. We expect, therefore, that the volume correction employed in LMC and ILMC will provide robustness against these incorrectly computed quantities. We note that, in this case, there is no obvious mechanism by which to compute the required Jacobian determinant when the generalized leapfrog integrator is no longer a symplectic transformation.

In \cref{lagrangian-remarks:fig:fitzhugh-nagumo} we show the minimum ESS per second and Kolmogorov-Smirnov statistics for sampling from the Fitzhugh-Nagumo posterior. We observe that LMC outperforms ILMC in this example, but that both Lagrangian methods outperform HMC and RMHMC. In terms of ergodicity, all methods perform similarly. When we introduce changes to the sensitivity equations, we observe that RMHMC's ergodicity severely degrades. For HMC, for which higher-order sensitivities are not required (because of the Euclidean metric), ergodicity is identical to the previous case. However, for LMC and ILMC, in which the higher-order sensitivities are required to compute the Christoffel symbols, the fact that the sensitivities have been incorrectly specified has not noticeably degraded ergodicity. This is a virtue of the LMC and ILMC Markov chains that they are more robust to human errors of this variety.

\section{Conclusion}

This work has examined the numerical integrator for Lagrangian Monte Carlo (LMC). Motivated by the observation that LMC requires four Jacobian determinant evaluations, mechanisms by which this number may be reduced were examined. By inverting the sequence of integration so that position, rather than velocity, is updated twice, the number of Jacobian determinant evaluations in each step was reduced from four to two while still maintaining a fully explicit method. Empirical evaluations of this method were provided to show several situations in which the proposed integration strategy enjoys the best time-normalized performance among several alternatives. Moreover, it has been demonstrated in this work that the local error of the Lagrangian leapfrog (and its inverted counterpart) is third order, which improves the previously known order. Additionally, an important robustness property was characterized that LMC possesses and an illustration wherein human error will invalidate stationarity in RMHMC but not in LMC was given.

Methods of Bayesian inference that incorporate geometric understanding exhibit a pleasing aesthetic, yet they are burdened by numerical considerations -- among which are fixed points and cubic complexity -- that have limited their adoption. We hope that this research sparks renewed interest in mechanisms of improving these geometric methods.

\bigskip
\begin{center}
{\large\bf SUPPLEMENTARY MATERIAL}
\end{center}

\appendix

\section{Algorithms}\label{app:algorithms}

\begin{algorithm}[t!]
\caption{Pseudo-code implementing the generalized leapfrog integrator for computing approximate solutions to Hamilton's equations of motion. We assume a Hamiltonian in the form of \cref{lagrangian-remarks:eq:hamiltonian-form}.}
\label{lagrangian-remarks:alg:generalized-leapfrog}
\begin{algorithmic}[1]
  \STATE {\bf Input}: An integration step-size $\epsilon\in\R$, an initial position in phase space $(q, p)\in \R^m\times\R^m$.
  \STATE Compute an implicit half-step update to the momentum variable to obtain $\breve{p}$ using \cref{lagrangian-remarks:eq:generalized-leapfrog-momentum-i}.
  \STATE Compute an implicit full-step to the position variable to obtain $\tilde{q}$ using \cref{lagrangian-remarks:eq:generalized-leapfrog-position}.
  \STATE Compute an explicit half-step update to the momentum variable to obtain $\tilde{p}$ using \cref{lagrangian-remarks:eq:generalized-leapfrog-momentum-ii}.
  \STATE Compute the Jacobian determinant $J=1$.
  \STATE {\bf Output}: The updated position and momentum $(\tilde{q}, \tilde{p})$ and the Jacobian determinant $J$.
\end{algorithmic}
\end{algorithm}

\begin{algorithm}[t!]
\caption{Pseudo-code implementing the Lagrangian leapfrog integrator for computing approximate solutions to the Lagrangian equations of motion. We assume a Lagrangian in the form of \cref{lagrangian-remarks:eq:lagrangian-form}. The conversion from momentum to velocity and back again is included for notational simplicity; in practice, when applying multiple steps of this integrator, these conversions need only be computed for the first conversion from momentum to velocity and the last conversion from velocity to momentum.}
\label{lagrangian-remarks:alg:lagrangian-leapfrog}
\begin{algorithmic}[1]
  \STATE {\bf Input}: An integration step-size $\epsilon\in\R$, an initial position in phase space $(q, p)\in \R^m\times\R^m$.
  \STATE Use the inverse Legendre transform (\cref{lagrangian-remarks:def:inverse-legendre-transform}) to convert from momentum to velocity $v = \mathbf{G}^{-1}(q) p$.
  \STATE Compute an explicit half-step update to the velocity variable to obtain $\breve{v}$ using \cref{lagrangian-remarks:eq:lagrangian-velocity-i}.
  \STATE Compute an explicit full-step to the position variable to obtain $\tilde{q}$ using \cref{lagrangian-remarks:eq:lagrangian-position-update}.
  \STATE Compute an explicit half-step update to the velocity variable to obtain $\tilde{v}$ using \cref{lagrangian-remarks:eq:lagrangian-velocity-ii}.
  \STATE Use the Legendre transform (\cref{lagrangian-remarks:def:legendre-transform}) to convert from velocity to momentum $\tilde{p} = \mathbf{G}(\tilde{q}) \tilde{v}$.
  \STATE Compute the Jacobian determinant $J = \frac{\partial (\tilde{q},\tilde{p})}{\partial (q, p)}$ using \cref{lagrangian-remarks:eq:lagrangian-jacobian-determinant}.
  \STATE {\bf Output}: The updated position and momentum $(\tilde{q}, \tilde{p})$ and the Jacobian determinant $J$.
\end{algorithmic}
\end{algorithm}

\begin{algorithm}[t!]
\caption{Single step of the Hamiltonian / Lagrangian Monte Carlo transition kernel.}
\label{lagrangian-remarks:alg:hmc}
\begin{algorithmic}[1]
  \STATE {\bf Input}: The current state of the Markov chain $(q_n, p_n)$, an integration step-size $\epsilon\in\R$, a number of integration steps $k\in\mathbb{N}$, a numerical integrator $\Phi$ (as described in any of \cref{lagrangian-remarks:alg:generalized-leapfrog,lagrangian-remarks:alg:lagrangian-leapfrog,lagrangian-remarks:alg:inverted-lagrangian-leapfrog}).
  \STATE Resample the momentum $p_n\vert q_n \sim \mathrm{Normal}(0, \mathbf{G}(q_n))$.
  \STATE Initialize the Jacobian determinant $J=1$ and initial integration state $(\tilde{q},\tilde{p}) = (q_n, p_n)$.
  \FOR{$i = 1,\ldots, k$}
  \STATE Compute the proposal by applying the numerical integrator: $((\tilde{q}, \tilde{p}), \tilde{J}) = \Phi_{\epsilon}(\tilde{q}, \tilde{p})$.
  \STATE Update the Jacobian determinant $J = J \times \tilde{J}$
  \ENDFOR
  \STATE Apply the momentum flip operator (\cref{lagrangian-remarks:def:momentum-flip}) to obtain $(\tilde{q}, \tilde{p}) = \mathbf{F}(\tilde{q}, \tilde{p})$.
  \STATE Sample $u\sim\mathrm{Uniform}(0, 1)$ and compute the acceptance probability $a = \alpha((q_n, p_n), (\tilde{q}, \tilde{p}), J)$ using \cref{lagrangian-remarks:eq:acceptance-probability}.
  \IF{$u < a$}
  \STATE Accept the proposal $(q_{n+1}, p_{n+1}) = (\tilde{q}, \tilde{p})$.
  \ELSE
  \STATE Reject the proposal $(q_{n+1}, p_{n+1}) = (q_n, p_n)$.
  \ENDIF
  \STATE {\bf Output}: The next state of the Markov chain $(q_{n+1}, p_{n+1})$.
\end{algorithmic}
\end{algorithm}

\begin{algorithm}[t!]
\caption{Pseudo-code implementing the inverted Lagrangian leapfrog integrator for computing approximate solutions to the Lagrangian equations of motion. We assume a Lagrangian in the form of \cref{lagrangian-remarks:eq:lagrangian-form}. This version of leapfrog integration for Lagrangian dynamics requires only two Jacobian determinant computations in each step, as opposed to the four required by \cref{lagrangian-remarks:alg:lagrangian-leapfrog}.}
\label{lagrangian-remarks:alg:inverted-lagrangian-leapfrog}
\begin{algorithmic}[1]
  \STATE {\bf Input}: An integration step-size $\epsilon\in\R$, an initial position in phase space $(q, p)\in \R^m\times\R^m$.
  \STATE Use the inverse Legendre transform (\cref{lagrangian-remarks:def:inverse-legendre-transform}) to convert from momentum to velocity $v = \mathbf{G}^{-1}(q) p$.
  \STATE Compute an explicit half-step to the position variable to obtain $\breve{v}$ using \cref{lagrangian-remarks:eq:inverted-lagrangian-position-i}.
  \STATE Compute an explicit full-step update to the velocity variable to obtain $\tilde{v}$ using \cref{lagrangian-remarks:eq:inverted-lagrangian-velocity}.
  \STATE Compute an explicit half-step to the position variable to obtain $\tilde{q}$ using \cref{lagrangian-remarks:eq:inverted-lagrangian-position-ii}.
  \STATE Use the Legendre transform (\cref{lagrangian-remarks:def:legendre-transform}) to convert from velocity to momentum $\tilde{p} = \mathbf{G}(\tilde{q}) \tilde{v}$.
  \STATE Compute the Jacobian determinant $J = \frac{\partial (\tilde{q},\tilde{p})}{\partial (q, p)}$ using \cref{lagrangian-remarks:eq:lagrangian-jacobian-determinant}.
  \STATE {\bf Output}: The updated position and momentum $(\tilde{q}, \tilde{p})$ and the Jacobian determinant $J$.
\end{algorithmic}
\end{algorithm}

In this appendix we give pseudo-code implementations of algorithms featured in this work. In \cref{lagrangian-remarks:alg:generalized-leapfrog} we show the generalized leapfrog, which contrasts with the Lagrangian leapfrog in \cref{lagrangian-remarks:alg:lagrangian-leapfrog} in the presence of implicitly-defined integration steps. \Cref{lagrangian-remarks:alg:inverted-lagrangian-leapfrog} shows the Lagrangian integrator but with an inverted sequence of integration (updating position before velocity). In \cref{lagrangian-remarks:alg:hmc} we show an algorithmic implementation of a single-step of the Hamiltonian Monte Carlo Markov chain, which can be carried out using any of the integrators considered in this work.

\section{Expanded Preliminaries}

\subsection{Hamiltonian and Lagrangian Mechanics}\label{lagrangian-remarks:subsec:hamiltonian-lagrangian-mechanics}

\begin{definition}\label{lagrangian-remarks:def:hamiltonian-mechanics}
Let $H : \R^m\times\R^m\to\R$ be a smooth function, which we call the Hamiltonian. Hamilton's equations of motion are defined as the solutions to the initial value problem
\begin{align}
  \label{lagrangian-remarks:eq:hamiltonian-position} v_t &= \nabla_p H(q_t, p_t) \\
  \label{lagrangian-remarks:eq:hamiltonian-momentum} \dot{p}_t &= -\nabla_q H(q_t, p_t),
\end{align}
with $(q_0, p_0)$ a given initial position in phase space.
\end{definition}
\Cref{lagrangian-remarks:def:hamiltonian-mechanics} gives us a system of coupled first-order differential equations. Hamilton's equations of motion exhibit several key properties \citep{mechanics-and-symmetry}, which we summarize. 
\begin{theorem}\label{lagrangian-remarks:thm:hamiltonian-mechanics}
The Hamiltonian mechanics given in \cref{lagrangian-remarks:def:hamiltonian-mechanics} possess the following three properties:
\begin{enumerate}
    \item They preserve the Hamiltonian $\frac{\mathrm{d}}{\mathrm{d}t} H(q_t, p_t) = 0$.
    \item Denoting $\dot{z}_t = (v_t,\dot{p}_t)$ the equations of motion in phase space, Hamiltonian mechanics preserve volume in phase space: $\mathrm{div}(\dot{z}_t) = 0$.
    \item Under the conditions that $\nabla_q H(q, -p) = \nabla_q H(q, p)$ and $-\nabla_p H(q,p)= \nabla_pH(q, -p)$, the equations of motion are reversible via negation of the momentum variable.
\end{enumerate}
\end{theorem}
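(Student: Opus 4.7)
All three properties follow from direct calculations using Hamilton's equations \cref{lagrangian-remarks:eq:hamiltonian-position,lagrangian-remarks:eq:hamiltonian-momentum} and standard facts about smooth functions on $\R^{2m}$. The main work is just bookkeeping: in each case, I would differentiate the appropriate quantity, substitute the equations of motion, and collect terms.

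\textbf{Energy conservation (Property 1).} I would apply the multivariate chain rule to $t \mapsto H(q_t, p_t)$, obtaining $\frac{\mathrm{d}}{\mathrm{d}t} H(q_t, p_t) = \nabla_q H(q_t,p_t)^\top \dot{q}_t + \nabla_p H(q_t, p_t)^\top \dot{p}_t$. Substituting $\dot{q}_t = \nabla_p H$ and $\dot{p}_t = -\nabla_q H$ yields $\nabla_q H^\top \nabla_p H - \nabla_p H^\top \nabla_q H = 0$.

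\textbf{Volume preservation (Property 2).} The vector field on $\R^{2m}$ defined by the right-hand sides of \cref{lagrangian-remarks:eq:hamiltonian-position,lagrangian-remarks:eq:hamiltonian-momentum} has divergence $\mathrm{div}(\dot{z}_t) = \sum_{k=1}^m \frac{\partial}{\partial q^{(k)}} \paren{\frac{\partial H}{\partial p^{(k)}}} + \sum_{k=1}^m \frac{\partial}{\partial p^{(k)}} \paren{-\frac{\partial H}{\partial q^{(k)}}}$. Since $H$ is smooth, Clairaut's theorem gives equality of mixed partial derivatives $\frac{\partial^2 H}{\partial q^{(k)} \partial p^{(k)}} = \frac{\partial^2 H}{\partial p^{(k)} \partial q^{(k)}}$, so the two sums cancel termwise.

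\textbf{Reversibility (Property 3).} Given a trajectory $t \mapsto (q_t, p_t)$ satisfying Hamilton's equations with initial condition $(q_0, p_0)$, I would define the time-reversed, momentum-flipped trajectory $\tilde{q}_t = q_{-t}$ and $\tilde{p}_t = -p_{-t}$ and verify it solves the same equations with initial condition $(q_0, -p_0)$. Differentiating gives $\dot{\tilde{q}}_t = -\dot{q}_{-t} = -\nabla_p H(q_{-t}, p_{-t})$ and $\dot{\tilde{p}}_t = \dot{p}_{-t} = -\nabla_q H(q_{-t}, p_{-t})$. Using the assumed parities $\nabla_p H(q, -p) = -\nabla_p H(q, p)$ and $\nabla_q H(q, -p) = \nabla_q H(q, p)$, I rewrite the right-hand sides in terms of $(\tilde{q}_t, \tilde{p}_t) = (q_{-t}, -p_{-t})$, obtaining $\dot{\tilde{q}}_t = \nabla_p H(\tilde{q}_t, \tilde{p}_t)$ and $\dot{\tilde{p}}_t = -\nabla_q H(\tilde{q}_t, \tilde{p}_t)$. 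Uniqueness of solutions to the ODE then concludes that flipping the momentum and integrating forward traces the original trajectory in reverse.

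\textbf{Anticipated difficulty.} None of the three steps presents a real obstacle; the only subtlety is in Property 3, where the argument must be written carefully so that the parity hypotheses are applied at the correct point to convert $-\nabla_p H(q_{-t}, p_{-t})$ into $+\nabla_p H(\tilde{q}_t, \tilde{p}_t)$ and likewise for the momentum equation. Everything else is a one-line application of the chain rule or Clairaut's theorem.
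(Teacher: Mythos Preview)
Your proposal is correct and matches the paper's proof almost line for line: the same chain-rule cancellation for energy conservation, the same Clairaut argument for zero divergence, and the same verification that the time-reversed, momentum-flipped trajectory again solves Hamilton's equations. The only cosmetic difference is that the paper parameterizes the reversed trajectory as $(\tilde{q}_t,\tilde{p}_t)=(q_{\tau-t},-p_{\tau-t})$ on $[0,\tau]$ rather than your $(q_{-t},-p_{-t})$, and it leaves the appeal to ODE uniqueness implicit where you state it explicitly.
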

A proof is given in \cref{lagrangian-remarks:app:proofs-concerning-mechanics}.
In the special case when $U(q) = 0$ in \cref{lagrangian-remarks:eq:hamiltonian-form} for all $q\in\R^m$, Hamilton's equations of motion produce geodesic motion on the Riemannian manifold $(\R^m, \mathbf{G})$, where $\mathbf{G}$ assumes the role of the Riemannian metric \citep{Calin2004GeometricMO}.

\begin{definition}
  Given a Riemannian Hamiltonian (\cref{lagrangian-remarks:def:riemannian-hamiltonian}), define the associated {\it Lagrangian} function by,
  \begin{align}
    \label{lagrangian-remarks:eq:lagrangian-form} L(q, v) &= \tilde{K}(q, v) - U(q),
  \end{align} 
  where $\tilde{K}(q, v) = \frac{1}{2}v^\top \mathbf{G}(q) v$.
\end{definition}
Hamiltonian mechanics are related to Lagrangian dynamics by the Legendre transform which converts between momentum and velocity \citep{marsden_west_2001}.
\begin{definition}\label{lagrangian-remarks:def:legendre-transform}
  Let $L : \R^m\times\R^m\to\R$ be a Lagrangian of the form in \cref{lagrangian-remarks:eq:lagrangian-form}. The {\it Legendre transform} relates the momentum and velocity according to $p = \frac{\partial L}{\partial v}(q, v) = \mathbf{G}(q) v$. 
\end{definition}
\begin{definition}\label{lagrangian-remarks:def:inverse-legendre-transform}
  Let $H:\R^m\times\R^m\to\R$ be a Hamiltonian of the form in \cref{lagrangian-remarks:eq:hamiltonian-form}. The {\it inverse Legendre transform} relates velocity and momentum according to $v = \frac{\partial H}{\partial p}(q, p) = \mathbf{G}^{-1}(q)p$.
\end{definition}
The Lagrangian then determines equations of motion in accordance with the following physical principle.
\begin{definition}\label{lagrangian-remarks:def:hamiltons-principle}
{\it Hamilton's principle} states that the equations of motion $(q_t, v_t)$ over an interval of time $[a, b]$, with known boundary conditions $q_a$ and $q_b$, are solutions of the variational equation
\begin{align}
    \frac{\delta \mathcal{S}[q_t]}{\delta q_t} = 0
\end{align}
where
\begin{align}
    \mathcal{S}[q_t] = \int_a^b L(q_t, v_t) ~\mathrm{d}t
\end{align}
where $q_t$ and $v_t$ are related by $v_t = \frac{\mathrm{d}}{\mathrm{d}t} q_t$.
\end{definition}
Hamilton's principle states that equations of motion (as specified by $q$ and $v$) should {\it extremize} the Lagrangian subject to the boundary conditions on $q_a$ and $q_b$. 
\begin{theorem}\label{lagrangian-remarks:thm:euler-lagrange}
Under Hamilton's principle (\cref{lagrangian-remarks:def:hamiltons-principle}), the equations of motion $(q_t, v_t)$ must be solutions of the {\it Euler-Lagrange equation}
\begin{align}
  \nabla_q L(q_t, v_t) = \frac{\mathrm{d}}{\mathrm{d}t} \nabla_{v} L(q_t, v_t).
\end{align}
\end{theorem}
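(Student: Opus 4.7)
The plan is to apply the standard calculus-of-variations argument to the action functional $\mathcal{S}[q_t] = \int_a^b L(q_t, v_t)\,\mathrm{d}t$ subject to fixed endpoints $q_a$ and $q_b$. I would begin by introducing a smooth one-parameter family of variations $q_t^\varepsilon = q_t + \varepsilon\,\eta_t$, where $\eta : [a,b] \to \R^m$ is an arbitrary smooth function satisfying the boundary conditions $\eta_a = \eta_b = 0$ (so that the perturbed path still respects the prescribed endpoints). The associated velocity is $v_t^\varepsilon = \dot{q}_t + \varepsilon\,\dot{\eta}_t$, preserving the relation $v_t^\varepsilon = \tfrac{\mathrm{d}}{\mathrm{d}t} q_t^\varepsilon$ required in \cref{lagrangian-remarks:def:hamiltons-principle}.

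Next I would differentiate $\mathcal{S}[q_t^\varepsilon]$ with respect to $\varepsilon$ at $\varepsilon=0$. By the chain rule this yields
\begin{align}
\left.\frac{\mathrm{d}}{\mathrm{d}\varepsilon}\mathcal{S}[q_t^\varepsilon]\right|_{\varepsilon=0}
= \int_a^b \Bigl[\nabla_q L(q_t, v_t)\cdot \eta_t + \nabla_v L(q_t, v_t)\cdot \dot{\eta}_t\Bigr]\,\mathrm{d}t.
\end{align}
The key manipulation is then integration by parts on the second summand, which converts $\dot{\eta}_t$ into $\eta_t$ at the cost of a total-derivative term:
\begin{align}
\int_a^b \nabla_v L(q_t, v_t)\cdot \dot{\eta}_t \,\mathrm{d}t
= \Bigl[\nabla_v L(q_t, v_t)\cdot \eta_t\Bigr]_a^b - \int_a^b \frac{\mathrm{d}}{\mathrm{d}t}\nabla_v L(q_t, v_t)\cdot \eta_t\,\mathrm{d}t.
\end{align}
The boundary term vanishes because $\eta_a = \eta_b = 0$, so the first variation reduces to
\begin{align}
\frac{\delta \mathcal{S}[q_t]}{\delta q_t}\cdot \eta_t \;=\; \int_a^b \Bigl[\nabla_q L(q_t, v_t) - \frac{\mathrm{d}}{\mathrm{d}t}\nabla_v L(q_t, v_t)\Bigr]\cdot \eta_t\,\mathrm{d}t.
\end{align}

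To finish, I would invoke Hamilton's principle, which states this variation vanishes for every admissible $\eta$. An application of the fundamental lemma of the calculus of variations (if a continuous function integrated against every compactly supported test function gives zero, then the function itself is identically zero) then forces the bracketed integrand to vanish pointwise, producing the Euler--Lagrange equation $\nabla_q L(q_t,v_t) = \tfrac{\mathrm{d}}{\mathrm{d}t}\nabla_v L(q_t,v_t)$.

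The argument is largely mechanical; the only subtle point I would flag is the justification for interchanging the $\varepsilon$-derivative with the integral (which follows from smoothness of $L$ and compactness of $[a,b]$) and the careful use of the boundary conditions on $\eta$ to eliminate the boundary term after integrating by parts. The fundamental lemma itself is standard, but it is the step where the pointwise Euler--Lagrange equation genuinely emerges from the integrated variational identity, so I would state it explicitly rather than leave it implicit.
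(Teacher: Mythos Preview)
Your derivation is the standard calculus-of-variations argument and is correct. The paper, however, does not actually supply a proof of this theorem: it is stated as a classical result (with a citation to \citet{mechanics-and-symmetry} for the equivalence with Hamiltonian mechanics) and used without further justification. So there is nothing to compare against; your proposal simply fills in a proof the paper omits.
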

The equations of motion from \cref{lagrangian-remarks:thm:euler-lagrange} (or \cref{lagrangian-remarks:def:hamiltons-principle}) are, in fact, equivalent to the motion produced under Hamiltonian mechanics in \cref{lagrangian-remarks:def:hamiltonian-mechanics} \citep{mechanics-and-symmetry}. For Lagrangians of the form in \cref{lagrangian-remarks:eq:lagrangian-form}, the $k$-th element of the acceleration is,
\begin{align}
  \label{lagrangian-remarks:eq:lagrangian-acceleration} a_t^{(k)} = -\sum_{i=1}^m\sum_{j=1}^m \Gamma^k_{ij}(q_t) v_t^{(i)} v_t^{(j)} - \sum_{l=1}^m \mathbf{G}^{-1}_{kl}(q_t) \frac{\partial U}{\partial q^{(l)}} (q_t),
\end{align}
where $\Gamma^k_{ij}$ are the Christoffel symbols (\cref{lagrangian-remarks:def:christoffel-symbols}).
We conclude this section by noting that $\Omega$ in \cref{lagrangian-remarks:eq:omega} enjoys the following properties:
\begin{proposition}\label{lagrangian-remarks:prop:omega-properties}
The function $\Omega : \R\times\R^m\times\R^m\to\R^{m\times m}$ in \cref{lagrangian-remarks:eq:omega} satisfies 
  \begin{align}
      \label{lagrangian-remarks:eq:omega-symmetry} \Omega(\epsilon, q, v)\breve{v} &= \Omega(\epsilon, q, \breve{v}) v \\
      \label{lagrangian-remarks:eq:omega-derivative} \frac{\partial}{\partial v} (\Omega(\epsilon, q, v)\breve{v}) &= \Omega(\epsilon, q, \breve{v}).
  \end{align}
\end{proposition}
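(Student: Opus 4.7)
The plan is to prove both identities by straightforward index computation, leveraging the symmetry $\Gamma^k_{ij}(q) = \Gamma^k_{ji}(q)$ of the Christoffel symbols in their lower indices, which is noted immediately after \cref{lagrangian-remarks:def:christoffel-symbols}. I would first expand each side of \cref{lagrangian-remarks:eq:omega-symmetry} in components and then use the second identity as an immediate consequence of the first, so the only real work is a dummy-index manipulation.

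For the first identity, I would write
\begin{align*}
(\Omega(\epsilon,q,v)\breve{v})^{(i)} &= \sum_{j=1}^m \Omega_{ij}(\epsilon,q,v)\breve{v}^{(j)} = \frac{\epsilon}{2}\sum_{j,k=1}^m \Gamma^i_{kj}(q)\, v^{(k)} \breve{v}^{(j)}.
\end{align*}
The key step is that, using $\Gamma^i_{kj}(q) = \Gamma^i_{jk}(q)$ and relabeling the dummy indices $j \leftrightarrow k$, this double sum equals
\begin{align*}
\frac{\epsilon}{2}\sum_{j,k=1}^m \Gamma^i_{kj}(q)\, \breve{v}^{(k)} v^{(j)} = \sum_{j=1}^m \Omega_{ij}(\epsilon,q,\breve{v})\, v^{(j)} = (\Omega(\epsilon,q,\breve{v})v)^{(i)},
\end{align*}
which is the claim. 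The relabeling is the only nontrivial step, and it depends solely on the lower-index symmetry of the Christoffel symbols.

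For the second identity, I would avoid recomputing anything and instead invoke the first identity directly: since $\Omega(\epsilon,q,v)\breve{v} = \Omega(\epsilon,q,\breve{v})v$, the map $v \mapsto \Omega(\epsilon,q,v)\breve{v}$ has been rewritten as the linear map $v \mapsto \Omega(\epsilon,q,\breve{v})v$, whose coefficient matrix $\Omega(\epsilon,q,\breve{v})$ does not depend on $v$. Its Jacobian with respect to $v$ is therefore $\Omega(\epsilon,q,\breve{v})$, yielding \cref{lagrangian-remarks:eq:omega-derivative}.

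I do not anticipate any real obstacle here: the whole content is the lower-index symmetry of $\Gamma^k_{ij}$, and both identities follow by routine rearrangement. The only pitfall is being careful about which index in $\Omega_{ij}$ is summed against the contracted vector (the lower index $j$) versus which is summed against the vector encoded inside $\Omega$ (the index $k$ in the definition), so that the relabeling is applied to the correct pair.
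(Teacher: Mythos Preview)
Your proposal is correct and follows essentially the same approach as the paper: an index computation using the lower-index symmetry $\Gamma^i_{kj}=\Gamma^i_{jk}$ establishes \cref{lagrangian-remarks:eq:omega-symmetry}, and \cref{lagrangian-remarks:eq:omega-derivative} is then noted as an immediate corollary. Your explanation of the corollary (rewriting the map as linear in $v$ via the first identity) is slightly more explicit than the paper's one-line remark, but the content is identical.
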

\begin{proof}
The result in \cref{lagrangian-remarks:eq:omega-symmetry} can be seen from
\begin{align}
    (\Omega(\epsilon, q, v)\breve{v})_i &= \frac{\epsilon}{2} \sum_{j=1}^m \sum_{k=1}^m \Gamma^i_{kj}(q) v^{(k)} \breve{v}^{(j)} \\
    &= \frac{\epsilon}{2} \sum_{j=1}^m \sum_{k=1}^m \Gamma^i_{jk}(q) v^{(k)} \breve{v}^{(j)} \\
    &= \frac{\epsilon}{2} \sum_{k=1}^m \sum_{j=1}^m \Gamma^i_{jk}(q) v^{(k)} \breve{v}^{(j)} \\
    &= (\Omega(\epsilon, q, \breve{v})v)_i.
\end{align}
\Cref{lagrangian-remarks:eq:omega-derivative} then follows as an immediate corollary.
\end{proof}
Like Hamiltonian mechanics, Lagrangian dynamics conserve the Hamiltonian (when $p_t = \mathbf{G}(q_t) v_t$) and are reversible. However, Lagrangian dynamics do not conserve volume in $(q, v)$-space.

\subsection{Background on Numerical Integrators}\label{lagrangian-remarks:subsec:numerical-integrators}

The Hamiltonian equations of motion in \cref{lagrangian-remarks:eq:hamiltonian-position,lagrangian-remarks:eq:hamiltonian-momentum} and the Lagrangian motion described in \cref{lagrangian-remarks:eq:lagrangian-acceleration} rarely have closed-form solutions. Therefore, it is necessary to investigate methods of numerical integration to produce approximate solutions to these initial value problems. In the following section, we review key ideas from \citet{Hairer:1250576}.
\begin{definition}\label{lagrangian-remarks:def:initial-value-problem}
  Let $g : \R^m\to\R^m$. {\it The solution to an initial value problem} is a function $z_{(\cdot)} : \R\to\R^m$ for which $\frac{\mathrm{d}}{\mathrm{d}t} z_t = g(z_t)$ and for which the {\it initial value} $z_0\in\R^m$ is known. In this case $g$ is called a {\it (time-homogenous) vector field}.
\end{definition}
\begin{definition}\label{lagrangian-remarks:def:integrator-order}
  Let $z_t$ be the solution to an initial value problem (\cref{lagrangian-remarks:def:initial-value-problem}) with initial value $z_0$. A {\it numerical integrator with step-size $\epsilon$} is a mapping $\Phi_{\epsilon} : \R^m\to\R^m$ and is said to have order $p$ if $\Vert \Phi_{\epsilon}(z_0) - z_{\epsilon}\Vert = \mathcal{O}(\epsilon^{p+1})$.
\end{definition}
\begin{definition}\label{lagrangian-remarks:def:self-adjoint}
  The adjoint of a numerical method $\Phi_\epsilon : \R^m \to\R^m$ is defined by the relation $\Phi_\epsilon^*(z) = \Phi^{-1}_{-\epsilon}(z)$. The numerical method $\Phi_\epsilon$ is said to be self-adjoint if $\Phi^*_\epsilon = \Phi_\epsilon$.
\end{definition}
\begin{theorem}[\citet{Hairer:1250576}]\label{lagrangian-remarks:thm:self-adjoint-even-order}
  Consider an initial value problem $\dot{z}_t = g(z_t)$ with initial condition $z_0 \in \R^m$. Let $\Phi_\epsilon$ be a one-step numerical integrator of (maximal) order $r\in\mathbb{N}$. If $\Phi_\epsilon$ is self-adjoint, then $r$ is even.
\end{theorem}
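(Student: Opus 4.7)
The plan is to run the standard local-error expansion argument: derive an asymptotic form for the leading error coefficient of $\Phi_\epsilon^*$ in terms of that of $\Phi_\epsilon$, then observe that self-adjointness forces this coefficient to vanish whenever $r$ is odd, contradicting maximality.

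First I would write the local error of $\Phi_\epsilon$ in the form
\begin{align*}
  \Phi_\epsilon(z) = \phi_\epsilon(z) + C(z)\,\epsilon^{r+1} + \mathcal{O}(\epsilon^{r+2}),
\end{align*}
where $\phi_\epsilon$ is the exact $\epsilon$-flow of $g$ and $C(z)$ is the principal error term. Since $r$ is by hypothesis the \emph{maximal} order of $\Phi_\epsilon$, the coefficient $C$ is not identically zero.

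Next I would derive the analogous expansion for the adjoint $\Phi_\epsilon^* = \Phi_{-\epsilon}^{-1}$ (\cref{lagrangian-remarks:def:self-adjoint}). Replacing $\epsilon$ by $-\epsilon$ in the expansion above yields
\begin{align*}
  \Phi_{-\epsilon}(w) = \phi_{-\epsilon}(w) + (-1)^{r+1} C(w)\,\epsilon^{r+1} + \mathcal{O}(\epsilon^{r+2}).
\end{align*}
Setting $\Phi_{-\epsilon}(w) = z$ and writing $w = \phi_\epsilon(z) + e$ with $e$ anticipated to be of order $\epsilon^{r+1}$, I would linearize $\phi_{-\epsilon}$ about $\phi_\epsilon(z)$, use $\phi_{-\epsilon}\circ \phi_\epsilon = \mathrm{Id}$, and observe that the Jacobian $\phi_{-\epsilon}'(\phi_\epsilon(z)) = \mathrm{Id} + \mathcal{O}(\epsilon)$ and $C(\phi_\epsilon(z)) = C(z) + \mathcal{O}(\epsilon)$. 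Solving for $e$ then gives
\begin{align*}
  \Phi_\epsilon^*(z) = \phi_\epsilon(z) + (-1)^{r} C(z)\,\epsilon^{r+1} + \mathcal{O}(\epsilon^{r+2}).
\end{align*}

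Finally, self-adjointness $\Phi_\epsilon = \Phi_\epsilon^*$ combined with the two expansions forces $C(z) = (-1)^r C(z)$ for all $z$. If $r$ were odd this would give $C\equiv 0$, contradicting the assumption that $r$ is the maximal order. Therefore $r$ must be even. The main obstacle will be justifying the inversion step carefully enough to conclude that the leading coefficient in the expansion of $\Phi_\epsilon^*$ is exactly $(-1)^r C(z)$ and not merely $(-1)^r C(z) + (\text{derivative corrections})$; the key point is that all derivative corrections (arising from evaluating $\phi_{-\epsilon}'$ and $C$ at $\phi_\epsilon(z)$ rather than $z$) contribute only at order $\epsilon^{r+2}$ or higher, since they multiply quantities already of size $\epsilon^{r+1}$ by factors of size $\epsilon$.
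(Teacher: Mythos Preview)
Your argument is correct and is precisely the standard proof of this result. Note, however, that the paper does not supply its own proof of \cref{lagrangian-remarks:thm:self-adjoint-even-order}: the theorem is quoted directly from \citet{Hairer:1250576} and used as a black box in the proof of \cref{lagrangian-remarks:prop:third-order-local-error}. The expansion-and-inversion argument you outline is exactly the one given in that reference, so there is nothing to compare---you have simply reconstructed the cited proof. Your handling of the inversion step is fine: the derivative corrections from $\phi_{-\epsilon}'(\phi_\epsilon(z)) = \mathrm{Id} + \mathcal{O}(\epsilon)$ and $C(\phi_\epsilon(z)) = C(z) + \mathcal{O}(\epsilon)$ indeed land at order $\epsilon^{r+2}$, so the leading coefficient of $\Phi_\epsilon^*$ is exactly $(-1)^r C(z)$ as claimed.
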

\begin{proposition}\label{lagrangian-remarks:prop:composition-order}
Let $g : \R^m\to\R^m$ be a time-homogenous vector field. Let $\Xi:\R^m\to\R^m$ be a diffeomorphism. Let $z_t$ be the solution to the initial value problem $\dot{z}_t = g(z_t)$ given $z_0$. Let $\Phi_\epsilon : \R^m\to\R^m$ be the flow map of $z_\epsilon$ and suppose that $\hat{\Phi}_\epsilon$ is a $p$-th order approximation of $\Phi_\epsilon$. Then $\Xi\circ \hat{\Phi}_\epsilon$ is a $p$-th order approximation of $\Xi\circ \Phi_\epsilon$.
\end{proposition}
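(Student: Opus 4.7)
My plan is to prove this by exploiting the local Lipschitz continuity of the diffeomorphism $\Xi$. The definition of $p$-th order approximation (\cref{lagrangian-remarks:def:integrator-order}) gives us a bound on $\|\hat{\Phi}_\epsilon(z_0) - \Phi_\epsilon(z_0)\|$, and composing with any sufficiently regular map should preserve this bound up to a multiplicative constant.

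First, I would observe that for $\epsilon$ in some small interval $[0, \epsilon_0]$, both iterates $\Phi_\epsilon(z_0)$ and $\hat{\Phi}_\epsilon(z_0)$ lie in a bounded (hence compact) neighborhood $K$ of $z_0$: for $\Phi_\epsilon(z_0)$ this follows from continuity of the flow in $t$, and for $\hat{\Phi}_\epsilon(z_0)$ this follows from the hypothesis $\|\hat{\Phi}_\epsilon(z_0) - \Phi_\epsilon(z_0)\| = \mathcal{O}(\epsilon^{p+1})$ combined with the triangle inequality. Next, since $\Xi$ is a diffeomorphism it is in particular continuously differentiable, so its Jacobian $\mathrm{D}\Xi$ is bounded in operator norm on the compact set $K$, say by $L := \sup_{z\in K}\|\mathrm{D}\Xi(z)\|$. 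The mean value inequality then yields
\begin{align}
\|\Xi(\hat{\Phi}_\epsilon(z_0)) - \Xi(\Phi_\epsilon(z_0))\| \leq L\,\|\hat{\Phi}_\epsilon(z_0) - \Phi_\epsilon(z_0)\|
\end{align}
for every $\epsilon \in [0, \epsilon_0]$.

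Finally, applying the hypothesis that $\hat{\Phi}_\epsilon$ is a $p$-th order approximation of $\Phi_\epsilon$, the right-hand side is $\mathcal{O}(\epsilon^{p+1})$, which by \cref{lagrangian-remarks:def:integrator-order} is exactly the statement that $\Xi\circ\hat{\Phi}_\epsilon$ is a $p$-th order approximation of $\Xi\circ\Phi_\epsilon$. The argument requires no assumption beyond $\Xi$ being $C^1$ on a neighborhood of $z_0$, which is automatic for a diffeomorphism.

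There is no real obstacle here; the only subtlety is ensuring that both $\hat{\Phi}_\epsilon(z_0)$ and $\Phi_\epsilon(z_0)$ actually lie in a common compact set on which a uniform Lipschitz constant for $\Xi$ exists, and this is guaranteed by the fact that $\epsilon$ is taken small. In the intended application to the LMC integrator, $\Xi$ is the map $(q, v) \mapsto (q, \mathbf{G}(q)v)$, which is a $C^\infty$ diffeomorphism under the positive-definiteness assumption on $\mathbf{G}$, so the hypothesis is readily met.
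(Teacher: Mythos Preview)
Your proof is correct and takes a genuinely different route from the paper's. The paper argues via Taylor expansion: from the $p$-th order accuracy it infers that the derivatives $\frac{\mathrm{d}^k}{\mathrm{d}\epsilon^k}\hat{\Phi}_\epsilon(z_0)\big|_{\epsilon=0}$ and $\frac{\mathrm{d}^k}{\mathrm{d}\epsilon^k}\Phi_\epsilon(z_0)\big|_{\epsilon=0}$ agree for $k=0,\ldots,p$, and then observes (essentially Fa\`a di Bruno) that the Taylor coefficients of $\Xi\circ f(\epsilon)$ up to order $p$ are determined by those of $f$ up to order $p$, so the compositions also agree to order $p$. Your argument bypasses Taylor expansions entirely and uses only the local Lipschitz property of a $C^1$ map together with the mean value inequality. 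This is more elementary and imposes weaker regularity: it needs only $\Xi\in C^1$ near $z_0$, whereas the paper's approach implicitly requires enough smoothness in $\epsilon$ (of both $\Phi_\epsilon$ and $\hat{\Phi}_\epsilon$) to pass from the $\mathcal{O}(\epsilon^{p+1})$ norm bound to matching of the first $p$ Taylor coefficients, as well as enough smoothness of $\Xi$ to expand the composition. The paper's approach does have the minor advantage of making explicit that the Taylor coefficients of the composed map match, which can be informative in other contexts, but for the bare conclusion stated here your Lipschitz argument is both shorter and sharper.
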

A proof is given in \cref{lagrangian-remarks:app:proofs-concerning-numerical-order}.

\begin{proposition}
  The generalized leapfrog integrator is a second-order (\cref{lagrangian-remarks:def:integrator-order}), self-adjoint (\cref{lagrangian-remarks:def:self-adjoint}) numerical method.
\end{proposition}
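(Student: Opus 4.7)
The plan is to exhibit the generalized leapfrog as a symmetric composition $\Psi^{*}_{\epsilon} \circ \Psi_{\epsilon}$ of a one-step map with its adjoint, so that self-adjointness is automatic and order-two accuracy is forced by \cref{lagrangian-remarks:thm:self-adjoint-even-order}.

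First I would define the half-step $\Psi_{\epsilon} : (q, p) \mapsto (\breve{q}, \breve{p})$ by the pair of relations $\breve{p} = p - \tfrac{\epsilon}{2} \nabla_{q} H(q, \breve{p})$ (implicit in $\breve{p}$) and $\breve{q} = q + \tfrac{\epsilon}{2} \nabla_{p} H(q, \breve{p})$ (explicit once $\breve{p}$ is known). The implicit function theorem applied to the first relation guarantees that $\Psi_{\epsilon}$ is smooth and well-defined in a neighborhood of any point of phase space for $|\epsilon|$ sufficiently small. Next I would form the adjoint prescribed by \cref{lagrangian-remarks:def:self-adjoint} — replace $\epsilon$ by $-\epsilon$ and invert — to obtain $\Psi^{*}_{\epsilon} : (\breve{q}, \breve{p}) \mapsto (\tilde{q}, \tilde{p})$ given by $\tilde{q} = \breve{q} + \tfrac{\epsilon}{2} \nabla_{p} H(\tilde{q}, \breve{p})$ (implicit in $\tilde{q}$) and $\tilde{p} = \breve{p} - \tfrac{\epsilon}{2} \nabla_{q} H(\tilde{q}, \breve{p})$ (explicit). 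Composing and collapsing the two consecutive position half-steps into the averaged form $\tilde{q} = q + \tfrac{\epsilon}{2}(\nabla_{p} H(q, \breve{p}) + \nabla_{p} H(\tilde{q}, \breve{p}))$ recovers exactly \cref{lagrangian-remarks:eq:generalized-leapfrog-momentum-i,lagrangian-remarks:eq:generalized-leapfrog-position,lagrangian-remarks:eq:generalized-leapfrog-momentum-ii} with the Riemannian Hamiltonian of \cref{lagrangian-remarks:def:riemannian-hamiltonian}.

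Self-adjointness of the composite then follows from the identity $(A \circ B)^{*} = B^{*} \circ A^{*}$ together with the involutivity $(\Psi^{*}_{\epsilon})^{*} = \Psi_{\epsilon}$, since
\begin{equation*}
(\Psi^{*}_{\epsilon} \circ \Psi_{\epsilon})^{*} = \Psi^{*}_{\epsilon} \circ (\Psi^{*}_{\epsilon})^{*} = \Psi^{*}_{\epsilon} \circ \Psi_{\epsilon}.
\end{equation*}
For the order, a short Taylor expansion of the defining relations of $\Psi_{\epsilon}$ shows that it agrees with the exact time-$\epsilon/2$ Hamiltonian flow through terms linear in $\epsilon$, so $\Psi_{\epsilon}$ has at least first-order local error. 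The composition $\Psi^{*}_{\epsilon} \circ \Psi_{\epsilon}$ is therefore at least first-order consistent with the time-$\epsilon$ flow, and combining this with self-adjointness and \cref{lagrangian-remarks:thm:self-adjoint-even-order} forces its maximal order to be even and hence at least two.

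The main obstacle I expect is the bookkeeping in identifying the decomposition: one must recognize that the averaged position update \cref{lagrangian-remarks:eq:generalized-leapfrog-position} is the concatenation of an explicit half-step (the last line of $\Psi_{\epsilon}$) and an implicit half-step (the first line of $\Psi^{*}_{\epsilon}$), and that \cref{lagrangian-remarks:eq:generalized-leapfrog-momentum-i,lagrangian-remarks:eq:generalized-leapfrog-momentum-ii} are the corresponding implicit and explicit Euler-type momentum halves (with the understanding that the second kinetic term in \cref{lagrangian-remarks:eq:generalized-leapfrog-momentum-ii} is evaluated at $\tilde{q}$, consistent with the symmetric structure). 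Once this correspondence is made explicit, no further Taylor computation beyond verifying first-order consistency of a single implicit Euler half-step is required.
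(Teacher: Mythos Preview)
Your argument is correct and follows the standard route in \citet{Hairer:1250576}: write the generalized leapfrog as the composition of the symplectic-Euler half-step $\Psi_\epsilon$ with its adjoint, so that self-adjointness is automatic from $(\Psi^*_\epsilon\circ\Psi_\epsilon)^*=\Psi^*_\epsilon\circ\Psi_\epsilon$, and then use first-order consistency of each half-step together with \cref{lagrangian-remarks:thm:self-adjoint-even-order} to conclude order two. The bookkeeping you flag---that \cref{lagrangian-remarks:eq:generalized-leapfrog-position} is the concatenation of an explicit and an implicit position half-step, and that the momentum updates are the implicit/explicit Euler halves---is exactly the right identification.

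Note, however, that the paper does \emph{not} supply its own proof of this proposition: it is stated as background, with the implicit reference being \citet{Hairer:1250576}. So there is no ``paper's proof'' to compare against beyond that citation, and your outline is precisely the textbook argument one would expect. Your parenthetical about the kinetic term in \cref{lagrangian-remarks:eq:generalized-leapfrog-momentum-ii} needing to be evaluated at $\tilde q$ for the symmetric decomposition to match is also well-spotted; as written in the paper that equation appears to carry a typo (the $\mathbf{G}^{-1}$ and $\partial\mathbf{G}/\partial q^{(k)}$ factors should be at $\tilde q$, not $q$), and with that correction your decomposition reproduces the integrator exactly.
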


\subsection{Hamiltonian and Lagrangian Monte Carlo}\label{lagrangian-remarks:subsec:hamiltonian-lagrangian-monte-carlo}

Our objective in Bayesian inference is to draw samples from the distribution whose density is $\pi(q) \propto \exp(\mathcal{L}(q))$. We now review basic concepts from Markov chain Monte Carlo.
\begin{definition}
  A {\it Markov chain transition kernel} is a map $K : \R^m\times\mathfrak{B}(\R^m) \to [0,1]$ satisfying (i) for every $x\in \R^m$ the map $A\mapsto K(x, A)$ is a probability measure and (ii) for every $A \in\mathfrak{B}(\R^m)$, the map $x\mapsto K(x, A)$ is measurable. Given a Markov chain transition kernel $K$, a {\it Markov chain} is defined inductively by $x_{n+1} \vert x_n \sim K(x_n, \cdot)$.
\end{definition}
HMC accomplishes this by artificially expanding the distribution to incorporate a momentum variable. Defining $U(q) = -\mathcal{L}(q) + \frac{1}{2} \log\mathrm{det}(\mathbf{G}(q))$, let $H(q, p)$ be the Hamiltonian given in \cref{lagrangian-remarks:eq:hamiltonian-form} and observe that the density $\pi(q, p) \propto \exp(-H(q, p))$ has $\pi(q) = \int_{\R^m} \pi(q, p) ~\mathrm{d}p$ and $p \vert q \sim\mathrm{Normal}(0, \mathbf{G}(q))$. In order to unify methods of Bayesian inference based on either the generalized leapfrog integrator (\cref{lagrangian-remarks:def:generalized-leapfrog}) or the Lagrangian leapfrog (\cref{lagrangian-remarks:def:lagrangian-leapfrog}) under one framework, we now introduce the Markov chain transition kernel based on smooth involutions.
\begin{definition}[\citet{DBLP:conf/icml/NeklyudovWEV20}]\label{lagrangian-remarks:def:involutive-monte-carlo}
  Let $\Phi : \R^m\times\R^m\to\R^m\times\R^m$ be a smooth involution (i.e. $\Phi=\Phi^{-1}$). Let $\pi:\R^m\times\R^m\to\R_+$ be a probability density with respect to Lebesgue measure. Then we define the Markov chain transition kernel of {\it involutive Monte Carlo} by
  \begin{align}
  \begin{split}
      K((q, p), (A, B)) &= \alpha((q, p), (\tilde{q},\tilde{p}), J)~\mathbf{1}\set{(\tilde{q}, \tilde{p})\in (A, B)} \\
      &\qquad + \paren{1-\alpha((q, p), (\tilde{q},\tilde{p}), J)} ~\mathbf{1}\set{(q, p)\in (A, B)}
  \end{split} \\
  \label{lagrangian-remarks:eq:acceptance-probability} \alpha((q, p), (\tilde{q}, \tilde{p}), J) &= \min\set{1, \frac{\exp(-H(\tilde{q}, \tilde{p}))}{\exp(-H(q, p))}\cdot J} \\
  J &= \abs{\mathrm{det}\paren{\frac{\partial (\tilde{q}, \tilde{p})}{\partial (q, p)}}},
  \end{align}
  where $(A, B)\in\mathfrak{B}(\R^m\times\R^m)$ and $(\tilde{q}, \tilde{p}) = \Phi(q, p)$.
\end{definition}
\begin{proposition}\label{lagrangian-remarks:prop:involutive-monte-carlo-detailed-balance}
  The Markov chain transition kernel of involutive Monte Carlo satisfies detailed balance with respect to the distribution whose density is $\pi(q, p)\propto \exp(-H(q, p))$.
\end{proposition}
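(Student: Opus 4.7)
My plan is to verify detailed balance directly from the definition by splitting the transition kernel into its acceptance and rejection components. The rejection piece contributes a measure supported on the current state $(q,p)$; weighted by $\pi$, this part is manifestly symmetric under exchange of $(q,p)$ and $(\tilde q,\tilde p)$, so it contributes the same mass to both sides of the detailed balance identity. All the real work lies in the acceptance component.

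For the acceptance component I would compare the two fluxes
\begin{align}
  F(A \to B) &= \int_{A} \pi(z)\, \alpha(z, \Phi(z), J(z))\, \mathbf{1}\{\Phi(z)\in B\}\, dz, \\
  F(B \to A) &= \int_{B} \pi(w)\, \alpha(w, \Phi(w), J(w))\, \mathbf{1}\{\Phi(w)\in A\}\, dw,
\end{align}
where $z=(q,p)$, $w=(\tilde q,\tilde p)$, and $J(\cdot) = \lvert\det D\Phi(\cdot)\rvert$. In $F(A \to B)$ I would perform the change of variables $w = \Phi(z)$. Because $\Phi$ is a smooth involution, $z = \Phi(w)$ and $dz = J(w)\, dw$, while $\mathbf{1}\{z \in A\}$ becomes $\mathbf{1}\{w \in \Phi(A)\}$. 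A second consequence of the involution property, obtained by differentiating $\Phi \circ \Phi = \mathrm{Id}$ and taking determinants, is that $J(\Phi(w)) = 1/J(w)$, which is exactly what appears inside the acceptance probability once its first two arguments are swapped.

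After substitution, the integrand of $F(A\to B)$ becomes $\pi(\Phi(w))\, \alpha(\Phi(w), w, 1/J(w))\, J(w)\, \mathbf{1}_{\Phi(A)\cap B}(w)$. Applying the elementary identity $c \cdot \min\{1, 1/r\} = \min\{c, c/r\}$ with $c = \pi(\Phi(w))\, J(w)$ and $r = \pi(\Phi(w))\, J(w)/\pi(w)$ collapses this to $\min\{\pi(w),\, \pi(\Phi(w))\, J(w)\}\, \mathbf{1}_{\Phi(A)\cap B}(w)$. Applying the same identity in the other direction to $F(B \to A)$ produces the same symmetric integrand on the same domain $\Phi(A)\cap B$ (using $\Phi^{-1}(A)=\Phi(A)$). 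Hence the two fluxes coincide, and combined with the trivial rejection contribution this yields detailed balance.

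The main obstacle, such as it is, is bookkeeping: ensuring that the change-of-variables Jacobian, the involution-induced identity $J(\Phi(w)) = 1/J(w)$, and the $\min\{1, r\}$ versus $\min\{1, 1/r\}$ manipulations all compose consistently, and that the measurability hypotheses in the definition of a transition kernel are inherited from the smoothness of $\Phi$. No deeper ideas are required beyond those standard manipulations familiar from the Metropolis--Hastings correctness argument.
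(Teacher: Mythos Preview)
Your proposal is correct and follows essentially the same approach as the paper's proof: both split the kernel into acceptance and rejection pieces, dispatch the rejection part by symmetry, and for the acceptance part perform the change of variables $w=\Phi(z)$, invoke the involution identity $J(\Phi(w))J(w)=1$, and manipulate $\min\{1,r\}$ into the symmetric form $\min\{\pi(w),\pi(\Phi(w))J(w)\}$ to match the two fluxes.
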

A proof is provided in \cref{lagrangian-remarks:app:involutive-monte-carlo}. Central to the construction of involutions of interest to us is the momentum flip operator, defined as follows.
\begin{definition}\label{lagrangian-remarks:def:momentum-flip}
  The {\it momentum flip operator} is the map $\mathbf{F}(q, p) = (q, -p)$.
\end{definition}
The fact that numerical integrators can be combined with the momentum flip operator in order to produce involutions is covered in \cref{lagrangian-remarks:app:involutive-monte-carlo} in the case of the Lagrangian leapfrog; other integrators are handled similarly.
We provide pseudo-code implementing a single step of involutive Monte Carlo in \cref{lagrangian-remarks:alg:hmc} with involutions provided by the generalized leapfrog or Lagrangian leapfrog integrator. Given an initial point $(q_0, p_0)$ in phase space drawn from the distribution $\pi(q, p)$, the sequence of states $((q_1,p_1), (q_2, p_2), \ldots)$ computed by \cref{lagrangian-remarks:alg:hmc} are guaranteed to have $\pi(q, p)$ as their marginal distributions. Under the additional conditions that the HMC Markov chain is irreducible and aperiodic, HMC also produces an ergodic chain. 

The generalized leapfrog integrator is volume preserving. However, the integrator of Lagrangian dynamics is not. The required change-of-volume can be deduced as follows. First, observe that the update \cref{lagrangian-remarks:eq:lagrangian-position-update} is immediately volume-preserving since it is merely the translation of the position variable by a quantity. The change of volume incurred in \cref{lagrangian-remarks:eq:lagrangian-velocity-i} has a Jacobian determinant given by,
\begin{align}
  \abs{\mathrm{det}\paren{\frac{\partial (q, \breve{v})}{\partial (q, v)}}} = \abs{\frac{\mathrm{det}(\mathrm{Id}_m - \Omega(\epsilon, q, \breve{v}))}{\mathrm{det}(\mathrm{Id}_m + \Omega(\epsilon, q, v))}}.
\end{align}
The update in \cref{lagrangian-remarks:eq:lagrangian-velocity-ii} incurs a similar change of volume. Thus, when employing \cref{lagrangian-remarks:alg:lagrangian-leapfrog}, the Jacobian determinant of the transformation $(q, p)\mapsto (\tilde{q}, \tilde{p})$ is,
\begin{align}
  \begin{split}
    \label{lagrangian-remarks:eq:lagrangian-jacobian-determinant} &\abs{\mathrm{det}\paren{\frac{\partial (\tilde{q}, \tilde{p})}{\partial (q, p)}}} = \mathrm{det}(\mathbf{G}^{-1}(q))\mathrm{det}(\mathbf{G}(\tilde{q})) ~\times \\
    &\qquad \abs{\frac{\mathrm{det}(\mathrm{Id}_m - \Omega(\epsilon, \tilde{q}, \tilde{v})) \mathrm{det}(\mathrm{Id}_m - \Omega(\epsilon, q, \breve{v}))}{\mathrm{det}(\mathrm{Id}_m + \Omega(\epsilon, \tilde{q}, \breve{v})) \mathrm{det}(\mathrm{Id}_m + \Omega(\epsilon, q, v))}}.
  \end{split}
\end{align}
\begin{definition}\label{lagrangian-remarks:def:rmhmc}
  Let $\Phi_\epsilon$ be the generalized leapfrog integrator with step-size $\epsilon\in\R$ (\cref{lagrangian-remarks:def:generalized-leapfrog}). Let $k\in\mathbb{N}$ be a number of integration steps. The {\it Riemannian manifold Hamiltonian Monte Carlo (RMHMC) Markov chain} is an instance involutive Monte Carlo (\cref{lagrangian-remarks:def:involutive-monte-carlo}) with involution $\mathbf{F} \circ \Phi_{\epsilon}^k$.
\end{definition}
\begin{definition}\label{lagrangian-remarks:def:lmc}
  Let $\Phi_\epsilon$ be the Lagrangian leapfrog integrator with step-size $\epsilon\in\R$ (\cref{lagrangian-remarks:def:lagrangian-leapfrog}). Let $k\in\mathbb{N}$ be a number of integration steps. The {\it Lagrangian Monte Carlo (LMC) Markov chain} is an instance involutive Monte Carlo (\cref{lagrangian-remarks:def:involutive-monte-carlo}) with involution $\mathbf{F} \circ \Phi_{\epsilon}^k$.
\end{definition}

\section{Proofs Concerning Mechanics Systems}\label{lagrangian-remarks:app:proofs-concerning-mechanics}

We give a proof of \cref{lagrangian-remarks:thm:hamiltonian-mechanics}.
\begin{proof}
We first show that the Hamiltonian energy is conserved. 
\begin{align}
    \frac{\mathrm{d}}{\mathrm{d}t} H(q_t, p_t) &= \nabla_q H(q_t, p_t) \cdot v_t + \nabla_p H(q_t, p_t) \cdot \dot{p}_t \\
    &= \nabla_q H(q_t, p_t) \cdot \nabla_p H(q_t, p_t) - \nabla_p H(q_t, p_t) \cdot \nabla_q H(q_t, p_t) \\
    &= 0.
\end{align}
Next we show that Hamiltonian mechanics conserve volume in $(q, p)$-space. This is equivalent to the vector field $(v_t, \dot{p}_t)$ having zero divergence, which we now verify.
\begin{align}
    \mathrm{div}(v_t, \dot{p}_t) &= \nabla_q \cdot v_t + \nabla_p \cdot \dot{p}_t \\
    &= \nabla_q \cdot \nabla_p H(q_t, p_t) - \nabla_p \cdot \nabla_q H(q_t, p_t) \\
    &= \sum_{i=1}^m \frac{\partial^2 H}{\partial q_i \partial p_i} H(q_t, p_t) - \sum_{i=1}^m \frac{\partial^2 H}{\partial p_i \partial q_i} H(q_t, p_t) \\
    &= 0,
\end{align}
by symmetry of partial derivatives. Finally we show that under the conditions that $\nabla_q H(q, -p) = \nabla_q H(q, p)$ and $\nabla_p H(q, -p) = -\nabla_p H(q, p)$ that the equations of motion are reversible under negation of the momentum variable. To see this, fix $\tau \in \R_+$ and consider $(q_t, p_t)$ satisfying \cref{lagrangian-remarks:eq:hamiltonian-position,lagrangian-remarks:eq:hamiltonian-momentum} for $t\in [0, \tau]$. Let $\tilde{p}_t = -p_{\tau-t}$ and $\tilde{q}_t = q_{\tau-t}$. We find that $\tilde{q}_t$ and $\tilde{p}_t$ obey the following equations of motion:
\begin{align}
    \frac{\mathrm{d}}{\mathrm{d}t} \tilde{q}_t &= -v_{\tau-t} \\
    &= -\nabla_p H(q_{\tau-t}, p_{\tau-t}) \\
    &= \nabla_p H(\tilde{q}_t, -p_{\tau-t}) \\
    &= \nabla_p H(\tilde{q}_t, \tilde{p}_{t}) \\
    \frac{\mathrm{d}}{\mathrm{d}t} \tilde{p}_t &= \dot{p}_{\tau-t} \\
    &= -\nabla_q H(q_{\tau-t}, p_{\tau-t}) \\
    &= -\nabla_q H(q_{\tau-t}, -p_{\tau-t}) \\
    &= -\nabla_q H(\tilde{q}_t, \tilde{p}_t).
\end{align}
Thus we see that $(\tilde{q}_t, \tilde{p}_t)$ are also solutions to Hamilton's equations of motion and satisfy $\tilde{q}_\tau = q_0$ and $\tilde{p}_\tau = -p_0$, demonstrating reversibility.
\end{proof}
\clearpage
\section{Proofs Concerning the Numerical Order of the Lagrangian Integrators}\label{lagrangian-remarks:app:proofs-concerning-numerical-order}

The claim of first order accuracy was derived in \citet{lan2015}; we have included a proof for completeness.
\begin{proof}[Proof of \cref{lagrangian-remarks:lem:second-order-local-error}]
Let $q_t$ and $v_t$ be solutions to the initial value problem,
\begin{align}
    \dot{q}_t &= v_t \\
    \dot{v}_t &= -\Omega(2, q_t, v_t) v_t - \mathbf{G}^{-1}(q_t) \nabla U(q_t).
\end{align}
Given the initial value $q_0 = q$ and $v_0 = v$, we can expand the solution in a Taylor series about $t=0$.
\begin{align}
    q_t &= q_0 + t \dot{q}_0 + \mathcal{O}(t^2) \\
    &= q + t v + \mathcal{O}(t^2) \\
    v_t &= v_0 - t \paren{\Omega(2, q_0, v_0) v_0 - \mathbf{G}^{-1}(q_0) \nabla U(q_0)} + \mathcal{O}(t^2) \\
    &= v - \Omega(2t, q, v) v - t \mathbf{G}^{-1}(q) \nabla U(q) + \mathcal{O}(t^2).
\end{align}
Letting $t=\epsilon$ gives the following approximations,
\begin{align}
    q_{\epsilon} &= q + \epsilon v + \mathcal{O}(\epsilon^2) \\
    v_{\epsilon} &= v - \Omega(2\epsilon, q, v) v - \epsilon \mathbf{G}^{-1}(q) \nabla U(q) + \mathcal{O}(\epsilon^2)
\end{align}
Now we expand the steps of the explicit numerical integrator. We begin with the first update to velocity.
\begin{align}
\breve{v} &= \paren{\mathrm{Id}_m + \Omega(\epsilon, q, v)}^{-1} \paren{v - \frac{\epsilon}{2} \mathbf{G}^{-1}(q)\nabla U(q)} \\
&= \paren{\mathrm{Id}_m - \Omega(\epsilon, q, v) + \mathcal{O}(\epsilon^2)}\paren{v - \frac{\epsilon}{2} \mathbf{G}^{-1}(q)\nabla U(q)} \\
&= v - \Omega(\epsilon, q, v) v - \frac{\epsilon}{2} \mathbf{G}^{-1}(q) \nabla U(q) + \mathcal{O}(\epsilon^2)
\end{align}
Now we expand the update to position.
\begin{align}
\tilde{q} &= q + \epsilon \breve{v} \\
&= q + \epsilon v + \mathcal{O}(\epsilon^2)
\end{align}
We conclude by expanding the second update to velocity in terms of the initial conditions.
\begin{align}
\tilde{v} &= \paren{\mathrm{Id}_m + \Omega(\epsilon, \tilde{q}, \breve{v})}^{-1} \paren{\breve{v} - \frac{\epsilon}{2} \mathbf{G}^{-1}(\tilde{q})\nabla U(\tilde{q})} \\
&= \breve{v} - \Omega(\epsilon, \tilde{q}, \breve{v}) \breve{v} - \frac{\epsilon}{2} \mathbf{G}^{-1}(\tilde{q}) \nabla U(\tilde{q}) + \mathcal{O}(\epsilon^2) \\
&= \breve{v} - \Omega(\epsilon, q, v) v - \frac{\epsilon}{2} \mathbf{G}^{-1}(q) \nabla U(q) + \mathcal{O}(\epsilon^2) \\
&= v - \Omega(2\epsilon, q, v) v - \epsilon \mathbf{G}^{-1}(q) \nabla U(q) + \mathcal{O}(\epsilon^2).
\end{align}
Therefore, a single step of the numerical integrator with step-size $\epsilon$ agrees with the analytical solution to the initial value problem (from the same initial condition) to at least first order in $\epsilon$.
\end{proof}

\begin{proof}[Proof of \Cref{lagrangian-remarks:lem:inverted-lagrangian-properties}]
To demonstrate that the inverted integrator has at least first-order error, we expand the steps of the integrator as follows. First,
\begin{align}
    \breve{q} &= q + \frac{\epsilon}{2} ~v \\
    \tilde{v} &= \paren{\mathrm{Id}_m + \Omega(2\epsilon, \breve{q}, v)}^{-1} \paren{v - \epsilon \mathbf{G}^{-1}(\breve{q})\nabla U(\breve{q})} \\
    &= \paren{\mathrm{Id}_m + \Omega(2\epsilon, \breve{q}, v)} \paren{v - \epsilon \mathbf{G}^{-1}(\breve{q})\nabla U(\breve{q})} + \mathcal{O}(\epsilon^2) \\
    &= v - \Omega(2\epsilon, q, v) v - \epsilon \mathbf{G}^{-1}(q) \nabla U(q) + \mathcal{O}(\epsilon^2) \\
    \tilde{q} &= \breve{q} + \frac{\epsilon}{2} \tilde{v} \\
    &= q + \frac{\epsilon}{2} ~v + \frac{\epsilon}{2} ~v + \mathcal{O}(\epsilon^2) \\
    &= q + \epsilon ~v.
\end{align}
This verifies that the inverted Lagrangian leapfrog has at least first order accuracy. 

In order to show that the inverted leapfrog integrator is symmetric, we proceed as follows. Recall that the three steps of the inverted Lagrangian leapfrog are
\begin{align}
    \breve{q} &= q + \frac{\epsilon}{2} ~v \\
    \tilde{v} &= \paren{\mathrm{Id}_m + \Omega(2\epsilon, \breve{q}, v)}^{-1} \paren{v - \epsilon \mathbf{G}^{-1}(\breve{q})\nabla U(\breve{q})} \\
    \tilde{q} &= \breve{q} + \frac{\epsilon}{2} ~\tilde{v}.
\end{align}
Therefore, we consider integrating from initial position $(\tilde{q},\tilde{v})$ with a negated step-size $-\epsilon$ as follows:
\begin{align}
    \breve{q}' &= \tilde{q} - \frac{\epsilon}{2} ~\tilde{v} \\
    &= \breve{q}
\end{align}
For the velocity we have,
\begin{align}
    & \tilde{v}' = \paren{\mathrm{Id}_m - \Omega(2\epsilon, \breve{q}, \tilde{v})}^{-1} \paren{\tilde{v} + \epsilon \mathbf{G}^{-1}(\breve{q})\nabla U(\breve{q})} \\
    \implies& \paren{\mathrm{Id}_m - \Omega(2\epsilon, \breve{q}, \tilde{v})} \tilde{v}' = \tilde{v} + \epsilon \mathbf{G}^{-1}(\breve{q})\nabla U(\breve{q}) \\
    \implies& \tilde{v} + \Omega(2\epsilon, \breve{q}, \tilde{v}')\tilde{v} = \tilde{v}' - \epsilon \mathbf{G}^{-1}(\breve{q})\nabla U(\breve{q}) \\
    \implies& \tilde{v} = \paren{\mathrm{Id}_m + \Omega(2\epsilon, \breve{q}, \tilde{v}'}^{-1} \paren{\tilde{v}' - \epsilon \mathbf{G}^{-1}(\breve{q})\nabla U(\breve{q})} \\
    \implies \tilde{v}' = v
\end{align}
Finally, the last update to the position is,
\begin{align}
    \tilde{q}' &= \breve{q} - \frac{\epsilon}{2} v \\
    &= q.
\end{align}
Hence we see that the inverted Lagrangian leapfrog is also self-adjoint. As noted in the main text, the combination of self-adjointness and at least first-order accuracy immediately imply second-order accuracy.
\end{proof}

\begin{proof}[Proof of \cref{lagrangian-remarks:lem:self-adjoint}]
Self-adjointness of a numerical integrator follows immediately from the condition $\Phi_{-\epsilon}\circ \Phi_{\epsilon} = \mathrm{Id}$. Therefore, to demonstrate that a numerical method is self-adjoint it suffices to establish this condition. Consider the first update to the velocity:
\begin{align}
    &\breve{v} = \paren{\mathrm{Id}_m + \Omega(\epsilon, q, v)}^{-1}\paren{v - \frac{\epsilon}{2} \mathbf{G}(q)^{-1}\nabla U(q)} \\
    \iff& \breve{v} + \Omega(\epsilon, q, v) \breve{v} = v - \frac{\epsilon}{2} \mathbf{G}(q)^{-1} \nabla U(q) \\
    \iff & \breve{v} + \frac{\epsilon}{2} \mathbf{G}(q)^{-1} \nabla U(q) = v - \Omega(\epsilon, q, \breve{v}) v \\
    \iff &  \breve{v} + \frac{\epsilon}{2} \mathbf{G}(q)^{-1} \nabla U(q) = \paren{\mathrm{Id}_m - \Omega(\epsilon, q, \breve{v})} v \\
    \iff & v = \paren{\mathrm{Id}_m - \Omega(\epsilon, q, \breve{v})}^{-1}\paren{\breve{v} + \frac{\epsilon}{2} \mathbf{G}^{-1}(q)\nabla U(q)}.
\end{align}
An identical series of computations reveals,
\begin{align}
    \breve{v} = \paren{\mathrm{Id}_m - \Omega(\epsilon, \tilde{q}, \tilde{v})}^{-1}\paren{\tilde{v} + \frac{\epsilon}{2} \mathbf{G}^{-1}(\tilde{q})\nabla U(\tilde{q})}.
\end{align}
Hence, applying $\Phi_{-\epsilon}$ to $(\tilde{q}, \tilde{v})$ yields the following series of updates,
\begin{align}
    \label{lagrangian-remarks:eq:negated-i} \breve{v}' &= \paren{\mathrm{Id}_m - \Omega(\epsilon, \tilde{q}, \tilde{v})}^{-1}\paren{\tilde{v} + \frac{\epsilon}{2} \mathbf{G}^{-1}(\tilde{q})\nabla U(\tilde{q})} \\
    &= \breve{v} \\
    q' &= \tilde{q} - \epsilon \breve{v}' \\
    &= \tilde{q} - \epsilon \breve{v} \\
    &= q \\
    \label{lagrangian-remarks:eq:negated-ii} v' &= \paren{\mathrm{Id}_m - \Omega(\epsilon, q, \breve{v})}^{-1}\paren{\breve{v} + \frac{\epsilon}{2} \mathbf{G}^{-1}(q)\nabla U(q)} \\
    &= v.
\end{align}
Hence we return to the initial condition $(q, v)$. This verifies that the explicit integrator employed in LMC is self-adjoint.
\end{proof}

\begin{proof}[Proof of \Cref{lagrangian-remarks:prop:composition-order}]
Since $\hat{\Phi}_\epsilon$ is $p$-th order accurate for $\Phi_\epsilon$ we have, by Taylor series expansion, that,
\begin{align}
    \label{eq:2021-12-19-taylor-i} \hat{\Phi}_0 &= \Phi_0 \\
    \label{eq:2021-12-19-taylor-ii} \frac{\mathrm{d}^k}{\mathrm{d}\epsilon^k} \hat{\Phi}_\epsilon(z_0) \bigg|_{\epsilon=0} &= \frac{\mathrm{d}^k}{\mathrm{d}\epsilon^k} \Phi_\epsilon(z_0) \bigg|_{\epsilon=0},
\end{align}
for $k=1,\ldots, p$. Let $f : \R\to\R^m$, then the Taylor series of expansion of $\Xi\circ f(\epsilon)$ is,
\begin{align}
    \label{eq:2021-12-19-taylor-composition} \Xi \circ f(\epsilon) &= \Xi\circ f(0) + \sum_{k=1}^p \epsilon^k g^{[k]}\paren{f(0), \frac{\mathrm{d}}{\mathrm{d}\epsilon} f(0), \ldots, \frac{\mathrm{d}^k}{\mathrm{d}\epsilon^k} f(0)} + \mathcal{O}(\epsilon^{p+1}),
\end{align}
where $g^{[k]}$ are functions determining the Taylor series coefficients which depend on the derivatives of $f$. For instance,
\begin{align}
    g^{[1]}\paren{f(0), \frac{\mathrm{d}}{\mathrm{d}\epsilon} f(0)} &= \nabla \Xi(f(0)) \cdot \frac{\mathrm{d}}{\mathrm{d}\epsilon} f(0) \\
    g^{[2]}\paren{f(0), \frac{\mathrm{d}}{\mathrm{d}\epsilon} f(0), \frac{\mathrm{d}^2}{\mathrm{d}\epsilon^2} f(0)} &= \nabla^2 \Xi(f(0))\paren{\frac{\mathrm{d}}{\mathrm{d}\epsilon} f(0), \frac{\mathrm{d}}{\mathrm{d}\epsilon} f(0)} + \nabla\Xi(f(0)) \cdot \frac{\mathrm{d}^2}{\mathrm{d}\epsilon^2} f(0).
\end{align}
By the equality of the Taylor series expansion coefficients in \cref{eq:2021-12-19-taylor-i,eq:2021-12-19-taylor-ii} it follows from \cref{eq:2021-12-19-taylor-composition} that,
\begin{align}
    \Xi\circ\Phi_\epsilon(z_0) - \Xi\circ\hat{\Phi}_\epsilon(z_0) = \mathcal{O}(\epsilon^{p+1}).
\end{align}
This proves that $\Xi\circ\hat{\Phi}_\epsilon$ is also $p$-th order accurate for $\Xi\circ\Phi_\epsilon$.
\end{proof}
\clearpage
\section{Propagator Matrices for the Leapfrog and Inverted Leapfrog}\label{lagrangian-remarks:app:propagator-matrices}

\begin{definition}\label{lagrangian-remarks:def:standard-leapfrog}
  The {\it generalized leapfrog integrator} for the Hamiltonian equations of motion in \cref{lagrangian-remarks:eq:hamiltonian-position,lagrangian-remarks:eq:hamiltonian-momentum} with Hamiltonian $H(q, p) = U(q) + p^\top p / 2$ is a map $(q, p)\mapsto (\tilde{q}, \tilde{p})$ defined by,
  \begin{align}
  \breve{p} &= p - \frac{\epsilon}{2} \nabla U(q) \\
  \tilde{q} &= q + \epsilon ~\breve{p} \\
  \tilde{p} &= \breve{p} - \frac{\epsilon}{2} \nabla U(\tilde{q}).
  \end{align}
\end{definition}
\begin{definition}\label{lagrangian-remarks:def:inverted-leapfrog}
  The {\it generalized leapfrog integrator} for the Hamiltonian equations of motion in \cref{lagrangian-remarks:eq:hamiltonian-position,lagrangian-remarks:eq:hamiltonian-momentum} with Hamiltonian $H(q, p) = U(q) + p^\top p / 2$ is a map $(q, p)\mapsto (\tilde{q}, \tilde{p})$ defined by,
  \begin{align}
  \breve{q} &= q + \frac{\epsilon}{2} p \\
  \tilde{p} &= p - \epsilon \nabla U(q) \\
  \tilde{q} &= \breve{q} + \frac{\epsilon}{2} \tilde{p}.
  \end{align}
\end{definition}

For Hamiltonians of the form $H(q, p) = \omega^2 q^2 / 2 + p^2 / 2$, the action of the leapfrog and inverted leapfrog integrators are linear. This means that there are matrices, called ``propagator matrices,'' which, when acting on the vector $(q, p)\in\R^2$, produce the same position in phase space as the integrators themselves. Computing integer matrix powers of these matrices can then produce the multi-step output of the integrators. In the case of the leapfrog integrator, the propagator matrix is \citep{leimkuhler_reich_2005},
\begin{align}
  \begin{pmatrix} \tilde{q} \\ \tilde{p} \end{pmatrix} = \underbrace{\begin{pmatrix} 1 - \frac{\epsilon^2\omega^2}{2} & \epsilon \\ \epsilon\omega^2\paren{1 - \frac{\epsilon^2\omega^2}{4}} & 1 - \frac{\epsilon^2\omega^2}{2} \end{pmatrix}}_{\mathbf{R}'} \begin{pmatrix} q \\ p \end{pmatrix}.
\end{align}
The propagator matrix for the inverted leapfrog is,
\begin{align}
  \begin{pmatrix} \tilde{q} \\ \tilde{p} \end{pmatrix} = \underbrace{\begin{pmatrix} 1 - \frac{\epsilon^2\omega^2}{2} & \epsilon\paren{1 - \frac{\epsilon^2\omega^2}{4}} \\ -\epsilon\omega^2 & 1 - \frac{\epsilon^2\omega^2}{2} \end{pmatrix}}_{\mathbf{R}''} \begin{pmatrix} q \\ p \end{pmatrix}.
\end{align}
At stationarity,
\begin{align}
  \begin{pmatrix} q \\ p \end{pmatrix} \sim \mathrm{Normal}\paren{\begin{pmatrix} 0 \\ 0 \end{pmatrix}, \begin{pmatrix} 1 / \omega^2 & 0 \\ 0 & 1 \end{pmatrix}}.
\end{align}
Therefore, for an integrator with propagator matrix $\mathbf{R}$, the distribution of the $k$-step transition is,
\begin{align}
  \mathrm{Normal}\paren{\begin{pmatrix} 0 \\ 0 \end{pmatrix}, \mathbf{R}^k\begin{pmatrix} 1 / \omega^2 & 0 \\ 0 & 1 \end{pmatrix} (\mathbf{R}^k)^\top},
\end{align}

\clearpage
\section{Involutive Monte Carlo}\label{lagrangian-remarks:app:involutive-monte-carlo}

\begin{proof}[Proof of \Cref{lagrangian-remarks:prop:involutive-monte-carlo-detailed-balance}]
The detailed balance condition states that for any Borel sets $(A, B), (C, D) \in \mathfrak{B}(\R^m\times\R^m)$ we have,
\begin{align}
    \int_{(A, B)} K((q, p), (C, D)) ~\pi(q, p)~\mathrm{d}q~\mathrm{d}p =\int_{(C, D)} K((q, p), (A, B)) ~\pi(q, p)~\mathrm{d}q~\mathrm{d}p.
\end{align}
Let $\pi(q, p)$ be the probability density on $\R^m\times\R^m$ that is proportional to $\exp(-H(q, p))$. Using the fact that $\Phi$ is an involution, it suffices to verify
\begin{align}
    &\int_{(A, B)} \alpha((q, p), \Phi(q, p), \abs{\mathrm{det}\paren{\nabla \Phi(x)}}) \mathbf{1}\set{(\tilde{q},\tilde{p})\in (C, D)} ~\pi(q, p)~\mathrm{d}q~\mathrm{d}p \\
    =& \int_{(A, B)} \min\set{1, \frac{\pi(\Phi(q, p))}{\pi(q, p)} \abs{\mathrm{det}\paren{\nabla \Phi(x)}}} \mathbf{1}\set{\Phi(q, p)\in (C, D)} ~\pi(q, p)~\mathrm{d}q~\mathrm{d}p \\
    =& \int_{(A, B)} \min\set{\pi(q, p), \pi(\Phi(q, p)) \abs{\mathrm{det}\paren{\nabla \Phi(x)}}} \mathbf{1}\set{\Phi(q, p)\in (C, D)} ~\mathrm{d}q~\mathrm{d}p \\
    =& \int_{(A, B)} \min\set{\frac{\pi(q, p)}{\pi(\Phi(q, p)) \abs{\mathrm{det}\paren{\nabla \Phi(x)}}}, 1} \mathbf{1}\set{\Phi(q, p)\in (C, D)} \pi(\Phi(q, p)) \abs{\mathrm{det}\paren{\nabla \Phi(x)}}~\mathrm{d}q~\mathrm{d}p \\
    =& \int_{(A, B)} \min\set{\frac{\pi(q, p)}{\pi(\Phi(q, p))} \abs{\mathrm{det}\paren{\nabla \Phi(\Phi(x))}}, 1} \mathbf{1}\set{\Phi(q, p)\in (C, D)} \pi(\Phi(q, p)) \abs{\mathrm{det}\paren{\nabla \Phi(x)}}~\mathrm{d}q~\mathrm{d}p \\
    =&  \int_{\Phi(A, B)} \min\set{1, \frac{\pi(\Phi(\tilde{q}, \tilde{p}))}{\pi(\tilde{q}, \tilde{p})} \abs{\mathrm{det}\paren{\nabla \Phi(\tilde{q}, \tilde{p})}}} \mathbf{1}\set{(\tilde{q},\tilde{p})\in (C, D)} ~\pi(\tilde{q}, \tilde{p}) ~\mathrm{d}\tilde{q}~\mathrm{d}\tilde{p} \\
    =& \int_{(C, D)} \alpha((q, p), \Phi(q, p),  \abs{\mathrm{det}\paren{\nabla \Phi(q, p)}}) \mathbf{1}\set{(q, p)\in \Phi(A, B)} ~\pi(q, p)~\mathrm{d}q~\mathrm{d}p \\
    =& \int_{(C, D)} \alpha((q, p), \Phi(q, p),  \abs{\mathrm{det}\paren{\nabla \Phi(q, p)}}) \mathbf{1}\set{\Phi(q, p)\in (A, B)} ~\pi(q, p)~\mathrm{d}q~\mathrm{d}p
\end{align}
Moreover, integrating over the rejection components of the transition kernel already have symmetry in $(A, B)$ and $(C, D)$:
\begin{align}
    &\int_{(A, B)} (1- \alpha((q, p), (\tilde{q}, \tilde{p}), \abs{\mathrm{det}\paren{\nabla \Phi(x)}})) \mathbf{1}\set{(q, p)\in (C, D)} ~\pi(q, p)~\mathrm{d}q~\mathrm{d}p \\
    =& \int_{(C, D)} (1- \alpha((q, p), (\tilde{q}, \tilde{p}), \abs{\mathrm{det}\paren{\nabla \Phi(x)}})) \mathbf{1}\set{(q, p)\in (A, B)} ~\pi(q, p)~\mathrm{d}q~\mathrm{d}p.
\end{align}
Detailed balance follows as a consequence.
\end{proof}

\begin{lemma}
Let $\Phi_\epsilon : \R^m\times\R^m\to\R^m\times\R^m$ be the Lagrangian leapfrog integrator with step-size $\epsilon\in \R$. Then $\mathbf{F}\circ \Phi_\epsilon$ is an involution.
\end{lemma}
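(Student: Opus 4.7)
The plan is to show $(\mathbf{F} \circ \Phi_\epsilon)^2 = \mathrm{Id}$ by combining two ingredients: the self-adjointness of the Lagrangian leapfrog established in \cref{lagrangian-remarks:lem:self-adjoint} and a reversibility identity stating that conjugation of the leapfrog by the flip operator produces the leapfrog with a negated step-size.

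First I would set up convenient notation. Let $\psi_\epsilon$ denote the Lagrangian leapfrog viewed as a map $(q, v) \mapsto (\tilde q, \tilde v)$ on $(q,v)$-space, and let $T(q, p) = (q, \mathbf{G}^{-1}(q) p)$ denote the inverse Legendre transform, so that the full map on $(q,p)$-space factors as $\Phi_\epsilon = T^{-1} \circ \psi_\epsilon \circ T$. Writing $\mathbf{F}_v(q, v) = (q, -v)$ for the velocity flip, one checks immediately that $T \circ \mathbf{F} = \mathbf{F}_v \circ T$ (because $\mathbf{G}^{-1}(q)(-p) = -\mathbf{G}^{-1}(q) p$), and equivalently $\mathbf{F} \circ T^{-1} = T^{-1} \circ \mathbf{F}_v$.

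The heart of the proof is the reversibility identity $\mathbf{F}_v \circ \psi_\epsilon \circ \mathbf{F}_v = \psi_{-\epsilon}$. I would verify this by tracing through the three substeps of \cref{lagrangian-remarks:def:lagrangian-leapfrog}, exploiting the fact that $\Omega(\epsilon, q, v)$ is linear in both $\epsilon$ and $v$, so $\Omega(\epsilon, q, -v) = -\Omega(\epsilon, q, v) = \Omega(-\epsilon, q, v)$. Starting from $(q, -v)$ and applying $\psi_\epsilon$, the first velocity half-step produces the negation of the corresponding intermediate velocity that $\psi_{-\epsilon}$ would produce on $(q, v)$; the position update then agrees exactly because the two sign flips cancel; and the final velocity half-step likewise yields the negation of the terminal velocity that $\psi_{-\epsilon}$ would produce. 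The outer $\mathbf{F}_v$ restores the overall sign, matching $\psi_{-\epsilon}(q, v)$ exactly.

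With these pieces in hand, the conclusion is a short chain of equalities:
\begin{align}
(\mathbf{F} \circ \Phi_\epsilon) \circ (\mathbf{F} \circ \Phi_\epsilon)
&= T^{-1} \circ \mathbf{F}_v \circ \psi_\epsilon \circ \mathbf{F}_v \circ \psi_\epsilon \circ T \\
&= T^{-1} \circ \psi_{-\epsilon} \circ \psi_\epsilon \circ T \\
&= T^{-1} \circ T = \mathrm{Id},
\end{align}
where the first line uses the intertwining relations $\mathbf{F} \circ T^{-1} = T^{-1} \circ \mathbf{F}_v$ and $T \circ \mathbf{F} = \mathbf{F}_v \circ T$, the second invokes the reversibility identity, and the third applies the self-adjointness $\psi_{-\epsilon} \circ \psi_\epsilon = \mathrm{Id}$ from \cref{lagrangian-remarks:lem:self-adjoint}. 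The main obstacle will be the careful sign-bookkeeping in the reversibility step, since one must track simultaneous interactions between the sign of $v$, the sign of $\epsilon$ inside $\Omega$, and the affine inverse $[\mathrm{Id}_m \pm \Omega]^{-1}$; however, the algebraic manipulations in the existing proof of \cref{lagrangian-remarks:lem:self-adjoint} provide a direct template that can be adapted almost verbatim.
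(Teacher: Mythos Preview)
Your proposal is correct and rests on the same underlying algebra as the paper's proof---namely that $\Omega(\epsilon,q,-v)=-\Omega(\epsilon,q,v)$ together with the identities already derived in the proof of \cref{lagrangian-remarks:lem:self-adjoint}---but it is organized rather differently. The paper works entirely in $(q,v)$-space and simply applies $\Phi_\epsilon$ to $(\tilde q,-\tilde v)$ step by step, invoking the intermediate equalities \cref{lagrangian-remarks:eq:negated-i,lagrangian-remarks:eq:negated-ii} from the self-adjointness proof to land back at $(q,-v)$; it never isolates a separate ``reversibility identity'' and does not introduce the Legendre transform at all. Your version instead factors the argument into three clean pieces: an intertwining relation between $\mathbf{F}$ and the Legendre transform, the reversibility identity $\mathbf{F}_v\circ\psi_\epsilon\circ\mathbf{F}_v=\psi_{-\epsilon}$, and the self-adjointness $\psi_{-\epsilon}\circ\psi_\epsilon=\mathrm{Id}$. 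This modular decomposition is arguably more transparent and makes the logical dependence on \cref{lagrangian-remarks:lem:self-adjoint} explicit, whereas the paper's direct computation is shorter but hides that structure inside cross-references to earlier equations. The explicit treatment of the Legendre-transform layer is an addition on your part; the paper's lemma and proof tacitly identify $\Phi_\epsilon$ with the $(q,v)$-map of \cref{lagrangian-remarks:def:lagrangian-leapfrog} and treat $\mathbf{F}$ as the velocity flip, so your extra bookkeeping is not strictly required but does no harm.
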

\begin{proof}
Let $(\tilde{q},\tilde{v}) = \Phi_\epsilon(q, v)$. Now we apply the Lagrangian leapfrog to $(\tilde{q},-\tilde{v})$.
\begin{align}
    \breve{v}'' &= (\mathrm{Id}_m + \Omega(\epsilon, \tilde{q}, -\tilde{v}))^{-1} \paren{-\tilde{v} - \frac{\epsilon}{2} \mathbf{G}^{-1}(\tilde{q}) \nabla U(\tilde{q})} \\
    &= -(\mathrm{Id}_m - \Omega(\epsilon, \tilde{q}, \tilde{v}))^{-1} \paren{\tilde{v} + \frac{\epsilon}{2} \mathbf{G}^{-1}(\tilde{q}) \nabla U(\tilde{q})} \\
    &= -\breve{v}
\end{align}
from \cref{lagrangian-remarks:eq:negated-i}. Moreover,
\begin{align}
    \tilde{q}'' &= \tilde{q} + \epsilon \breve{v}'' \\
    &= \tilde{q} - \epsilon \breve{v} \\
    &= q.
\end{align}
Finally,
\begin{align}
    \tilde{v}'' &= (\mathrm{Id}_m + \Omega(\epsilon, q, -\breve{v}))^{-1} \paren{-\breve{v} - \frac{\epsilon}{2} \mathbf{G}^{-1}(q) \nabla U(q)} \\
    &= -(\mathrm{Id}_m - \Omega(\epsilon, q, \breve{v}))^{-1} \paren{\breve{v} + \frac{\epsilon}{2} \mathbf{G}^{-1}(q) \nabla U(q)} \\
    &= -v
\end{align}
from \cref{lagrangian-remarks:eq:negated-ii}.
\end{proof}

\bibliography{thebib}

\begin{thebibliography}{22}
\providecommand{\natexlab}[1]{#1}
\providecommand{\url}[1]{\texttt{#1}}
\expandafter\ifx\csname urlstyle\endcsname\relax
  \providecommand{\doi}[1]{doi: #1}\else
  \providecommand{\doi}{doi: \begingroup \urlstyle{rm}\Url}\fi

\bibitem[Amari(2016)]{10.5555/3019383}
S.-i. Amari.
\newblock \emph{Information Geometry and Its Applications}.
\newblock Springer Publishing Company, Incorporated, 1st edition, 2016.
\newblock ISBN 4431559779.

\bibitem[Betancourt(2017)]{betancourt2017conceptual}
M.~Betancourt.
\newblock A conceptual introduction to hamiltonian monte carlo, 2017.
\newblock URL \url{http://arxiv.org/abs/1701.02434}.
\newblock cite arxiv:1701.02434Comment: 60 pages, 42 figures.

\bibitem[Betancourt et~al.(2014)Betancourt, Byrne, Livingstone, and
  Girolami]{geometric-foundations}
M.~Betancourt, S.~Byrne, S.~Livingstone, and M.~Girolami.
\newblock The geometric foundations of hamiltonian monte carlo.
\newblock \emph{Stat. Sci.}, 23, 10 2014.
\newblock \doi{10.3150/16-BEJ810}.

\bibitem[Bou-Rabee and Sanz-Serna(2018)]{bou-rabee_sanz-serna_2018}
N.~Bou-Rabee and J.~M. Sanz-Serna.
\newblock Geometric integrators and the hamiltonian monte carlo method.
\newblock \emph{Acta Numerica}, 27:\penalty0 113–206, 2018.

\bibitem[Brofos and Lederman(2021{\natexlab{a}})]{pmlr-v139-brofos21a}
J.~Brofos and R.~R. Lederman.
\newblock Evaluating the implicit midpoint integrator for riemannian
  hamiltonian monte carlo.
\newblock In M.~Meila and T.~Zhang, editors, \emph{Proceedings of the 38th
  International Conference on Machine Learning}, volume 139 of
  \emph{Proceedings of Machine Learning Research}, pages 1072--1081. PMLR,
  18--24 Jul 2021{\natexlab{a}}.
\newblock URL \url{https://proceedings.mlr.press/v139/brofos21a.html}.

\bibitem[Brofos and Lederman(2021{\natexlab{b}})]{brofos2021numerical}
J.~A. Brofos and R.~R. Lederman.
\newblock On numerical considerations for riemannian manifold hamiltonian monte
  carlo, 2021{\natexlab{b}}.

\bibitem[Calin and Chang(2004)]{Calin2004GeometricMO}
O.~Calin and D.-C. Chang.
\newblock \emph{Geometric Mechanics on Riemannian Manifolds: Applications to
  Partial Differential Equations}.
\newblock Birkh\"{a}user, 2004.

\bibitem[Cobb et~al.(2019)Cobb, Baydin, Markham, and
  Roberts]{cobb2019introducing}
A.~D. Cobb, A.~G. Baydin, A.~Markham, and S.~J. Roberts.
\newblock Introducing an explicit symplectic integration scheme for riemannian
  manifold hamiltonian monte carlo, 2019.

\bibitem[Dahlquist and Bj{\"o}rck(2003)]{dahlquist2003numerical}
G.~Dahlquist and {\AA}.~Bj{\"o}rck.
\newblock \emph{Numerical Methods}.
\newblock Dover Books on Mathematics. Dover Publications, 2003.
\newblock ISBN 9780486428079.
\newblock URL \url{https://books.google.com/books?id=armfeHpJIwAC}.

\bibitem[Duane et~al.(1987)Duane, Kennedy, Pendleton, and Roweth]{Duane1987216}
S.~Duane, A.~D. Kennedy, B.~J. Pendleton, and D.~Roweth.
\newblock Hybrid monte carlo.
\newblock \emph{Physics Letters B}, 195\penalty0 (2):\penalty0 216 -- 222,
  1987.
\newblock ISSN 0370-2693.
\newblock URL
  \url{http://www.sciencedirect.com/science/article/B6TVN-46YSWPH-2XF/2/0f89cdc6cf214a2169b03df7414f3df4}.

\bibitem[Gelman and Pasarica(2007)]{esjd}
A.~Gelman and C.~Pasarica.
\newblock Adaptively scaling the metropolis algorithm using expected squared
  jumped distance.
\newblock \emph{Statistica Sinica}, 2007.
\newblock URL \url{http://dx.doi.org/10.2139/ssrn.1010403}.

\bibitem[Girolami and Calderhead(2011)]{rmhmc}
M.~Girolami and B.~Calderhead.
\newblock Riemann manifold langevin and hamiltonian monte carlo methods.
\newblock \emph{Journal of the Royal Statistical Society: Series B (Statistical
  Methodology)}, 73\penalty0 (2):\penalty0 123--214, 2011.
\newblock URL
  \url{https://rss.onlinelibrary.wiley.com/doi/abs/10.1111/j.1467-9868.2010.00765.x}.

\bibitem[Hairer et~al.(2006)Hairer, Lubich, and Wanner]{Hairer:1250576}
E.~Hairer, C.~Lubich, and G.~Wanner.
\newblock \emph{{Geometric Numerical Integration: Structure-Preserving
  Algorithms for Ordinary Differential Equations; 2nd ed.}}
\newblock Springer, Dordrecht, 2006.
\newblock URL \url{https://cds.cern.ch/record/1250576}.

\bibitem[Kumar et~al.(2019)Kumar, Carroll, Hartikainen, and Martin]{arviz_2019}
R.~Kumar, C.~Carroll, A.~Hartikainen, and O.~A. Martin.
\newblock {ArviZ} a unified library for exploratory analysis of {Bayesian}
  models in {Python}.
\newblock \emph{The Journal of Open Source Software}, 2019.
\newblock URL \url{http://joss.theoj.org/papers/10.21105/joss.01143}.

\bibitem[Lan et~al.(2015)Lan, Stathopoulos, Shahbaba, and Girolami]{lan2015}
S.~Lan, V.~Stathopoulos, B.~Shahbaba, and M.~Girolami.
\newblock Markov chain monte carlo from lagrangian dynamics.
\newblock \emph{Journal of Computational and Graphical Statistics}, 24\penalty0
  (2):\penalty0 357–378, Apr 2015.
\newblock ISSN 1537-2715.
\newblock URL \url{http://dx.doi.org/10.1080/10618600.2014.902764}.

\bibitem[Leimkuhler and Reich(2005)]{leimkuhler_reich_2005}
B.~Leimkuhler and S.~Reich.
\newblock \emph{Simulating Hamiltonian Dynamics}.
\newblock Cambridge Monographs on Applied and Computational Mathematics.
  Cambridge University Press, 2005.

\bibitem[Marsden and Ratiu(2010)]{mechanics-and-symmetry}
J.~E. Marsden and T.~S. Ratiu.
\newblock \emph{Introduction to Mechanics and Symmetry: A Basic Exposition of
  Classical Mechanical Systems}.
\newblock Springer Publishing Company, Incorporated, 2010.
\newblock ISBN 1441931430.

\bibitem[Marsden and West(2001)]{marsden_west_2001}
J.~E. Marsden and M.~West.
\newblock Discrete mechanics and variational integrators.
\newblock \emph{Acta Numerica}, 10:\penalty0 357–514, 2001.

\bibitem[Modi et~al.(2021)Modi, Barnett, and Carpenter]{modi2021delayed}
C.~Modi, A.~Barnett, and B.~Carpenter.
\newblock Delayed rejection hamiltonian monte carlo for sampling multiscale
  distributions, 2021.

\bibitem[Neklyudov et~al.(2020)Neklyudov, Welling, Egorov, and
  Vetrov]{DBLP:conf/icml/NeklyudovWEV20}
K.~Neklyudov, M.~Welling, E.~Egorov, and D.~P. Vetrov.
\newblock Involutive {MCMC:} a unifying framework.
\newblock In \emph{Proceedings of the 37th International Conference on Machine
  Learning, {ICML} 2020, 13-18 July 2020, Virtual Event}, volume 119 of
  \emph{Proceedings of Machine Learning Research}, pages 7273--7282. {PMLR},
  2020.
\newblock URL \url{http://proceedings.mlr.press/v119/neklyudov20a.html}.

\bibitem[Pourzanjani and Petzold(2019)]{pourzanjani2019implicit}
A.~A. Pourzanjani and L.~R. Petzold.
\newblock Implicit hamiltonian monte carlo for sampling multiscale
  distributions, 2019.

\bibitem[Zhang and Sutton(2014)]{NIPS2014_a87ff679}
Y.~Zhang and C.~Sutton.
\newblock Semi-separable hamiltonian monte carlo for inference in bayesian
  hierarchical models.
\newblock In Z.~Ghahramani, M.~Welling, C.~Cortes, N.~Lawrence, and K.~Q.
  Weinberger, editors, \emph{Advances in Neural Information Processing
  Systems}, volume~27. Curran Associates, Inc., 2014.
\newblock URL
  \url{https://proceedings.neurips.cc/paper/2014/file/a87ff679a2f3e71d9181a67b7542122c-Paper.pdf}.

\end{thebibliography}

\end{document}